\documentclass[a4paper,12pt]{article}
\usepackage{geometry}
\usepackage{cuted}
\usepackage{lipsum}
\usepackage{graphicx}
\usepackage{authblk}
\setlength{\parindent}{2em}
\setlength{\parskip}{0.8em}

\usepackage{chessfss}
\usepackage{xkeyval}
\usepackage{xifthen}
\usepackage{pgfcore}
\usepgfmodule{shapes}
\usepackage{pst-node}
\usepackage{chessboard}
\usepackage{float}
\usepackage[english]{babel}
\usepackage{csquotes}
\usepackage{tikz}
\usepackage[vlined,ruled]{algorithm2e}
\usepackage{xcolor}
\usepackage{algorithm2e}
\usepackage[title]{appendix}
\usepackage{tikz-cd}
\usepackage{ amssymb }

\tikzset{
  symbol/.style={
    draw=none,
    every to/.append style={
      edge node={node [sloped, allow upside down, auto=false]{$#1$}}}
  }
}

\usepackage{graphicx}
\usepackage{lineno}
\usepackage{amsmath,amsfonts,amsthm}
\usepackage{amsthm}
\usepackage{chngcntr}

\newtheorem{theorem}{Theorem}

\usepackage{tcolorbox}
\tcbuselibrary{theorems}

\newtcbtheorem{mydefinition}{Definition}%
{colback=green!8,colframe=green!35!black,fonttitle=\bfseries}{th}

\newtcbtheorem{mytheo}{Theorem}%
{colback=blue!8,colframe=blue!35!black,fonttitle=\bfseries}{th}

\usepackage{csquotes} 
\usepackage[style=numeric, sorting=none, backend=biber, urldate=long]{biblatex}
\addbibresource{main.bib}

\usepackage{booktabs}												
\usepackage{fix-cm}													
\usepackage{todonotes}
\usepackage[section]{placeins}
\let\Oldsection\section
\renewcommand{\section}{\FloatBarrier\Oldsection}

\let\Oldsubsection\subsection
\renewcommand{\subsection}{\FloatBarrier\Oldsubsection}

\let\Oldsubsubsection\subsubsection
\renewcommand{\subsubsection}{\FloatBarrier\Oldsubsubsection}

\usepackage{mathtools}

\theoremstyle{definition}
\newtheorem{definition}{Definition}[section]

\theoremstyle{lemma}
\newtheorem{lemma}{Lemma}[section]


\usepackage{color}   
\usepackage{hyperref}
\hypersetup{
    colorlinks=true, 
    linktoc=all,     
    linkcolor=blue,  
    filecolor=magenta,      
    urlcolor=blue,
}
 \geometry{
 a4paper,
 total={170mm,257mm},
 left=20mm,
 top=20mm,
 }
 
\title {\textbf{Semantic Embeddings in Semilattices}}
\author[1,2]{Fernando Martin-Maroto}
\author[1,2]{Gonzalo G. de Polavieja}
\affil[1]{\small Champalimaud Research, Champalimaud Foundation, Lisbon, Portugal} \affil[2]{\small Algebraic AI, Madrid, Spain }

\begin{document}
\maketitle
\begin{abstract}

To represent anything from mathematical concepts to real-world objects, we have to resort to an encoding. Encodings, such as written language, usually assume a decoder that understands a rich shared code. A semantic embedding is a form of encoding that assumes a decoder with no knowledge, or little knowledge, beyond the basic rules of a mathematical formalism such as an algebra. Here we give a formal definition of a semantic embedding in a semilattice which can be used to resolve machine learning and classic computer science problems. Specifically, a semantic embedding of a problem is here an encoding of the problem as sentences in an algebraic theory that extends the theory of semilattices. We use the recently introduced formalism of finite atomized semilattices \cite{AtomizedSL} to study the properties of the embeddings and their finite models. For a problem embedded in a semilattice we show that every solution has a model atomized by an irreducible subset of the non-redundant atoms of the freest  model of the embedding. We give examples of semantic embeddings that can be used to find solutions for the N-Queen's completion, the Sudoku, and the Hamiltonian Path problems.

\end{abstract}

\section{Introduction} 

An embedding is a structure-preserving map that allows to identify a mathematical structure A within another B, for example a group into another group. For such mapping to be possible, A has to be a subgroup of B.  This quite restrictive notion can be extended to the more general concept of semantic embedding \cite{Burris, Marker}. A semantic embedding of A into B can be defined even when A and B are very different mathematical objects with different properties and operations. For example, a group can be semantically embedded within a graph.

Although the concept of semantic embedding is well-known to mathematical logicians and theoretical computer scientists, it is rarely used to resolve practical problems. Semantic embeddings have been used mostly to study undecidability \cite{Burris}. 

Here we take this notion to the extreme when A is as complex as real world problems and B is as simple as a semilattice, a minimalist algebra with just one binary idempotent operator. Despite the simplicity of semilattices, we will see that it is possible to find semantic embeddings for real-world problems that can be used in practice. For example, A can be solving a sudoku or classifying data in a machine learning problem. 

The word \emph{embedding} is often used as a synonym of \emph{encoding}. Encodings, such as written language, usually assume a decoder that understands a rich shared code. A semantic embedding is a form of encoding that assumes a decoder with no knowledge, or little knowledge, beyond the basic rules of a mathematical formalism such an algebra. For example, encoding a novel A into a book B requires understanding written words and such understating of words is part of the interpretation mechanism. The interpretation mechanism for the novel is somehow fixed and rich. On the other hand, in a semantic embedding rather than relying on a fixed and complex interpretation we try to discover an interpretation that allows us to find A into B. Searching for a simple and accurate interpretation of A into B is a natural problem that can, potentially, teach us things about A and B.  

To study the properties of semantic embeddings in semilattices, we will extensively apply the formalism of finite atomized semilattices \cite{AtomizedSL}. In this formalism, semilattice structures (we call them here semilattice models) are represented as sets of atoms, where atoms are closely related to the subdirectly irreducible components of the semilattice. There are substantial advantages in using atomized semilattice models. For example, atomized models have a linearity property; if $M$ and $N$ are two models of a set $R$ of atomic sentences then the model $M + N$ spawned by the atoms of $M$ and the atoms of $N$ is also a model of $R$. We will see how the atomized description of models naturally provides a framework that is ideal to study semantic embeddings. 

Using atomized semilattices we show that, for a problem semantically embedded in a semilattice every solution of the problem has a model that is irreducible (all its atoms are necessary) and it is atomized by a subset of the non-redundant atoms of the freest model of the embedding. This result reduces the search space for solution models and shows that the mere statement of a problem with a semantic embedding is a step towards finding solutions. Notice that a semantic embedding is only a problem statement. To construct a semantic embedding there is no need to know how to resolve the problem. 

Semantic embeddings and atomized descriptions of semilattice models are central to \emph{Algebraic Machine Learning} \cite{AMLPresentation}, a machine learning method that can be used for pattern recognition as well as for combining learning from data and from formal descriptions of problems. Constructing a semantic embedding is the first step in this method. The embeddings given here can be used for Algebraic Machine Learning with no need for further modifications. With these embeddings the sparse crossing algorithm described in \cite{AMLPresentation} can easily find full boards of queens, solve Sudokus or create new ones. Sparse crossing can also find Hamiltonian paths better than a naive backtracking method \footnote{Sparse Crossing \cite{AMLPresentation} can find Hamiltonian paths with the embeddings presented here. Specifically, randomly generated graphs with a probability of an edge between nodes of $5\%$ are hard for backtracking because they have enough edges to have a large search space but not enough to easily find a path that does not require crossing the same node twice. For graphs of about 50 vertexes naive backtracking starts becoming impractical as it requires exponentially more time as the number of nodes grow. Sparse Crossing could find Hamiltonian paths in a few attempts for graphs with more than 100 vertexes, for which backtracking could take an astronomically large amount of time. 
Although the use of sparse crossing is barely discussed in this document we thought is worth mentioning it to motivate the reader and to draw attention on the practical relevance of semilattice embeddings.}. 

We have organized this paper as follows. In section \ref{definions} we give a definition for semantic embeddings into semilattices. In section \ref{classesOfEmbeddings} we introduce four different kinds of embeddings. We also show how to use atomized semilattices to characterize the different kinds of embeddings. In sections \ref{trivialExample} to \ref{sec:hamiltonian_path} we provide examples of semantic embeddings, starting with a very simple example in section \ref{trivialExample}. Section \ref{prerequisites} is a summary of the notions from Atomized Semilattices that we use in this document. Atomized Semilattices are further developed here with the introduction of grounded models in section \ref{groundedModels}. Section \ref{sec:theorems} is devoted to the theorems, together with section \ref{sec:context_constants} that focuses on theorems related to context constants.

\section{Definitions} \label{definions} 

An algebraic semantic embedding of a problem $P$ is an encoding of $P$ in the form of an algebraic theory. A semantic embedding relies on the constants and operations of an algebra to describe $P$ with a set of sentences. In this paper we focus on algebraic embeddings over finite semilattices in its simplest form; an algebraic theory consisting of a set $R$ of positive and negative atomic sentences.

A semantic embedding maps each solution of the problem $P$ to a model of $R$ and, conversely, permits to interpret unambiguously some models of $R$ as solutions of $P$. For each solution $S$ of $P$ there should be at least one model $M_S \models R$ that maps to S. Such mapping is ``the interpretation'' of P in the algebra.

We use the word ``duple'' to refer to an ordered pair $(a, b)$ of elements. We also refer to positive atomic sentence $a \leq b$ as ``positive duple'' and to the negative atomic sentence $a \not\leq b$ as ``negative duple''.
We say that the duple is ``over $C$'' if the left and right-hand sides of the duple, $a$ and $b$ respectively, are either constants or idempotent summation of constants from a set $C$ of constants, i.e. they are elements of the freest semilattice generated by the set $C$.

\bigskip
\begin{definition} \label{embeddingSet} 
An embedding set $R = R^{+} \cup R^{-}$ is a set $R^{+}$ of positive duples and a set $R^{-}$ of negative duples over a set of constants $C$. We refer to the constants in $C$ as the embedding constants.
\end{definition}

\begin{definition} \label{interpretation} 
An interpretation sentence for a problem $P$ into an algebra is a sentence ${\bf{\varphi}}$ that uses the functions and relations of the algebra and mentions constants in a set $Q$ such that a model $M$ is a solution of $P$ if and only if $M \models {\bf{\varphi}}$. We refer to the constants in $Q$ as the interpretation constants. 
\end{definition}

In the case of semilattices we can use to construct ${\bf{\varphi}}$ the idempotent operator $\odot$ and the order relation $<$ that can be defined from the idempotent operator. The set $Q$ of interpretation constants is a subset of the embedding constants $C$.

For example, if $P$ is the problem of resolving a Sudoku we can choose an interpretation sentence ${\bf{\varphi}}$ that says that there is a number written at each cell and the numbers are not repeated in the rows, columns and subgrids. It is possible to define an interpretation sentence ${\bf{\varphi}}$ without knowing how to find a solution or even without knowing if a solution exists. 

The interpretation sentence is assumed to be a first order sentence with or without quantifiers. We will be able to build embeddings for complex interpretation sentences with quantifiers as long as we can find a first order sentence without quantifiers, the ``scope sentence'', that is equivalent to ${\bf{\varphi}}$ in the presence of the embedding set $R$. The scope sentence has no quantifiers but it may have negations, conjunctions and disjunctions. We will see that for problems like the N-Queens Completion, the Sudoku or the Hamiltonian Path such scope sentence exists.

\bigskip
\begin{definition} \label{scope} 
A scope sentence $\Xi$ for a problem $P$ and an embedding set $R$ is a first order sentence without quantifiers and with constants in $Q$ that follows from the interpretation sentence ${\bf{\varphi}}$ and such that if a model of $R$ satisfies $\Xi$ then it also satisfies ${\bf{\varphi}}$, i.e. \[
\Xi \wedge R \vdash {\bf{\varphi}} \,\,\, and \,\,\, {\bf{\varphi}} \vdash \Xi .
\]
We use the entail symbol, $\vdash$, to state that every semilattice model of the sentence in the left is a model of the sentence in the right.
For short, we say that a sentence is over $Q$ when it only mentions constants in the set of constants $Q$. 
\end{definition}
\bigskip

It is clear that the scope and interpretation sentences are equivalent when R holds: \[
R \wedge  {\bf{\varphi}} \Leftrightarrow \Xi \wedge R .
\]
where we are using the double implication symbol to say that the sentences at each side of the symbol have the same semilattice models.   

In the Sudoku example, the embedding set $R$ may say that there is at most one number on each cell and that the rows columns and subgrids cannot have repeated numbers, which we can succinctly say using a conjunction of atomic sentences. The scope sentence may say that there is at least one number written at each cell. An scope sentence for this problem can be written as a conjunction, for each cell, of a disjunction that says that either 0, or 1 or 2 ... or 9 are written in the cell. There are models of $R$ that do not assign a number to all of the cells and, hence, do not correspond to solutions of the problem. The scope sentence is needed to distinguish the models of $R$ that are solutions of the problem $P$ from the models of $R$ that are not. 

In the examples that we analyze in this document, the scope sentences are conjunctions of a few disjunctions of (positive and/or negative) atomic sentences. We refer to these atomic sentences as the subclauses of $\Xi$. We cannot directly make the scope sentence part of the embedding set because that will require a very large number of atomic sentences. We will learn how to deal with this problem and build semantic embeddings that can be used in practice.

\bigskip

A problem $P$ may have one or many solutions. We can write:
\[
\bf{\varphi} \Leftrightarrow \vee_s {\bf{\varphi}}_s,
\]
where every ${\bf{\varphi}}_s$ is a first order sentence that mentions constants in $Q$ and describes the solution $S$ of $P$. In most problems, ${\bf{\varphi}}$ would be known but not the sentences ${\bf{\varphi}}_s$. 

Consider the transformation of the scope sentence to disjunctive normal form $\Xi  = \vee_k \Xi_k$. The index $k$ may take a large number of values which makes impractical to directly make the scope sentence part of R. Each clause $\Xi_k$ is a conjunction of a subset of the subclauses mentioned in $\Xi$. Every subclause of $\Xi_k$ is either a subclause of $\Xi$, a negated subclause of $\Xi$. 

Since ${\bf{\varphi}} \vdash \Xi$ then for every solution $S$ we have ${\bf{\varphi}}_s \vdash \Xi$. In fact, for each solution $S$ there is at least one value of $k$ such that ${\bf{\varphi}}_s \vdash \Xi_k$. In case that there are multiple clauses $\Xi_k$ implied by ${\bf{\varphi}}_s$ we can, without loss of generality, do the conjunction of such clauses and define a single clause $\Xi_S$ such that:\[
\Xi_S \wedge R \vdash {\bf{\varphi}}_S \,\,\, and \,\,\, {\bf{\varphi}}_S \vdash \Xi_S,
\] so it follows: \[
R \wedge  {\bf{\varphi}}_S \Leftrightarrow \Xi_S \wedge R .
\]

Notice that some clauses $\Xi_k$, generally most of them, may not be realized by any solution.

\bigskip

To be specific about what makes two solutions of $P$ different we introduce the concept of a separator set.

\bigskip
\begin{definition} \label{separatorSet} 
Given a scope sentence $\Xi$ for a problem $P$ a separator set $\Gamma$ is a set of positive atomic sentences over the interpretation constants Q such that if either $\sigma$ or $\neg \sigma$ is an atomic subclause of $\Xi$ then $\sigma \in \Gamma$.
\end{definition}
\bigskip

Now we can identify a solution $S$ with a set of sentences as follows:

\bigskip
\begin{definition} \label{solution} 
A solution $S$ of a problem $P$ with scope sentence $\Xi$ and separator set $\Gamma$ is a subset $S \subseteq \Gamma$ such that the sentence $\gamma_S = \wedge \{\pi: \pi \in S\} \wedge \{\neg \nu: \nu \in \Gamma \cap \overline{S}\}$ implies $\Xi$ and $\gamma_S \wedge R$ has a model. Resolving $P$ amounts to finding such subsets of $\Gamma$.
\end{definition}
\bigskip
\begin{definition} \label{solutionModel}
A solution model for solution S is any model of $R \wedge \gamma_S$. 
\end{definition}
\bigskip

We use $\overline{S}$ for the complementary set of $S$, i.e. $\Gamma \cap \overline{S} = \Gamma - S$. Notice that we have defined the separator set $\Gamma$ as a superset of the set of atomic sentences mentioned in the scope sentence.  A separator set can be defined even when a problem has no need for a non trivial scope sentence. The separator set gives us the freedom to choose what constitutes a different solution and to separate solutions beyond what the scope sentence does. We cannot properly talk about solutions unless they are distinguished by the separator set. Without loss of generality we say that an embedding has solutions: \[
{\bf{\varphi}}_S := \gamma_S,
\]
and then ${\bf{\varphi}} \Leftrightarrow \vee_s \gamma_S$

We can now define our semantic embedding:

\bigskip
\begin{definition} \label{algebraicSemanticEmbedding} Let a problem $P$ be described by an interpretation sentence ${\bf{\varphi}}$ over $Q$. An algebraic semantic embedding, $(C, R, \Xi)$, of a problem $P$ with interpretation $(Q, {\bf{\varphi}}, \Gamma)$ is a set $R$ of positive and negative atomic sentences over a superset of constants $C \supseteq Q$ such that $\Xi \wedge R \vdash {\bf{\varphi}} \vdash \Xi$, where the scope sentence $\Xi$ is a first order sentence without quantifiers and over $Q$ and the separator set $\Gamma$ contains the atomic sentences mentioned by $\Xi$ as positive atomic sentences. 
\end{definition}
\bigskip

To build an embedding we usually need an extended set $C$ of constants, a set that contains $Q$. With the help of the extended set of constants it becomes possible to describe $P$ using a set $R$ of positive and negative atomic sentences. We left for the scope sentence everything that we cannot describe as a conjunction of positive and negative atomic sentences. We will see that there is a way to encode the scope sentence using atomic sentences and new constants. This encoding is weaker that the one provided but the embedding set $R$ but it suffices, in practice to find solutions for the problem at hand. 

From the text Finite Atomized Semilattices \cite{AtomizedSL} the reader should at least be familiar with the concepts of atomization, redundant and non-redundant atoms, and full crossing. Besides these concepts and methods, there are a handful of results required that are listed in section \ref{prerequisites} of this paper. The formalism of atomized semilattices is extended here with the introduction of ``grounded models'' in section \ref{groundedModels}. 

For a solution $S$ a solution model of particular interest is the freest one:

\bigskip
\begin{definition} \label{freestSolution}
The freest solution model for a solution $S$ of an embedding is the model $F_S \equiv F_{C}(R^{+} \cup S)$.
\end{definition}

To build the freest model $F_C(R^{+})$ of a set of positive atomic sentences $R^{+}$ we start from the freest model $F_C(\emptyset)$ over $C$ and use the full crossing algorithm to enforce each atomic sentence, one after another, in any order. The reader can find a proof for this in \cite{AtomizedSL}. 

We can use the full crossing method to build $F_S$ which is the freest model of the set of positive duples $R^{+} \cup S$. 

Theorem \ref{freestSolutionModel} shows that, $F_S$ satisfies the scope sentence and, hence, is a solution model.
Solution models are strictly less free than $F_S$ so they can be atomized with non-redundant atoms of $F_S$. A model $M_S$ is a solution model of the embedding for solution $S$ if it satisfies both, the positive atomic sentences, i.e. $M_S \models R^{+} \cup S$, and the negative atomic sentences $M_S  \models R^{-} \cup \{ \neg \nu : \nu \in \overline{S} \cap \Gamma \}$. As a consequence, $M_S \subseteq F_S$ where the inclusion says that the atoms of $M_S$ are all atoms of $F_S$, i.e. the atoms of $M_S$ are either non-redundant atoms of $F_S$ or unions of non-redundant atoms of $F_S$.   

Solution models have a linearity property; if $M_S$ and $N_S$ are solution models of $P$ for solution $S$ then the model \[
M_S + N_S
\] spawned by the atoms of $M_S$ and $N_S$ is also a solution model of $S$. Even more, if $M_S$ is a solution model of $S$ we can add any atom of $F_S$ to $M_S$ and still get a solution model of $S$.

\bigskip
\begin{definition} \label{irreducibleModel} 
Given a set of atomic sentences $R$ we say that a model $M$ is an irreducible model of $R$ if $M \models R$ and it is atomized by a set $A$ of non-redundant atoms of $M$ such that no subset of $A$ is a model of $R$. 
\end{definition}

Theorem \ref{minimalModelsTheroem} shows that an irreducible model has at most $\vert R^{-} \vert + 1$ atoms. Theorem \ref{positiveScopeTheroem} shows that we can find models for every solution of $P$ among the irreducible models of $R$. Even more, for tight embeddings (defined below) we can find models for every solution of $P$ in the set of irreducible models of $R$ atomized with non-redundant atoms of $F_C(R^{+})$. This is especially relevant as theorem \ref{solutionsAsSubsets} describes a practical mechanism to make every embedding tight.

\subsection{Concise, complete, tight and explicit embeddings} \label{classesOfEmbeddings} 

\begin{definition} \label{restriction} 
A restriction $M^{|Q}$ to $Q$ of an atomized semilattice model $M$ over $C$ is the subalgebra of $M$ spawned by the constants in the subset $Q \subset C$. 
\end{definition}

Assume $Q$ is a subset of $C$. Theorem \ref{restrictionLemma} proves that the restriction to $Q$ of an atomized semilattice with constants in $C$ can be calculated by restricting the upper constant segment of each atom $\phi$ to the constants in $Q$, i.e. replace each atom $\phi$ with upper constant segment $U^{c}(\phi) \subseteq C$ by other atom $\phi^{|Q}$ with upper constant segment $U^{c}(\phi^{|Q})  = U^{c}(\phi) \cap Q$. 

The restriction of a model $M_S$ to the subset of interpretation constants $Q$ still satisfies ${\bf{\varphi_S}}$. Therefore, the restriction to Q preserves the interpretation. However, since the duples that represent atomic sentences of $R$ use constants of $C$ that are not in $Q$, the duples of $R$ may not be defined in the restriction to $Q$. Therefore the restriction to $Q$ of a solution model $M$ may no longer be a model of an algebraic embedding but it is still interpretable as a solution of $P$.

\bigskip
\begin{definition} \label{concise} 
An algebraic embedding is concise if for every solution $S$ of problem $P$ and for any positive atomic sentence $r^{+}$ over the interpretation constants $Q$ we have $R \wedge {\bf{\varphi_S}} \vdash r^{+}$ if and only if ${\bf{\varphi_S}} \vdash r^{+}$.
\end{definition}
\bigskip

An embedding with embedding set $R$ is concise when it does not have implications irrelevant to the interpretation in the subalgebra spawned by interpretation constants $Q$.  For an embedding that is concise the formulas over $Q$ that are true in the freest model of a solution $S$ do not depend upon the embedding set $R$, they are determined by $S$ alone. An interesting result (see theorem \ref{weakEquivalenceTheorem}) is that any two embeddings, which could potentially be very different, of a problem $P$ with constants $C_1$ and $C_2$ that are concise and share the same interpretation $(Q, {\bf{\varphi}}, \Gamma)$ produce for each solution $S$ of $P$ a freest solution model of $S$ with the same non-redundant atoms restricted to $Q$. Also, the atoms in a model of an embedding that is not concise become, when restricted to $Q$, redundant with the atoms restricted to $Q$ of some model of any concise embedding.

\bigskip
\begin{definition} \label{completeEmbedding} 
An embedding set $R$ for a problem $P$ with solutions $S$ is complete if for each positive atomic sentence $r^{+}$ over the interpretation constants $Q$ such that $\forall S({\bf{\varphi_S}} \vdash r^{+})$ then $R \vdash r^{+}$.
\end{definition}

\begin{definition} \label{stronglyCompleteEmbedding} 
An embedding set $R$ for a problem $P$ with solutions $S$ is strongly complete if for each positive atomic sentence $r^{+}$ over the embedding constants $C$ such that $\forall S({\bf{\varphi_S}} \vdash r^{+})$ then $R \vdash r^{+}$.
\end{definition}
\bigskip

In Finite Atomized Semilattices\cite{AtomizedSL} we introduced the concept of redundant, non-redundant and external atoms. We say that a model $M$ has an atom $\phi$ or that $\phi$ is an atom of M, written $\phi \in M$, if $\phi$ is a non-redundant atom of $M$ or is redundant with $M$. We say that $\phi$ is not in $M$, written $\phi \not\in M$ if $\phi$ is external to $M$, i.e. if $\phi$ is not a non-redundant atom of $M$ neither is redundant with $M$.

\bigskip
\begin{definition} \label{residualComponent} 
An embedding of a problem $P$ with embedding set $R$, interpretation constants $Q$ and embedding constants $C$ has a residual atom $\alpha$ if $\alpha$ is a non-redundant atom of $F_C(R^{+})^{|Q}$ with upper segment different than $Q$ that is external to the restriction to $Q$ of every solution model of $P$, i.e. $\alpha \not\in M_S^{|Q}$ for every model $M_S \models R$ that models a solution of $P$. 
\end{definition}
\bigskip

Residual atoms are non-redundant atoms of $F_C(R^{+})$ that have consequences in the subalgebra spawned by $Q$ but do not play a role in any solution of $P$. Theorem \ref{residualAtomTheorem} proves that an embedding that has no residual atoms is complete; and conversely, an embedding that is complete and concise has no residual atoms.
Additionally, theorem \ref{equivalenceTheorem} shows that two concise and complete embeddings sharing the same interpretation $(Q, {\bf{\varphi}}, \Gamma)$ have the same atoms restricted to $Q$, i.e $F_{C1}(R_1^{+})^{|Q}= F_{C2}(R_2^{+})^{|Q}$. 

\bigskip
\begin{definition} \label{subsetAndTightSubmodels} 
A model $M$ is a subset model of $N$, written $M \subset N$ if every atom that is in model $M$ is in model $N$.  A model $M$ is a tight subset model of $N$, written $M \sqsubset N$ if the non-redundant atoms of $M$ are a subset of the non-redundant atoms of $N$. Subset $\subset$ and tight subset $\sqsubset$ are both transitive relations. 
\end{definition}

\begin{definition} \label{tightEmbedding} 
An embedding of a problem $P$ with embedding set $R$ and embedding constants $C$ is tight if for every solution $S$ of $P$ every non-redundant atom in the freest solution model $F_S =  F_{C}(R^{+} \cup S)$ is also a non-redundant atom in $F_C(R^{+})$. 
\end{definition}
\bigskip

A non-redundant atom of the freest solution of a model $F_S$ is always an atom of $F_{C}(R^{+})$ but it is not necessarily non-redundant in $F_{C}(R^{+})$, it can be a redundant atom of $F_{C}(R^{+})$. There may be many non-redundant atoms in a model but there are always many more redundant atoms. Every union of non-redundant atoms is a redundant atom, so we have, in the worse case, a number of redundant atoms in the order of 2 to the power of the number of non-redundant atoms. Tight embeddings have solution models spawn by atoms that are non-redundant atoms of $F_{C}(R^{+})$ so, when searching for non-redundant atoms of $F_S$ in the atoms of $F_{C}(R^{+})$ having a tight embedding severely reduces the search space.

\bigskip
\begin{definition} \label{grounding} 
Let $K \subset C$ and let $M$ be a model over $C$. The model $M^{{}^{\vee}K}$ (read $M$ grounded to $K$) is the model spawned by the subset of atoms $\phi$ of $M$ such that $U^{c}(\phi) \subseteq K$. 
\end{definition}

Model $M^{{}^{\vee}K}$ is as free or less free than $M$. Theorem \ref{groundingProperties} shows that $M^{{}^{\vee}K}$ is a well-defined model that is a tight subset of $M$, i.e. $M^{{}^{\vee}K} \sqsubset M$.

We introduce the concept of explicit embeddings that permits finding the freest model solutions by calculating a ground operation with respect to some subset $K$ of the constants.
A method that guarantees that our embedding is tight, i.e. $\forall S (F_S  \sqsubset F_C(R^{+}))$, is to build the embedding as an explicit embedding, something that can always be achieved following a simple procedure. 

\bigskip
\begin{definition} \label{explicitEmbedding} 
An embedding of a problem $P$ with interpretation constants $Q$ and embedding constants $C$ is explicit if for each solution $S$ of $P$ there is a subset $K_S$ of the constants $Q \subseteq K_S \subset C$ such that the freest solution for $S$ satisfies $F_S = F_{C}(R^{+})^{{}^{\vee}K_S} \oplus F_{C - K_S}( \emptyset)$.
\end{definition}
\bigskip

Theorem \ref{explicitIsTightTheorem} proves that an explicit embedding is tight, so a method that guarantees that our embedding is tight, i.e. $\forall S (F_S  \sqsubset F_C(R^{+}))$, is to build an explicit embedding, something we can always do following a recipe given in theorem \ref{solutionsAsSubsets}. 

Suppose we have an embedding set $R$ over a set of embedding constants $C$ for a problem $P$. Theorem \ref{solutionsAsSubsets} provides a mechanism to extend the embedding constants and the embedding set $R$ to obtain an explicit and, therefore, tight embedding of $P$. This extension mechanism can be used in practice. In addition, when the embedding is complete the extended embedding is also complete and when the embedding is concise the extended embedding is also concise.

\section{A trivial example}  \label{trivialExample}

Consider the following problem $P$: we have three elements $a$, $b$ and $c$ and either $a$ or $b$ is mapped to $c$. This problem has three solutions $S_1, S_2, S_3$: $a$ is mapped to $c$, $b$ is mapped to $c$, and both $a$ and $b$ are mapped to $c$. 

Consider the following interpretation sentence ${\bf{\varphi}} = (a \leq c) \vee (b \leq c)$ and the solution interpretation sentences
\[ 
{\bf{\varphi_1}} = (a \leq c) \wedge (b \not\leq c), 
\]\[ 
{\bf{\varphi_2}} = (a \not\leq c) \wedge (b \leq c),
\]\[ 
{\bf{\varphi_3}} = (a \leq c) \wedge (b \leq c),
\]
with interpretation constants $Q = \{a, b, c\}$, separator set $\Gamma = \{(a \leq c), (b \leq c)\}$ and an scope sentence equal to the interpretation sentence $\Xi = (a \leq c) \vee (b \leq c)$.

There is no atomic sentence that is a consequence of the interpretation sentence ${\bf{\varphi}}$ so all embeddings will be complete.

\underline{First embedding}:  $C =\{a, b, c\}$ and $R = \emptyset$. In this case $Q =C$. Any model over $C$ is a model of the empty embedding set, including models that satisfy the solution interpretation sentences ${\bf{\varphi_1}}, {\bf{\varphi_2}}$ and ${\bf{\varphi_3}}$. The embedding set $R = \emptyset$ is obviously concise. The model $F_C(\emptyset)$ is atomized by the three non-redundant atoms $\phi_a$,  $\phi_b$ and $\phi_c$ with upper segments $U^{c}(\phi_a) = \{a\}$, $U^{c}(\phi_b) = \{b\}$ and $U^{c}(\phi_c) = \{c\}$. The freest model and freest solution models are are:
\[
F_C(R^{+}) = [\phi_a, \phi_b, \phi_c]
\]\[
F_1 = [\phi_{ac}, \phi_{b}, \phi_{c}]
\]\[
F_2 = [\phi_{a}, \phi_{bc}, \phi_{c}]
\]\[
F_3 = [\phi_{ac}, \phi_{bc}, \phi_{c}]
\]
where $U^{c}(\phi_{ac}) = \{a, c\}$ and $U^{c}(\phi_{bc}) = \{b, c\}$. The embedding is neither explicit nor tight but it is complete and concise.

\underline{Second embedding}: $C =\{a, b, c, d, e\}$ and $R = (d \leq a) \wedge (d \leq c ) \wedge (e \leq b) \wedge (e \leq c)$.  The atomization of $F_C(R^{+})$ consists of five atoms with upper segments:
\[ 
U^{c}(\phi_{acd}) = \{a, c, d\}, \,\,\,\,\,  U^{c}(\phi_{bce}) = \{b, c, e\}
\]\[ 
U^{c}(\phi_a) = \{a\}, \,\,\,\,\, U^{c}(\phi_b) = \{b\}, \,\,\,\,\, U^{c}(\phi_c) = \{c\}.
\]
The freest model of the embedding set and freest solution models are are:
\[
F_C(R^{+}) =\square_{e \leq c} \square_{e \leq b} \square_{d \leq c} \square_{d \leq a}  [\phi_a, \phi_b, \phi_c, \phi_d,  \phi_e] = [\phi_a, \phi_b, \phi_c, \phi_{acd},  \phi_{bce}] 
\]\[
F_1 = F_C(R^{+} \cup (a \leq c)) =\square_{a \leq c}  [\phi_a, \phi_b, \phi_c, \phi_{acd},  \phi_{bce}] = [\phi_{ac}, \phi_b, \phi_c, \phi_{acd},  \phi_{bce}] 
\]\[
F_2 = F_C(R^{+} \cup (b \leq c)) =\square_{b \leq c}  [\phi_a, \phi_b, \phi_c, \phi_{acd},  \phi_{bce}] = [\phi_a, \phi_{bc}, \phi_c, \phi_{acd},  \phi_{bce}] 
\]\[
F_3 = F_C(R^{+} \cup (a \leq c) \cup (b \leq c)) =\square_{b \leq c} \square_{a \leq c} [\phi_a, \phi_b, \phi_c, \phi_{acd},  \phi_{bce}] = [\phi_{ac}, \phi_{bc}, \phi_c, \phi_{acd},  \phi_{bce}] 
\]
It is a good exercise to check that the three models satisfy $R \wedge {\bf{\varphi_1}}$, $R \wedge {\bf{\varphi_2}}$ and $R \wedge {\bf{\varphi_3}}$ respectively. There is no positive atomic sentence ${\bf{\xi}}^{+}$ over $Q = \{a, b, c\}$ satisfied by any of three models that is not implied by the interpretation sentences alone so the embedding is concise. The embedding is also complete but it is neither tight nor explicit.

Theorem \ref{restrictionLemma} promises that we can calculate a restriction model by calculating the restriction of each non-redundant atom:
\[
F_1^{|Q} = [\phi_{ac}, \phi_b, \phi_c, \phi_{acd},  \phi_{bce}]^{|Q} = [\phi_{ac}, \phi_b, \phi_c, \phi_{ac},  \phi_{bc}] = [\phi_{ac}, \phi_b, \phi_c]
\]\[
F_2^{|Q} = [\phi_a, \phi_{bc}, \phi_c, \phi_{acd},  \phi_{bce}]^{|Q}  =  [\phi_a, \phi_{bc}, \phi_c, \phi_{ac},  \phi_{bc}] = [\phi_a, \phi_{bc}, \phi_c] 
\]\[
F_3^{|Q} = [\phi_{ac}, \phi_{bc}, \phi_c, \phi_{acd},  \phi_{bce}]^{|Q} =  [\phi_{ac}, \phi_{bc}, \phi_c, \phi_{ac},  \phi_{bc}] =  [\phi_{ac}, \phi_{bc}, \phi_c] 
\]
Theorem \ref{conciseEmbeddingTheorem} says that an embedding is concise if and only if the freest solution model restricted to $Q=\{a, b, c\}$ is equal to $F_{Q}(S)$ for each solution $S$ of $P$. The reader can check that this is the case by calculating $F_{Q}(S)$ for the three solutions, in this case: $\square_{a \leq c} [\phi_a, \phi_b, \phi_c]$, $\square_{b \leq c} [\phi_a, \phi_b, \phi_c]$ and $\square_{a \leq c} \square_{b \leq c} [\phi_a, \phi_b, \phi_c]$. In addition, since the first embedding was also concise the restricted solution models are equal for both embeddings.

Finally the freest model restricted to $Q$ is:
\[
F_C(R^{+})^{|Q}  = [\phi_a, \phi_b, \phi_c, \phi_{acd},  \phi_{bce}]^{|Q} = [\phi_a, \phi_b, \phi_c, \phi_{ac},  \phi_{bc}] = [\phi_a, \phi_b, \phi_c] 
\]
Theorem \ref{equivalenceTheorem} tells us that two concise and complete embeddings with different embedding constants but sharing the same interpretation have the same atoms restricted to the interpretation constants $Q$, i.e $F_{C1}(R_1^{+})^{|Q}= F_{C2}(R_2^{+})^{|Q}$. For the first and second embeddings the freest model restricted to $Q$ is  $[\phi_a, \phi_b, \phi_c]$.

\underline{Third embedding}: The third embedding is very similar to the second embedding but this time is going to be a tight embedding. $C =\{a, b, c, d, e, m\}$ and $R = (d \leq a) \wedge (d \leq c ) \wedge (e \leq b) \wedge (e \leq c) \wedge (c \leq a \odot b \odot m)$. The atomization for the freest model is the result of $\square_{c \leq a \odot b \odot m} \square_{e \leq c} \square_{e \leq b} \square_{d \leq c} \square_{d \leq a}  [\phi_a, \phi_b, \phi_c, \phi_d,  \phi_e,  \phi_m]$ and consists of eight non-redundant atoms:
\[
F_C(R^{+}) = [\phi_a, \phi_b, \phi_m,  \phi_{ac}, \phi_{bc}, \phi_{cm}, \phi_{acd},  \phi_{bce}]. 
\]\
The freest solution models are:
\[
F_1 = \square_{a \leq c} [\phi_a, \phi_b, \phi_m,  \phi_{ac}, \phi_{bc}, \phi_{cm}, \phi_{acd},  \phi_{bce}] =  [\phi_b, \phi_m,  \phi_{ac}, \phi_{bc}, \phi_{cm}, \phi_{acd},  \phi_{bce}] 
\]\[
F_2 = \square_{b \leq c} [\phi_a, \phi_b, \phi_m,  \phi_{ac}, \phi_{bc}, \phi_{cm}, \phi_{acd},  \phi_{bce}] =  [\phi_a, \phi_m,  \phi_{ac}, \phi_{bc}, \phi_{cm}, \phi_{acd},  \phi_{bce}] 
\]\[
F_3 = \square_{b \leq c} \square_{a \leq c} [\phi_a, \phi_b, \phi_m,  \phi_{ac}, \phi_{bc}, \phi_{cm}, \phi_{acd},  \phi_{bce}] =  [\phi_m,  \phi_{ac}, \phi_{bc}, \phi_{cm}, \phi_{acd},  \phi_{bce}]. 
\]
It is quite straightforward to check that the embedding is tight:
\[
F_1 \sqsubset F_C(R^{+})  \,\,\,\,\,  F_2 \sqsubset F_C(R^{+})  \,\,\,\,\,  F_3 \sqsubset F_C(R^{+}), 
\]
so we have a complete, concise and tight embedding but not an explicit embedding.

The restrictions to $Q$ of the freest solution models are again the same to those obtained in the previous embeddings (as the three embeddings are concise).
\[
F_1^{|Q} = [\phi_b, \phi_m,  \phi_{ac}, \phi_{bc}, \phi_{cm}, \phi_{acd},  \phi_{bce}]^{|Q} = [\phi_{ac}, \phi_b, \phi_c]
\]\[
F_2^{|Q} =  [\phi_a, \phi_m,  \phi_{ac}, \phi_{bc}, \phi_{cm}, \phi_{acd},  \phi_{bce}]^{|Q}  =  [\phi_a, \phi_{bc}, \phi_c] 
\]\[
F_3^{|Q} =[\phi_m,  \phi_{ac}, \phi_{bc}, \phi_{cm}, \phi_{acd},  \phi_{bce}]^{|Q} =  [\phi_{ac}, \phi_{bc}, \phi_c] 
\] \[
F_C(R^{+})^{|Q}  = [\phi_a, \phi_b, \phi_m,  \phi_{ac}, \phi_{bc}, \phi_{cm}, \phi_{acd},  \phi_{bce}]^{|Q} = [\phi_a, \phi_b, \phi_c]. 
\]

\underline{Fourth embedding}: The fourth embedding is again very similar to the second and third embedding but this time is tight but not concise. $C =\{a, b, c, d, e\}$ and $R = (d \leq a) \wedge (d \leq c ) \wedge (e \leq b) \wedge (e \leq c) \wedge (c \leq a \odot b)$. The atomization for the freest model consists of six non-redundant atoms:
\[
F_C(R^{+}) = [\phi_a, \phi_b,  \phi_{ac}, \phi_{bc}, \phi_{acd},  \phi_{bce}]. 
\]
The freest solution models are:
\[
F_1 =  [\phi_b,  \phi_{ac}, \phi_{bc}, \phi_{acd},  \phi_{bce}] 
\]\[
F_2 = [\phi_a,  \phi_{ac}, \phi_{bc}, \phi_{acd},  \phi_{bce}] 
\]\[
F_3 =  [\phi_{ac}, \phi_{bc}, \phi_{acd},  \phi_{bce}]. 
\]
Again the embedding is tight:
\[
F_1 \sqsubset F_C(R^{+})  \,\,\,\,\,  F_2 \sqsubset F_C(R^{+})  \,\,\,\,\,  F_3 \sqsubset F_C(R^{+}), 
\]
but it is clearly not concise because it satisfies $c \leq a \odot b$, a duple over $Q$ that is not implied by the solutions of $P$ alone. The freest solution models restricted to $Q$ are:  
\[
F_1^{|Q}  =  [\phi_b,  \phi_{ac}, \phi_{bc}, \phi_{acd},  \phi_{bce}]^{|Q} =  [\phi_b,  \phi_{ac},  \phi_{bc}] 
\]\[
F_2^{|Q}  = [\phi_a,  \phi_{ac}, \phi_{bc}, \phi_{acd},  \phi_{bce}]^{|Q} =   [\phi_a,  \phi_{ac}, \phi_{bc}]
\]\[
F_3^{|Q}  =  [\phi_{ac}, \phi_{bc}, \phi_{acd},  \phi_{bce}]^{|Q} = [\phi_{ac}, \phi_{bc}]
\]\[
F_C(R^{+})^{|Q} = [\phi_a, \phi_b,  \phi_{ac}, \phi_{bc}, \phi_{acd},  \phi_{bce}]^{|Q} =  [\phi_a, \phi_b,  \phi_{ac}, \phi_{bc}].
\]
We can check that the freest solution models restricted to $Q$ are not equal to $F_{Q}(S)$. For example:
\[
\square_{a \leq c} [\phi_a, \phi_b, \phi_c] =  [\phi_{ac}, \phi_b, \phi_c]  \not=  F_1^{|Q}
\]

\underline{Fifth embedding}: This is a complete, concise, tight and explicit embedding. It is quite different from the others and uses embedding constants $C =\{a, b, c, g, h\}$ and embedding set $R = (a \leq c \odot g) \wedge (b \leq c \odot h)$. The atomization of the freest model $F_C(R^{+})$ consists of seven atoms with upper segments:
\[ 
U^{c}(\phi_{ac}) = \{a, c\}, \,\,\,\,\, U^{c}(\phi_{ag}) = \{a, g\},
\]\[ 
U^{c}(\phi_{bc}) = \{b, c\}, \,\,\,\,\, U^{c}(\phi_{bh}) = \{b, h\},
\]\[ 
U^{c}(\phi_{c}) = \{c\}, \,\,\,\,\, U^{c}(\phi_{g}) = \{g\}, \,\,\,\,\, U^{c}(\phi_{h}) = \{h\},
\]
that can be calculated as:
\[
F_C(R^{+}) =\square_{b \leq c \odot h} \square_{a \leq c \odot g}  [\phi_a, \phi_b, \phi_c, \phi_g, \phi_h] = [\phi_c, \phi_g, \phi_h, \phi_{ac}, \phi_{ag}, \phi_{bc}, \phi_{bh}].
\]
The freest solution models are:
\[
F_1 = [\phi_c, \phi_g, \phi_h, \phi_{ac}, \phi_{bc}, \phi_{bh}],
\]\[
F_2 =[\phi_c, \phi_g, \phi_h, \phi_{ac}, \phi_{ag}, \phi_{bc}],
\]\[
F_3 =[\phi_c, \phi_g, \phi_h, \phi_{ac}, \phi_{bc}],
\]
and it is clear that the embedding is tight:
\[
F_1 \sqsubset F_C(R^{+})  \,\,\,\,\,  F_2 \sqsubset F_C(R^{+})  \,\,\,\,\,  F_3 \sqsubset F_C(R^{+}), 
\]
and concise:
\[
F_1^{|Q} = [\phi_c, \phi_g, \phi_h, \phi_{ac}, \phi_{bc}, \phi_{bh}]^{|Q} =  [\phi_c, \phi_{ac}, \phi_{b}],
\]\[
F_2^{|Q}  =[\phi_c, \phi_g, \phi_h, \phi_{ac}, \phi_{ag}, \phi_{bc}]^{|Q} = [\phi_c, \phi_{a}, \phi_{bc}]
\]\[
F_3^{|Q}  =[\phi_c, \phi_g, \phi_h, \phi_{ac}, \phi_{bc}]^{|Q} = [\phi_c, \phi_{ac}, \phi_{bc}].
\]
It has the same restrictions of the first, second and third embeddings. 

For every solution there is a set of constants $K_S$ such that $Q \subseteq K_S \subset C$ that satisfy:
\[
F_C(R^{+})^{{}^{\vee} \{a, b, c, h\} } =  [\phi_c, \phi_g, \phi_h, \phi_{ac}, \phi_{ag}, \phi_{bc}, \phi_{bh}]^{{}^{\vee} \{a, b, c, h\} } =  [\phi_c, \phi_h, \phi_{ac}, \phi_{bc}, \phi_{bh}],
\]\[
F_C(R^{+})^{{}^{\vee} \{a, b, c, g\} } =  [\phi_c, \phi_g, \phi_h, \phi_{ac}, \phi_{ag}, \phi_{bc}, \phi_{bh}]^{{}^{\vee} \{a, b, c, g\} } =   [\phi_c, \phi_g, \phi_{ac}, \phi_{ag}, \phi_{bc}],
\]\[
F_C(R^{+})^{{}^{\vee} \{a, b, c\} } =  [\phi_c, \phi_g, \phi_h, \phi_{ac}, \phi_{ag}, \phi_{bc}, \phi_{bh}]^{{}^{\vee} \{a, b, c\} } =  [\phi_c, \phi_{ac}, \phi_{bc}],
\]
such that:
\[
F_1 = [\phi_c, \phi_g, \phi_h, \phi_{ac}, \phi_{bc}, \phi_{bh}] = F_C(R^{+})^{{}^{\vee} \{a, b, c, h\} } \cup \{\phi_g\}
\]\[
F_2 =[\phi_c, \phi_g, \phi_h, \phi_{ac}, \phi_{ag}, \phi_{bc}] = F_C(R^{+})^{{}^{\vee} \{a, b, c, g\} }  \cup \{\phi_h\}
\]\[
F_3 =[\phi_c, \phi_g, \phi_h, \phi_{ac}, \phi_{bc}] = F_C(R^{+})^{{}^{\vee} \{a, b, c\} } \cup \{\phi_g, \phi_h\}
\]
which proves the embedding is explicit. The fifth example corresponds to the construction in theorem \ref{solutionsAsSubsets} built over an empty embedding set. 

\bigskip
\bigskip

With two concise, complete and tight embeddings that share the same interpretation $(Q, {\bf{\varphi}}, \Gamma)$, for each non-redundant atom $\phi_1 \in \cup_S F_{1S}$ such that $\phi_1^{|Q}$ exists, theorem \ref{sameSolutionAtomsTheoremPartC} shows that either there is a solution $S$ for which $\phi_1^{|Q}$ is redundant in $F_{1S}^{|Q}$ or there is an atom $\phi_2 \in \cup_S F_{2S}$ with $\phi_1^{|Q} = \phi_2^{|Q}$. We can compare the third and fifth embeddings to see if this is true. With the third embedding:
\[
F_{\text{third}} = \cup_S F_{S} = F_C(R^{+}) =  [\phi_a, \phi_b, \phi_m,  \phi_{ac}, \phi_{bc}, \phi_{cm}, \phi_{acd},  \phi_{bce}]
\]
and with the fifth embedding:
\[
F_{\text{fifth}} = \cup_S F_{S} = F_C(R^{+}) =  [\phi_c, \phi_g, \phi_h, \phi_{ac}, \phi_{ag}, \phi_{bc}, \phi_{bh}].
\]
The sets obtained by calculating the restriction  to $Q$ of the non-redundant atoms are equal:
\[
F_{\text{third}} \xrightarrow{|Q}  \{ \phi_a, \phi_b, \phi_{ac}, \phi_{bc}, \phi_{c}, \phi_{ac},  \phi_{bc}  \} =  \{ \phi_a, \phi_b, \phi_{ac}, \phi_{bc}, \phi_{c} \} 
\]\[
F_{\text{fifth}} \xrightarrow{|Q}  \{ \phi_c, \phi_{ac}, \phi_{a}, \phi_{bc}, \phi_{b} \}, 
\]
Notice that here we are comparing sets and not models so redundant atoms matter and cannot be removed.

\section{Vertical Bars} 

Suppose an $n \times n$ grid of white and black pixels. The embedding consists of describing the properly formed grids, i.e. with pixels at every position and no empty or undefined positions, that have at least one black column. The problem $P$ consists of finding examples of grids with this property.

We are going to build and compare two different embeddings. The first embedding describes the grids in a formal manner, the second embedding provides examples of valid grids with a black bar. 

{\underline{Embedding 1:}} Consider the embedding with interpretation constants $Q =  \{ b_{ij}, w_{ij}, q_{ij}\}$ where constants $b_{ij}$ represent black pixels at position $(i, j)$, the constants $w_{ij}$ that represent white pixels and the constants $q_{ij}$ represent positions of the grid. The embedding constants $C = Q \cup \{ n_j \}$ where $n_j$ is a constant such that $n_j \leq w_{ij}$, i.e. it is lower or equal than every constant of a white pixel in column $j$, and so there is one $n_1,n_2,...,n_n$ per column. Consider the embedding set:
\[
R_{1A} =   (\odot_j n_j  \not\leq  \odot_{r,s}  q_{rs})  \wedge  \forall_{ij}  ((w_{ij} \odot b_{ij})  \not\leq \odot_{r,s}  q_{rs}) \wedge \forall_{ij}  (n_j \leq  w_{ij}), 
\]
and the scope sentence: \[ 
\Xi = \forall_{i,j} ((q_{ij} = w_{ij})  \vee  (q_{ij} = b_{ij})).
\]
Any model of $R_{1A} \wedge \Xi$ can be univocally mapped to a grid with a vertical black bar. This model is concise, in order to make it complete we have to add:
\[
R_{1B} =  \forall_{ij}  (q_{ij} \leq w_{ij} \odot b_{ij}),
\]
a set of sentences that are true in every model. To make the embedding also explicit and tight we can follow the strategy described in theorem \ref{solutionsAsSubsets} and extend $C$ to the set $C = Q \cup \{ n_j, g_{ij}, h_{ij} \}$ where $g_{ij}$ and $h_{ij}$ are ``context constants'' used as follows:   \[
R_{1C} =  \forall_{ij}  ((g_{ij} \odot q_{ij} = w_{ij} \odot g_{ij}) \wedge (h_{ij} \odot q_{ij} = b_{ij} \odot h_{ij})).
\]
where we have relied on theorem \ref{contextConstantReuse} to reuse the same context constant $g_{ij}$ for the atomic sentences $(q_{ij} \leq w_{ij})$ and $(w_{ij} \leq q_{ij})$ and the same context constant $h_{ij}$ for $(q_{ij} \leq b_{ij})$ and $(b_{ij} \leq q_{ij})$.
Finally, our first embedding is given by the embedding set:\[
R_{1} = R_{1A} \wedge R_{1B} \wedge R_{1C}.  
\]
A calculation of $F_C(R^{+}_{1})$ provides a model with 46 atoms in the case of $2 \times 2$ grids. Subsets of non-redundant atoms of $F_C(R^{+}_{1})$ that suffice to preserve $R^{-}_{1}$ produce models that, when they satisfy $\Xi$, represent examples of grids with at least one vertical black bar.  This model satisfies:\[
F =  \cup_s F_s =  F_C(R^{+}_{1}).
\]
Using for indexes i,j $1,1  \rightarrow 1$, $1,2  \rightarrow 2$, $2,1  \rightarrow 3$ and $2,2  \rightarrow 4$ for better readability, the atomization is: \[ 
F_C(R^{+}_{1}) = [\phi_{g_1},\phi_{g_2},\phi_{g_3},\phi_{g_4},\phi_{h_1},\phi_{h_2},\phi_{h_3},\phi_{h_4},\phi_{w_1g_1},\phi_{w_2g_2},\phi_{w_3g_3},\]\[\phi_{w_4g_4},\phi_{b_1h_1},\phi_{b_2h_2},\phi_{b_3h_3},\phi_{b_4h_4},\phi_{w_1b_1q_1},\phi_{w_2b_2q_2},\]\[\phi_{w_3b_3q_3},\phi_{w_4b_4q_4},\phi_{b_1q_1g_1},\phi_{b_2q_2g_2},\phi_{b_3q_3g_3},\phi_{b_4q_4g_4},\phi_{w_1q_1h_1},\]\[\phi_{w_2q_2h_2},\phi_{w_3q_3h_3},\phi_{w_4q_4h_4},\phi_{w_1w_3n_1g_1g_3},\phi_{w_2w_4n_2g_2g_4},\phi_{w_1w_3q_1n_1g_3h_1},\]\[\phi_{w_1w_3b_1q_1n_1g_3},\phi_{w_1w_3q_3n_1g_1h_3},\phi_{w_1w_3b_3q_3n_1g_1},\phi_{w_2w_4b_2q_2n_2g_4},\phi_{w_2w_4b_4q_4n_2g_2},\]\[\phi_{w_2w_4q_2n_2g_4h_2},\phi_{w_2w_4q_4n_2g_2h_4},\phi_{w_1w_3b_1q_1q_3n_1h_3},\phi_{w_1w_3b_1b_3q_1q_3n_1},\]\[\phi_{w_1w_3q_1q_3n_1h_1h_3},\phi_{w_1w_3b_3q_1q_3n_1h_1},\phi_{w_2w_4b_2b_4q_2q_4n_2},\phi_{w_2w_4b_2q_2q_4n_2h_4},\]\[\phi_{w_2w_4b_4q_2q_4n_2h_2},\phi_{w_2w_4q_2q_4n_2h_2h_4}] 
\]
The freest model for solution $q_1= b_1, q_2 = w_2, q_3 = b_3, q_4 = w_4$, has non-redundant atoms: \[ 
F_s = [\phi_{g_1},\phi_{g_2},\phi_{g_3},\phi_{g_4},\phi_{h_1},\phi_{h_2},\phi_{h_3},\phi_{h_4},\phi_{w_1g_1},\phi_{w_3g_3},\phi_{b_2h_2},\]\[\phi_{b_4h_4},\phi_{w_1b_1q_1},\phi_{w_2b_2q_2},\phi_{w_3b_3q_3},\phi_{w_2q_2h_2},\phi_{w_4b_4q_4},\phi_{b_1q_1g_1},\]\[\phi_{w_4q_4h_4},\phi_{b_3q_3g_3},\phi_{w_1w_3n_1g_1g_3},\phi_{w_1w_3b_1q_1n_1g_3},\phi_{w_1w_3b_3q_3n_1g_1},\]\[\phi_{w_1w_3b_1b_3q_1q_3n_1},\phi_{w_2w_4b_2b_4q_2q_4n_2},\phi_{w_2w_4b_2q_2q_4n_2h_4},\phi_{w_2w_4b_4q_2q_4n_2h_2},\]\[\phi_{w_2w_4q_2q_4n_2h_2h_4}] ,
\]
which is a subset of $F_C(R^{+}_{1})$, as promised by tightness. The embedding is also explicit:  \[ 
F_s = F_C(R^{+}_{1})^{{}^{\vee} Q \cup \{n_1, n_2\}  \cup \{g_1, g_3, h_2, h_4\} } \cup \{\phi_{h_1}, \phi_{h_3}, \phi_{g_2}, \phi_{g_4}\}. 
\]
We can find all the models of a vertical black bar, with a well-constructed grid that passes the scope sentence, by grounding $F_C(R^{+}_{1})$ with the right context constants (in this case $g_1, g_3, h_2, h_4$). In fact there are 16 sets of context constants that produce models for the 16 possible grids and all these grounded models pass the scope sentence. This includes $h_1,h_2, h_3, h_4$ that produces a model of a full white grid. However, only the models that satisfy $R^{-}_1$ are valid models of our embedding, and these models all have a vertical black bar. Every model without a vertical black bar that we find by grounding satisfy:  \[ 
n_1 \odot n_2  \leq q_1 \odot  q_2 \odot q_3 \odot  q_4 
\]
i.e. these models violate $\odot_j n_j  \not\leq  \odot_{r,s}  q_{rs}$ which is in $R^{-}$.

We have not yet used the negative duples of our embedding. Once we have the atomization of $F_C(R^{+}_{1})$ we can pick a subset of the atoms that suffice to enforce $R^{-}$. This choosing of atoms checks the $\vert R^{-}\vert$ duples just once and ends up with a set of at most $\vert R^{-}\vert$ atoms. It suffices to keep the first discriminant atom we encounter for one duple and then we can move onto the next duple. If we apply this algorithm to this embedding, almost every model we find is a model of a full board with a black vertical bar that passes the scope sentence. There are a few incomplete boards that do not pass the scope sentence. In the general case we can find many ``bad models'' that satisfy $R$ but do not satisfy the scope sentence. It is precisely here, each time we pick an atom using $R^{-}$, where grounding can be used, as a secondary step to prioritize atoms with compatible context constants, so next time we pick an atom for the next sentence in $R^{-}$ we start looking first into atoms with similar context constants.  Grounding may help to find subsets of atoms that satisfy the scope sentence but it is not the primary mechanism for selection of subsets of $F_C(R^{+}_{1})$. The primary mechanism is $R^{-}$.    

Note also that $\vert R^{-}\vert = 5$, in a $2 \times 2$ grid. This means that there should be subsets of non-redundant atoms of $F_C(R^{+}_{1})$ with at most $5$ atoms (perhaps 6 if we want to add $\ominus_c$ to be consistent with the sixth axiom of atomized semilattices). For example, the model: \[ 
M_s = [\phi_{b_2h_2}, \phi_{b_4h_4}, \phi_{w_1w_3n_1g_1g_3}]
\]
is a subset of the non-redundant atoms of $F_s$ that satisfy $R \wedge  \Xi$. In addition $[M_s + \{\ominus_c\}] \subset F_s$, as $F_s$ is freer than $[M_s + \{\ominus_c\}]$. How does it work?:\[ 
M_s \models (q_1 = q_2 = q_3 = q_4 = b_1 = b_3 = w_2 = w_4)
\]
The embedding can be extended to produce models for which the $q$ constants do not equate each others:\[ 
R^{-} \wedge (q_1 \not\leq q_2 \odot q_3 \odot q_4) \wedge (q_2 \not\leq q_1 \odot q_3 \odot q_4) \wedge (q_3 \not\leq q_1 \odot q_2 \odot q_4)  \wedge (q_4 \not\leq q_1 \odot q_2 \odot q_3)
\]
Now we have $\vert R^{-}\vert = 9$ and, in fact the following 7 atoms suffice:  \[ 
N_s = [\phi_{b_2h_2}, \phi_{b_4h_4}, \phi_{w_1b_1q_1}, \phi_{w_2b_2q_2}, \phi_{w_3b_3q_3}, \phi_{w_4b_4q_4},  \phi_{w_1w_3n_1g_1g_3}]
\]\[
N_s \models (q_1 = b_1) \wedge (q_2 = w_2)  \wedge (q_3 = b_3)  \wedge  (q_4 = b_4), 
\]
and form a model that distinguishes every constant $q_i$.

\bigskip
\bigskip

{\underline{Embedding 2:}}  Now we study a very different embedding based on presenting examples to the algebra. The interpretation constants $Q =  \{ b_{ij}, w_{ij}, q_{ij}\}$ and scope sentence $\Xi$ are the same as in the previous embedding. The embedding constants are now $C' = Q \cup \{ v, g_{ij}, h_{ij} \}$. The embedding set is: \[
R_{2A} = \{ v \leq I : I \text{ has a vertical black bar} \} \cup \{ v \not\leq I : I \text{ has no vertical black bar} \}.
\]
where $I$ is a term with $n \times n$ constants either $w_{ij}$ or $b_{ij}$ at each position of the grid.  For example in $2 \times 2$ is a conjunction of 16 atomic sentences:\[
R_{2A} = (v \leq b_{11}  \odot w_{12}  \odot b_{21}   \odot w_{22}) \wedge  (v \leq b_{11}  \odot b_{12}  \odot b_{21}   \odot w_{22}) ...\wedge 
\]\[
...\wedge (v \not\leq w_{11}  \odot w_{12}  \odot w_{21}   \odot w_{22}) \wedge  (v \not\leq b_{11}  \odot w_{12}  \odot w_{21}   \odot w_{22}) ... 
\]
We can add: \[
R_{2B} = (v  \leq \odot_{i,j} q_{i,j}) \wedge  \forall_{ij}  (w_{ij} \odot b_{ij}  \not\leq \odot_{r,s}  q_{rs}). 
\]
The embedding $R_{2A}  \wedge R_{2B}$ is concise. For completeness we have to add again: \[
R_{2B} =  \forall_{ij}  (q_{ij} \leq w_{ij} \odot b_{ij}).
\]
To make the embedding explicit and tight we proceed as before, by adding:  \[
R_{2C} =  \forall_{ij}  ((g_{ij} \odot q_{ij} = w_{ij} \odot g_{ij}) \wedge (h_{ij} \odot q_{ij} = b_{ij} \odot h_{ij})).
\]
The resulting embedding:\[
R_{2} = R_{2A} \wedge R_{2B} \wedge R_{2C}
\] 
produces a freest model  $F_{C'}(R^{+}_{2})$ with 68 atoms in dimension $2 \times 2$. Again, using $1,1  \rightarrow 1$, $1,2  \rightarrow 2$, $2,1  \rightarrow 3$ and $2,2  \rightarrow 4$ for better readability:  \[ 
F_{C'}(R^{+}_{2}) = [\phi_{g_1},\phi_{g_2},\phi_{g_3},\phi_{g_4},\phi_{h_1},\phi_{h_2},\phi_{h_3},\phi_{h_4},\phi_{w_1g_1},\phi_{w_2g_2},\phi_{w_3g_3},\]\[\phi_{w_4g_4},\phi_{b_1h_1},\phi_{b_2h_2},\phi_{b_3h_3},\phi_{b_4h_4},\phi_{w_1b_1q_1},\phi_{w_2b_2q_2},\]\[\phi_{w_3b_3q_3},\phi_{w_4b_4q_4},\phi_{b_1q_1g_1},\phi_{b_2q_2g_2},\phi_{b_3q_3g_3},\phi_{b_4q_4g_4},\phi_{w_1q_1h_1},\]\[\phi_{w_2q_2h_2},\phi_{w_3q_3h_3},\phi_{w_4q_4h_4},\phi_{vw_1b_1q_1},\phi_{vw_2b_2q_2},\phi_{vw_3b_3q_3},\phi_{vw_4b_4q_4},\]\[\phi_{vb_1b_2q_1g_1h_2},\phi_{vb_1b_2q_2g_2h_1},\phi_{vb_1b_4q_1g_1h_4},\phi_{vb_1b_4q_4g_4h_1},\phi_{vb_2b_3q_2g_2h_3},\]\[\phi_{vb_2b_3q_3g_3h_2},\phi_{vb_3b_4q_3g_3h_4},\phi_{vb_3b_4q_4g_4h_3},\phi_{vb_1b_2q_1q_2g_1g_2},\phi_{vb_1b_4q_1q_4g_1g_4},\]\[\phi_{vb_2b_3q_2q_3g_2g_3},\phi_{vb_3b_4q_3q_4g_3g_4},\phi_{vw_1w_2b_2q_1g_2h_1h_2},\phi_{vw_2w_3b_2q_3g_2h_2h_3},\]\[\phi_{vw_2b_2b_4q_4g_2g_4h_2},\phi_{vw_2w_4b_2q_4g_2h_2h_4},\phi_{vw_1w_2b_1q_2g_1h_1h_2},\phi_{vw_2w_3b_3q_2g_3h_2h_3},\]\[\phi_{vw_2b_1b_4q_2h_1h_2h_4},\phi_{vw_2b_3b_4q_2h_2h_3h_4},\phi_{vw_2w_4b_4q_2g_4h_2h_4},\]\[\phi_{vw_1w_3b_1q_3g_1h_1h_3},\phi_{vw_1b_1b_3q_3g_1g_3h_1},\phi_{vw_1w_4b_1q_4g_1h_1h_4},\phi_{vw_3b_1b_3q_1g_1g_3h_3},\]\[\phi_{vw_3b_1b_2q_3h_1h_2h_3},\phi_{vw_3b_1b_4q_3h_1h_3h_4},\phi_{vw_4b_1b_2q_4h_1h_2h_4},\phi_{vw_1w_3b_3q_1g_3h_1h_3},\]\[\phi_{vw_3w_4b_3q_4g_3h_3h_4},\phi_{vw_1b_3b_4q_1h_1h_3h_4},\phi_{vw_1b_2b_3q_1h_1h_2h_3},\phi_{vw_4b_2b_3q_4h_2h_3h_4},\]\[\phi_{vw_1w_4b_4q_1g_4h_1h_4},\phi_{vw_4b_2b_4q_2g_2g_4h_4},\phi_{vw_3w_4b_4q_3g_4h_3h_4}].
 \]
The freest solution model for the same solution than before $q_1= b_1, q_2 = w_2, q_3 = b_3, q_4 = w_4$, now has non-redundant atoms: \[
F'_s = [\phi_{g_1},\phi_{g_2},\phi_{g_3},\phi_{g_4},\phi_{h_1},\phi_{h_2},\phi_{h_3},\phi_{h_4},\phi_{w_1g_1},\phi_{w_3g_3},\phi_{b_2h_2},\]\[\phi_{b_4h_4},\phi_{w_1b_1q_1},\phi_{w_2b_2q_2},\phi_{w_3b_3q_3},\phi_{w_2q_2h_2},\phi_{w_4b_4q_4},\phi_{b_1q_1g_1},\]\[\phi_{w_4q_4h_4},\phi_{b_3q_3g_3},\phi_{vw_1b_1q_1},\phi_{vw_2b_2q_2},\phi_{vw_3b_3q_3},\phi_{vw_4b_4q_4},\]\[\phi_{vb_1b_2q_1g_1h_2},\phi_{vb_1b_4q_1g_1h_4},\phi_{vb_2b_3q_3g_3h_2},\phi_{vb_3b_4q_3g_3h_4}] 
 \]
which again is a subset of the atomization of $F_{C'}(R^{+}_{2})$ and, since the embedding is explicit:  \[ 
F'_s = F_{C'}(R^{+}_{2})^{{}^{\vee} Q \cup \{v\}  \cup  \{g_1, g_3, h_2, h_4\} } \cup \{\phi_{h_1}, \phi_{h_3}, \phi_{g_2}, \phi_{g_4}\} 
\]
There is something different in this embedding: \[
F' =  \cup_s F_s'  \sqsubset  F_{C'}(R^{+}_{2})  \,\,\,\,but\,\,\,\,  F' \not= F_{C'}(R^{+}_{2})
\]
and $F_{C'}(R^{+}_{2})$ has now 24 atoms more than $F'$, that only has 44 non-redundant atoms: \[ 
F' = [\phi_{g_1},\phi_{g_2},\phi_{g_3},\phi_{g_4},\phi_{h_1},\phi_{h_2},\phi_{h_3},\phi_{h_4},\phi_{w_1g_1},\phi_{w_2g_2},\phi_{w_3g_3},\]\[\phi_{w_4g_4},\phi_{b_1h_1},\phi_{b_2h_2},\phi_{b_3h_3},\phi_{b_4h_4},\phi_{w_1b_1q_1},\phi_{w_2b_2q_2},\]\[\phi_{w_3b_3q_3},\phi_{w_4b_4q_4},\phi_{b_1q_1g_1},\phi_{b_2q_2g_2},\phi_{b_3q_3g_3},\phi_{b_4q_4g_4},\phi_{w_1q_1h_1},\]\[\phi_{w_2q_2h_2},\phi_{w_3q_3h_3},\phi_{w_4q_4h_4},\phi_{vw_1b_1q_1},\phi_{vw_2b_2q_2},\phi_{vw_3b_3q_3},\phi_{vw_4b_4q_4},\]\[\phi_{vb_1b_2q_1g_1h_2},\phi_{vb_1b_2q_2g_2h_1},\phi_{vb_1b_4q_1g_1h_4},\phi_{vb_1b_4q_4g_4h_1},\phi_{vb_2b_3q_3g_3h_2},\]\[\phi_{vb_2b_3q_2g_2h_3},\phi_{vb_3b_4q_3g_3h_4},\phi_{vb_3b_4q_4g_4h_3},\phi_{vb_1b_2q_1q_2g_1g_2},\phi_{vb_1b_4q_1q_4g_1g_4},\]\[\phi_{vb_2b_3q_2q_3g_2g_3},\phi_{vb_3b_4q_3q_4g_3g_4}] 
\]
As with the first embedding, we can obtain every possible grid by grounding the model $F_{C'}(R^{+}_{2})$ to a subset of the constants. The negative duples of $R^{-}_{2}$ again select the subsets of atoms that spawn models of grids with a vertical black bar. In this case $R^{-}_{2A}$ is a set of duples that rule out each grid without a vertical bar individually. 

$R^{+}_{2}$ is a complete embedding, this means that any positive duple over $Q$ that is true in every solution of $P$ is a duple of $R^{+}_{2}$ or implied by $R^{+}_{2}$ . However, we have $F' \not=  F_{C'}(R^{+}_{2})$ because this embedding, although complete is not strongly complete. There are duples over $C$ that are true on every model but are not implied by $R^{+}$. Which ones?

We have required $v  \leq \odot_{i,j} q_{i,j}$. This sentence of $R^{+}_{2}$ is central as it says that we want models of grids that have a vertical bar $v$. However, if we replace in $v  \leq \odot_{i,j} q_{i,j}$ any $q_{i,j}$ by $b_{i,j}$ the sentence remains true on every solution. In addition, if we set two constants in a column to $b_{i,j}$ we can replace another $q_{i,j}$ by a white pixel constant $w_{i,j}$. These sentences turn out to be the missing piece. There are many of them. For $2 \times 2$ there are 18 sentences. If we make these sentences part of $R^{+}_{2}$ we get: $F' =  F_{C'}(R^{+}_{2})$ and the model $F'$ does not change. These sentences act by eliminating the extra 24 atoms that $F_{C'}(R^{+}_{2})$ had. 

There is a quite interesting fact that occurs in the strongly complete model. If we calculate the $16$ grounded models we obtain $7$ models of $R_{2}$ that represent grids with a vertical black bar and the other $9$ are models that violate $1, 1, 3, 1, 3, 1, 3, 3, 9$ duples of $R^{-}_{2A}$ respectively.  From the eight negative duples in $R^{-}_{2A}$ (that correspond to counterexamples of the problem) it suffices with $4$ duples to rule out all the grounded models that represent grids with no vertical black bar. With only $2$ negative duples we can rule out $6$ of the $8$ grounded models for well-formed grids that have no vertical bar. This hints a mechanism for generalization: fewer negative examples are enough to describe the embedding.

\bigskip
\bigskip

{\underline{Comparison:}}  We can start by calculating $F^{|Q}_s$ for solution $q_1= b_1, q_2 = w_2, q_3 = b_3, q_4 = w_4$. As both embeddings are concise, $F^{|Q}_s$ and $F'^{|Q}_s$ should be equal and, indeed:
\[
F^{|Q}_s = F'^{|Q}_s = [\phi_{w_1},\phi_{w_3},\phi_{b_2},\phi_{b_4},\phi_{w_2q_2},\phi_{b_1q_1},\phi_{w_4q_4},\phi_{b_3q_3}], 
\]
a calculation that can be easily done manually (remember to discard redundant atoms). 

Let us compare now the sets of atoms (not the models) produced by restricting the atoms of $F$ and $F'$ to the embedding constants:\[ 
F \xrightarrow{|Q}  A = \{\phi_{w_1},\phi_{w_2},\phi_{w_3},\phi_{w_4},\phi_{b_1},\phi_{b_2},\phi_{b_3},\phi_{b_4},\phi_{w_1q_1},\phi_{w_1w_3},\phi_{w_2w_4},\]\[\phi_{w_2q_2},\phi_{w_3q_3},\phi_{b_1q_1},\phi_{w_4q_4},\phi_{b_2q_2},\phi_{b_3q_3},\phi_{b_4q_4},\phi_{w_1b_1q_1},\]\[\phi_{w_1w_3q_1},\phi_{w_1w_3q_3},\phi_{w_2w_4q_2},\phi_{w_2w_4q_4},\phi_{w_2b_2q_2},\phi_{w_3b_3q_3},\phi_{w_4b_4q_4},\]\[\phi_{w_1w_3b_1q_1},\phi_{w_1w_3q_1q_3},\phi_{w_1w_3b_3q_3},\phi_{w_2w_4b_2q_2},\phi_{w_2w_4b_4q_4},\phi_{w_2w_4q_2q_4},\]\[\phi_{w_1w_3b_1q_1q_3},\phi_{w_1w_3b_3q_1q_3},\phi_{w_2w_4b_2q_2q_4},\phi_{w_2w_4b_4q_2q_4},\phi_{w_1w_3b_1b_3q_1q_3},\]\[\phi_{w_2w_4b_2b_4q_2q_4}\}
 \]\[ 
F' \xrightarrow{|Q} B = \{\phi_{w_1},\phi_{w_2},\phi_{w_3},\phi_{w_4},\phi_{b_1},\phi_{b_2},\phi_{b_3},\phi_{b_4},\phi_{w_1q_1},\phi_{w_2q_2},\phi_{w_3q_3},\]\[\phi_{b_1q_1},\phi_{w_4q_4},\phi_{b_2q_2},\phi_{b_3q_3},\phi_{b_4q_4},\phi_{w_1b_1q_1},\phi_{w_2b_2q_2},\]\[\phi_{w_3b_3q_3},\phi_{b_1b_2q_1},\phi_{b_1b_2q_2},\phi_{b_1b_4q_1},\phi_{w_4b_4q_4},\phi_{b_2b_3q_2},\phi_{b_2b_3q_3},\]\[\phi_{b_1b_4q_4},\phi_{b_3b_4q_3},\phi_{b_3b_4q_4},\phi_{b_1b_2q_1q_2},\phi_{b_1b_4q_1q_4},\phi_{b_2b_3q_2q_3},\phi_{b_3b_4q_3q_4}\}.
\]
Many atoms are the same, but others are not. Consider the set: \[ 
A - B = \{\phi_{w_1w_3},\phi_{w_2w_4},\phi_{w_1w_3q_1},\phi_{w_1w_3q_3},\phi_{w_2w_4q_2},\phi_{w_2w_4q_4},  \]\[ \phi_{w_1w_3b_1q_1},\phi_{w_1w_3q_1q_3},\phi_{w_1w_3b_3q_3},\phi_{w_2w_4b_2q_2},\phi_{w_2w_4b_4q_4},\phi_{w_2w_4q_2q_4},\]\[\phi_{w_1w_3b_1q_1q_3},\phi_{w_1w_3b_3q_1q_3},\phi_{w_2w_4b_2q_2q_4},\phi_{w_2w_4b_4q_2q_4},\phi_{w_1w_3b_1b_3q_1q_3},\phi_{w_2w_4b_2b_4q_2q_4}\}
\]
Since both embeddings are concise and tight, theorem \ref{sameSolutionAtomsTheoremPartC} tells us that every one of these atoms should be redundant in at least one solution of $P$. For example for $\phi_{w_1w_3}$ to be redundant in a solution, the solution should satisfy $w_1 \not\leq q_1$ and $w_3 \not\leq q_3$ and it should have atoms $\phi_{w_1}$ and $\phi_{w_3}$. From the scope sentence we know that such solution should satisfy: $q_1 = b_1$ and  $q_3 = b_3$; like the ones we have calculated above. Atoms $\phi_{w_1w_3}$ and $\phi_{w_2w_4}$ are redundant on each solution that has a black vertical bar on the left or a black vertical bar on the right respectively.  Atoms $\phi_{w_1w_3b_1b_3q_1q_3}$ and $\phi_{w_2w_4b_2b_4q_2q_4}$ are redundant on every solution. The other atoms are redundant in other solutions. Some atoms like $\phi_{w_2w_4b_2q_2}$ are redundant only in solutions that have some particular set of white pixels. 

In the other direction:
\[ 
B - A= \{\phi_{b_1b_2q_1},\phi_{b_1b_2q_2},\phi_{b_1b_4q_1},\phi_{b_2b_3q_2},\phi_{b_2b_3q_3},\]\[\phi_{b_1b_4q_4},\phi_{b_3b_4q_3},\phi_{b_3b_4q_4},\phi_{b_1b_2q_1q_2},\phi_{b_1b_4q_1q_4},\phi_{b_2b_3q_2q_3},\phi_{b_3b_4q_3q_4}\},
\]
we find atoms that are redundant in solutions that contain a vertical bar and at least one or more additional white pixel (atoms with three constants), or one additional black pixel (atoms with four constants). Finally:
\[ 
A \cap B = \{\phi_{w_1},\phi_{w_2},\phi_{w_3},\phi_{w_4},\phi_{b_1},\phi_{b_2},\phi_{b_3},\phi_{b_4},\phi_{w_1q_1},\]\[\phi_{w_2q_2},\phi_{w_3q_3},\phi_{b_1q_1},\phi_{w_4q_4},\phi_{b_2q_2},\phi_{b_3q_3},\phi_{b_4q_4},\phi_{w_1b_1q_1},\]\[\phi_{w_2b_2q_2},\phi_{w_3b_3q_3},\phi_{w_4b_4q_4}\}
 \]
contains non-redundant atoms, each in both $F_S$ and $F'_S$ for some solution $S$. Particularly interesting are $\phi_{w_1b_1q_1},\phi_{w_2b_2q_2},\phi_{w_3b_3q_3},\phi_{w_4b_4q_4}$ that are non-redundant atoms of the freest solution model of every solution and represent the structure of the grid: these atoms say that at each position $p$ we have either the white or the black pixel and the corresponding $q_p$ constant.  

\bigskip

What about the $24$ atoms of $F_{C'}(R^{+}_{2})$ that are external to $F'_s$ \[ 
F_{C'}(R^{+}_{2}) - F'_s = \{ \phi_{vw_1w_2b_2q_1g_2h_1h_2},\phi_{vw_2w_3b_2q_3g_2h_2h_3},\phi_{vw_2b_2b_4q_4g_2g_4h_2},\phi_{vw_2w_4b_2q_4g_2h_2h_4},\]\[\phi_{vw_1w_2b_1q_2g_1h_1h_2},\phi_{vw_2w_3b_3q_2g_3h_2h_3},\phi_{vw_2b_1b_4q_2h_1h_2h_4},\phi_{vw_2b_3b_4q_2h_2h_3h_4},\]\[\phi_{vw_2w_4b_4q_2g_4h_2h_4},\phi_{vw_1w_3b_1q_3g_1h_1h_3},\phi_{vw_1b_1b_3q_3g_1g_3h_1},\phi_{vw_1w_4b_1q_4g_1h_1h_4},\]\[\phi_{vw_3b_1b_3q_1g_1g_3h_3},\phi_{vw_3b_1b_2q_3h_1h_2h_3},\phi_{vw_3b_1b_4q_3h_1h_3h_4},\phi_{vw_4b_1b_2q_4h_1h_2h_4},\]\[\phi_{vw_1w_3b_3q_1g_3h_1h_3},\phi_{vw_3w_4b_3q_4g_3h_3h_4},\phi_{vw_1b_3b_4q_1h_1h_3h_4},\phi_{vw_1b_2b_3q_1h_1h_2h_3},\]\[\phi_{vw_4b_2b_3q_4h_2h_3h_4},\phi_{vw_1w_4b_4q_1g_4h_1h_4},\phi_{vw_4b_2b_4q_2g_2g_4h_4},\phi_{vw_3w_4b_4q_3g_4h_3h_4} \} 
\]
Their restrictions to $Q$ are: \[ 
F_{C'}(R^{+}_{2}) - F'_s  \xrightarrow{|Q} \{ \phi_{w_1w_2b_2q_1},\phi_{w_2w_3b_2q_3},\phi_{w_2b_2b_4q_4},\]\[\phi_{w_2w_4b_2q_4},\phi_{w_1w_2b_1q_2},\phi_{w_2w_3b_3q_2},\phi_{w_2b_1b_4q_2},\phi_{w_2b_3b_4q_2},\]\[\phi_{w_2w_4b_4q_2},\phi_{w_1w_3b_1q_3},\phi_{w_1b_1b_3q_3},\phi_{w_1w_4b_1q_4},\phi_{w_3b_1b_3q_1},\]\[\phi_{w_3b_1b_2q_3},\phi_{w_3b_1b_4q_3},\phi_{w_4b_1b_2q_4},\phi_{w_1w_3b_3q_1},\phi_{w_3w_4b_3q_4},\]\[\phi_{w_1b_3b_4q_1},\phi_{w_1b_2b_3q_1},\phi_{w_4b_2b_3q_4},\phi_{w_1w_4b_4q_1},\phi_{w_4b_2b_4q_2},\phi_{w_3w_4b_4q_3} \} 
\]
Atom $\phi_{w_1w_2b_2q_1}$ is not compatible with the scope sentence $\Xi$ as it implies that neither $w_2$ nor $b_2$ can be equal to $q_2$. 16 of these atoms follow this pattern. The restrictions to $Q$ of the other 8 atoms are: \[ 
\phi_{w_2b_1b_4q_2},\phi_{w_2b_3b_4q_2},\phi_{w_3b_1b_2q_3},\phi_{w_3b_1b_4q_3},\phi_{w_4b_1b_2q_4},\phi_{w_1b_3b_4q_1},\phi_{w_1b_2b_3q_1},\phi_{w_4b_2b_3q_4}
\]
Each of these atoms belong to one of the models of grids with no vertical black bars. They, each, set the minimal requirement of having at least three white pixels. The $8$ atoms of $F_{C'}(R^{+}_{2}) - F'_s$ from which they come are:
\[ 
\phi_{vw_2b_1b_4q_2h_1h_2h_4},\phi_{vw_2b_3b_4q_2h_2h_3h_4},\phi_{vw_3b_1b_2q_3h_1h_2h_3},\phi_{vw_3b_1b_4q_3h_1h_3h_4},\]\[\phi_{vw_4b_1b_2q_4h_1h_2h_4},\phi_{vw_1b_3b_4q_1h_1h_3h_4},\phi_{vw_1b_2b_3q_1h_1h_2h_3},\phi_{vw_4b_2b_3q_4h_2h_3h_4}
\]
These atoms are never atoms of a solution. To get rid of them either we make the second embedding strongly complete or we discard them based on their incompatibility with $R^{-}_{2}$ and the scope sentence.

In conclusion: the embedding $1$ is strongly complete while the embedding $2$ is not, unless we add $18$ extra duples to the embedding. Assume we are not making embedding 2 strongly complete. We can discover solutions by inspecting the atoms of the embedding 1 that, when restricted to $Q$, are not atoms of the embedding 2. In the other direction we can also discover solutions by inspecting the atoms of the embedding 2 that are not atoms of the embedding 1 when restricted to $Q$, however, in this direction, some of these atoms may mislead us into forming bad solutions. These solutions can only be ruled out by checking $R^{-}_{2}$ and the scope sentence.

\section{The N-Queen Completion Problem} 

This problem consists of locating $M$ queens on a chessboard of size $M$ in such way that no two queens can attack each other. That is, they all occupy different columns, rows and diagonals.

For an $M \times M$ board we choose a set of $M \times M$ constants $Q_{ij}$ to represent a queen in position $(i, j)$ and a set of $M \times M$ constants $E_{ij}$ for the empty squares. An additional constant $B$ can be used to represent the board and to determine if the board has an empty square or a queen we can use the order relation $<$ of the semilattice. The following sentences provide an interpretation for this problem:
\[ 
{\bf{\alpha}} = "\forall i \forall j ((Q_{ij} \leq B) \vee (E_{ij} \leq B))"  
\]\[ 
{\bf{\beta}} = "\forall i \forall j (Q_{ij} \odot E_{ij} \not \leq  B)"   
\]\[ 
{\bf{\gamma}} = "\forall i \exists j  (Q_{ij}  \leq  B)   \wedge   \forall j \exists i  (Q_{ij}  \leq  B)" 
\]\[ 
{\bf{\epsilon}} = "\forall i \forall j   \, ( (Q_{ij} \in I) \Rightarrow  (Q_{ij}  \leq  B))"
\]\[ 
{\bf{\pi}} = "\forall i \forall j  \forall r \forall s  \,(((Q_{ij}, Q_{rs}) \in A) \Rightarrow  (Q_{ij} \odot Q_{rs} \not \leq  B))"  
\]
where $A$ is the set of pairs of mutually attacking queens, and $I$ is the set of $n$ fixed queens at the beginning of the game. 

The interpretation constants are $Q = \{ Q_{ij}, E_{ij}, B\}$ and the interpretation sentence is:
\[ 
{\bf{\varphi}} = {\bf{\alpha}}  \wedge {\bf{\beta}} \wedge {\bf{\gamma}}  \wedge {\bf{\epsilon}}  \wedge {\bf{\pi}}. 
\]
The scope sentence is:
\[ 
\Xi = {\bf{\alpha}}.
\]
Any semilattice model over the constants $Q$ that satisfy ${\bf{\varphi}}$ is a solution of the game. We can determine the position of the queens in the board by using:
\[ 
Q_{ij} \leq B  \, \Leftrightarrow \,  \text{the board has a queen at position (i, j)}
\]
The separator set is: 
\[ 
\Gamma =   \{ (Q_{ij} \leq B) : 0 \leq i < M, 0 \leq j < M  \}  \cup \{ (E_{ij} \leq B) : 0 \leq i < M, 0 \leq j < M  \}  
\]
We are going to build a couple of embeddings with the interpretation $(Q, {\bf{\varphi}}, \Gamma)$
\bigskip

\underline{First embedding}:  Consider the set  $C =\{ Q_{ij}, E_{ij}, B, R_i, C_j, g_{ij}, h_{ij} \}$ of $4M^{2} + 2M + 1$ embedding constants.  ${\bf{\beta}}$ can directly be made part of $R^{-}$ as it is a conjunction of $M^2$ negated atomic sentences. Same consideration applies to ${\bf{\pi}}$ that is a conjunction of fewer than $4(M - 1)M^2$ negative atomic sentences.

The encoding of ${\bf{\gamma}}$  can be carried out with the set $RC$ of $M^2$ atomic sentences:
\[ 
R_i \odot C_j  \leq  Q_{ij},
\]
the set $NRC$ of $2M$ negated atomic sentences:
\[ 
R_x  \not<  (\odot_{i,j}  E_{ij})  \odot  (\odot_{i,j: i \not= x}  Q_{ij}) ,
\]\[ 
C_y  \not<  (\odot_{i,j}  E_{ij})  \odot  (\odot_{i,j:j \not= y}  Q_{ij}),
\]
and the additional sentence:
\[ 
\delta =  "\odot_x  R_x  \odot_y  C_y  \leq  B".
\]
${\bf{\epsilon}}$ can be encoded in a straightforward manner as $n$ atomic sentences of the form $(Q_{ij}  \leq  B)$, one for each of the fixed queens, forming a block $IB$ of $n$ positive atomic sentences that can be added to $R^{+}$. 

Based on our separator set $\Gamma$, we can use a set $GH$ of $2M^2$ atomic sentences to make sure the embedding is explicit:
\[ 
g_{ij}  \odot Q_{ij}   \leq  B \odot  g_{ij},
\]\[ 
h_{ij}  \odot E_{ij}   \leq  B \odot  h_{ij},
\]
These sentences can also be seen as a mechanism to embed ${\bf{\alpha}}$. 

Finally, we get the embedding set:
\[ 
R^{+} = GH \cup RC \cup IB \cup \{ \delta \}
\]\[ 
R^{-} = \{ {\bf{\beta}},  {\bf{\pi}} \} \cup NRC
\]
that has $3M^2 + n + 1$ atomic sentences in $R^{+}$ and fewer than $M^2 + 4(M - 1)M^2 + 2M$ negated atomic sentences in $R^{-}$.

\bigskip

The embedding given by $R = R^{+} \cup R^{-}$ has various models, including models for all the full boards and also models that are not full boards. There are models of $R$ that represent a board with the first initial $N$ queens set and no more queens.  $R$ also has models that do not even have empty or queen values set for all the board positions. However, there are models of $R$ for any solution of the problem. It is clear than any non-attacking configuration of queens satisfies $R$, including all the solutions with $M$ queens. 

For a game with no fixed queens, the freest model $F_C(R^{+})$ contains $2M^2 + 1$ atoms with one constant in its upper constant segment: either $B$, or any of the $M^2$ constants $g_{ij}$ or any of the $M^2$ constants $h_{ij}$. Also $4M^2$ atoms with two constants in their upper constant segment, with the form $\{Q_{ij}, g_{ij} \}$, $\{E_{ij}, h_{ij} \}$, $\{Q_{ij}, B \}$, $\{E_{ij}, B \}$. Also $2M$ atoms with $M + 2$ constants in their upper constant segment, each corresponding to a row or a column and containing the queens in the row or column, the corresponding $C_x$ or $R_y$ and the constant $B$. A total of $6M^2 + 2M + 1$ non-redundant atoms:

\begin{table}[!htbp]
\centering
\begin{tabular}{cc|cc|cc}
\multicolumn{2}{c}{$F_{C}(R^{+}$)}  & \multicolumn{2}{|c|}{\,\,\,\,\,\,\,\,\,\,} & \multicolumn{2}{c}{$F_S$}         \\
 \hline
Size & \#                    & \multicolumn{2}{|c|}{\,\,\,\,\,\,\,\,\,\,}                  & Size & \#                    \\
\hline
1    & $2M^2 + 1$ & \multicolumn{2}{|c|}{\,\,\,\,\,\,\,\,\,\,}                  & 1    & $2M^2 + 1$ \\
2    & $4M^2$ & \multicolumn{2}{|c|}{\,\,\,\,\,\,\,\,\,\,}                  & 2    & $3M^2$ \\
M+2  & 2M                    & \multicolumn{2}{|c|}{\,\,\,\,\,\,\,\,\,\,}                  & M+2  & 2M                   
\end{tabular}
\end{table}

The embedding is strongly complete, i.e: 
\[
F =  \cup_s F_s  =  F_{C}(R^{+}) 
\]
except for games for which every solution has the same queen or empty space. For example, in dimension $M = 5$ with no fixed queens the embedding is strongly complete as every square can be empty or occupied in some solution. In dimension $M = 4$ with no fixed queens the embedding is not strongly complete as there are 8 empty squares present in the two solutions of the 4x4 puzzle (we have $F  \sqsubset  F_{C}(R^{+})$ in this case). Enforcing the presence of these 8 empty squares in the board makes the embedding strongly complete. 

The embedding is concise, which can be verified by checking that every solution satisfies $F_S^{|Q} = F_C(S)^{|Q}$, a model with $M^2 + 1$ atoms of length 1 and $M^2$ atoms of length 2. The embedding is also tight and explicit, as expected.

\bigskip

\underline{Second embedding}: A better embedding for this problem relies on encoding ${\bf{\pi}}$ using a set of positive sentences instead of a set of negative sentences. We can simply use the fact that the presence of a queen in the board implies the presence of empty squares in all its attacking positions, and also the fact that a row or column with all empty positions except one implies a queen in the remaining position.

The embedding constants are again $C =\{ Q_{ij}, E_{ij}, B, R_i, C_j, g_{ij}, h_{ij} \}$, the same constants we used in the first embedding. To embed ${\bf{\pi}}$ we now use the set EQB of $M^2$ positive duples:
\[ 
\odot_{rs} \{ E_{rs} : (Q_{rs},Q_{ij})  \in A  \} \leq Q_{ij} \odot B 
\]
and  the set QEB with $2M^2$ positive duples:
\[ 
Q_{ij} \leq  \odot_{s} \{ E_{is} : s \not= j  \}  \odot B
\]\[ 
Q_{ij} \leq  \odot_{r} \{ E_{rj} : r \not= i  \}  \odot  B.
\]
In total, $3M^2$ positive atomic sentences. Notice that ${\bf{\pi}}$ is a conjunction of a much larger set of sentences, in the order of $4(M - 1)M^2$ negative atomic sentences, that we no longer need to include in $R^{-}$. The embedding set is then:
\[ 
R^{+} = GH \cup RC \cup IB \cup EQB  \cup QEB  \cup \{ \delta \}
\]\[ 
R^{-} =\{  {\bf{\beta}} \} \cup NRC
\]
with $6M^2 + n + 1$ atomic sentences in $R^{+}$ and  $M^2 + 2M$ negated atomic sentences in $R^{-}$.

\bigskip 

For a game with no fixed queens, the freest model $F = \cup_s F_s$ has $2M^2 + 1$ atoms with one constant in their upper constant segment: either $B$, or any of the $M^2$ constants $g_{ij}$ or any of the $M^2$ constants $h_{ij}$. Also $2M^2$ atoms with two constants in their upper constant segment, of the form $\{Q_{ij}, B \}$ or $\{E_{ij}, B \}$. Also $2M$ atoms with $M + 2$ constants in their upper constant segment, each corresponding to a row or a column and containing the queens in the row or column, the corresponding $C_x$ or $R_y$ and the constant $B$. $F_C(R^{+})$ also contains as many atoms of size $2M^2$ as solutions the game has, each atom with an upper constant segment that contains the inverse image of the full board of the solution, with the role of queens and empty spaces interchanged, each $Q_{ij}$ and $E_{ij}$ constant accompanied by its corresponding $g_{ij}$ or $h_{ij}$.

\begin{table}[!htbp]
\centering
\begin{tabular}{cc|cc|cc}
\multicolumn{2}{c}{$F$}       
& \multicolumn{2}{|c|}{\,\,\,\,\,\,\,\,\,\,}  & \multicolumn{2}{c}{$F_S$}                             \\
\hline
\multicolumn{1}{c}{Size} & \#                        & \multicolumn{2}{|c|}{\,\,\,\,\,\,\,\,\,\,}                  & \multicolumn{1}{c}{Size} & \#                        \\
 \hline
1                        & $2M^2 + 1$ & \multicolumn{2}{|c|}{\,\,\,\,\,\,\,\,\,\,}                  & 1                        & $2M^2 + 1$ \\
2                        & $2M^2$     & \multicolumn{2}{|c|}{\,\,\,\,\,\,\,\,\,\,}                   & 2                        & $2M^2$     \\
M+2                      & 2M                        & \multicolumn{2}{|c|}{\,\,\,\,\,\,\,\,\,\,}                  & M+2                      & 2M                        \\
$2M^2$    & \#Solutions               &                   &                   & $2M^2$    & 1                        
\end{tabular}
\end{table}

The total number of non-redundant atoms in $F = \cup_s F_s$ is:
\[
1 + 2M + 4M^2 + \#Solutions. 
\]

In dimension $M = 5$, for example, the number of atoms in $F$ is $121$ and there are $10$ atoms of size $50$ corresponding, each, to one of the $10$ solutions of the game. 

Each freest solution model $F_S$ is spawned by the non-redundant atoms of $F_C(R^{+})$ with upper constant segment of size smaller than $2M^2$ and the atom of size $2M^2$ that corresponds to the solution $S$ so, in dimension $5$, the freest solution models have each $112$ non-redundant atoms. In general, for every problem size $M$ the freest solution models contain exactly $2 + 2M + 4M^2$ atoms.

The model $F_C(R^{+})$ has the non-redundant atoms of $F$ and, in addition, many residual atoms of larger size than $2M^2$. In dimensions 4 and 5 we have the spectra in the table below.
\begin{table}[!htbp]
\centering
\begin{tabular}{cc|c|cc}
\multicolumn{2}{c|}{$F_C(R^{+})$\,\,\, M=4} &  & \multicolumn{2}{c}{$F_C(R^{+}$) \,\,\, M =5} \\
 \hline
Size & \#            & & Size & \#              \\
 \hline
1  & 33              & & 1  & 51                \\
2  & 32              & & 2  & 50                \\
6  & 8               & & 7  & 10                \\
32 & 2 (\#Solutions) & & 50 & 10 (\#Solutions)  \\
44 & 4               & & 62 & 8                 \\
48 & 89              & & 64 & 8                 \\
50 & 744             & & 66 & 100               \\
52 & 88              & & 68 & 432               \\
   &                 & & 70 & 2020              \\
   &                 & & 72 & 38548             \\
   &                 & & 74 & 33252             \\
   &                 & & 76 & 942                             
\end{tabular}
\end{table}

The presence of residual atoms implies the embedding is not complete, which is the case in every dimension. However, an inspection of such residual atoms permits finding additional atomic sentences that, together with $R^{+}$, make the embedding strongly complete.  

For example, in dimension $M = 4$ with no fixed queens, enforcing the presence of the 8 squares that are empty in both solutions make the embedding strongly complete by eliminating all the atoms of size larger than 32, obtaining then $F =  F_{C}(R^{+})$ where now $R^{+}$ contains the extra atomic sentence:
\[
E_{11} \odot E_{14} \odot E_{22} \odot E_{23} \odot E_{32} \odot E_{33} \odot E_{41} \odot E_{44} \leq B.
\]

In dimension $M = 5$ with no fixed queens there are no queens or empty squares present in every solution. However, there is a set of positive atomic sentences of the form:
\[
E_{ab} \leq Q_{cd} \odot B
\]\[
E_{ef} \leq E_{ij} \odot  E_{kl} \odot B
\]\[
E_{mn} \leq Q_{pq} \odot  E_{rs} \odot B
\]
that, once added to the set $R^{+}$, make the embedding strongly complete, i.e equal to the model $F$ in the table above. 

The reader may wonder why the first embedding is complete in dimension 5 while the second embedding is not. In the second embedding every one of these extra atomic sentences that make the embedding strongly y complete is true in $F$ and in every freest solution model. In the first embedding every one of these extra atomic sentences is false in $F$ and is false in at least one of the freest solution models. Observe that in the first embedding, for a solution $S$ that does not have the queen $Q_{cl}$ in the board weather or not $Q_{cd} \odot B$ implies $E_{ab}$ is a choice that can be made, so the freest solution model should satisfy $F_S \models (E_{ab} \not\leq Q_{cd} \odot B)$. This is not the case in the second embedding for which the positive atomic sentences used to enforce ${\bf{\pi}}$ imply $F_S \models (E_{ab} \leq Q_{cd} \odot B)$.    

For a similar reason this second embedding is not concise either. Sentences in blocks EQB and QEB imply sentences in the interpretation constants $Q$ that are not implied by the interpretation sentences ${\bf{\varphi}}_s$. For example, the sentence $\odot_{rs} \{ E_{rs} : (Q_{rs},Q_{ij})  \in A  \} \leq Q_{ij} \odot B$ is true in every solution and implied by ${\bf{\varphi}}_s$ as long as $Q_{ij}$ is a queen in the board in said solution. However, if $Q_{ij}$ is not a queen in the board of the solution then this sentence is not implied by the chosen interpretation and, hence, the embedding is not concise. As a result, for every solution $S$, the freest solution model $F^{|Q}_{2S}$ given by this embedding is not equal to the freest solution model $F^{|Q}_{1S}$ of the first embedding (which is concise). However, it is true that $F^{|Q}_{1S}$ is strictly freer than $F^{|Q}_{2S}$ and, then, every atom of $F^{|Q}_{2S}$ is an atom in $F^{|Q}_{1S}$. Weather or not an embedding is concise depends on the interpretation sentence chosen. 

By calculating the freest model $F_C(R^{+})$ we could find all the solutions with or without fixed queens on any dimension by just looking at the atoms of size $2M^2$. The number of calculations that are necessary in a classical computer to calculate the atomization using full crossing is, unfortunately, quite large; even in low dimension billions of atoms appear during the calculation of $F_C(R^{+})$, a calculation that may take many hours to compute. However using a sparse version of the full crossing operation (see \cite{AMLPresentation}) the calculation can be carried out under a second.  If we increase the dimension a little bit the number of atoms that appear during the calculation with full crossing reaches astronomical numbers. However, solutions can still be found very fast using sparse crossing, so this embedding can be used in practice to find solutions of the N-Queen completion problem. 

\underline{Third embedding}: the second embedding is tight and explicit and we also know how to make it strictly complete and, therefore, also complete. However, unlike the first embedding, the second embedding is not concise. We can easily make our second embedding concise with the help of one additional embedding constant $U$.

As we mentioned above, our second embedding is not concise because a solution that does not have $Q_{ij}$ in the board does not imply $\odot_{rs} \{ E_{rs} : (Q_{rs},Q_{ij})  \in A  \} \leq Q_{ij} \odot B$ and we are enforcing this sentence as part of the encoding of ${\bf{\pi}}$ with the block EQB of positive sentences. We can address this issue by replacing EQB with the sentences EQBU:
\[ 
\odot_{rs} \{ E_{rs} : (Q_{rs},Q_{ij})  \in A  \} \leq Q_{ij} \odot B \odot U
\]
and by replacing the set QEB with the set QEBU:
\[ 
Q_{ij} \leq  \odot_{s} \{ E_{is} : s \not= j  \}  \odot B  \odot U
\]\[ 
Q_{ij} \leq  \odot_{r} \{ E_{rj} : r \not= i  \}  \odot  B \odot U.
\]
Since $U$ is not an interpretation constant these sentences do not have consequences in the space spawned by the interpretations constants, and now we have a concise embedding $F_S^{|Q} = F_C(S)^{|Q}$. 
To find solutions using this embedding we have to be careful to replace the negative atomic sentence ${\bf{\beta}} = \forall i \forall j (Q_{ij} \odot E_{ij} \not \leq  B)$ with the set ${\bf{\beta}}U$ of $M^2$ negative atomic sentences defined as:
\[ 
Q_{ij} \odot E_{ij} \not \leq  B \odot U.
\]
The third embedding set is then:
\[ 
R^{+} = GH \cup RC \cup IB \cup EQBU  \cup QEBU  \cup \{ \delta \}
\]\[ 
R^{-} =  {\bf{\beta}}U \cup NRC,
\]
that finds solutions to the problem as well as the second embedding. 

With no queens set in the board at the beginning of the game, the atomizations of $F_C(R^{+})$ for the third and second embeddings differ only in a set of one atom of size 1 and $2M^2$ atoms of size 3 present in the atomization of the third embedding and absent in the second embedding. The atom of size 1 has the constant $U$ in its upper constant segment and the atoms of size 3 have, each, the constant $U$ and either one constant $Q_{i, j}$ and its corresponding context constant $g_{i, j}$ or one constant $E_{i, j}$ and its corresponding $h_{i, j}$. All the other atoms are the same.

The atomizations for the freest solution models $F_S$ corresponding to the second and third embeddings differ in the same atom of size 1 with $U$ in its upper constant segment and $M^2$ atoms of size 3 with constant $U$ and either $Q_{i, j}$ and its corresponding context constant $g_{i, j}$ if $E_{i, j}$ is present in the solution $S$ or $E_{i, j}$ and its corresponding $h_{i, j}$ if $Q_{i, j}$ is present in $S$.

\begin{table}[!htbp]
\centering
\begin{tabular}{cc|c|cc}
\multicolumn{2}{c|}{$F$}                 & & \multicolumn{2}{c}{$F_S$} \\
\hline
Size   & \#                              & & Size   & \#         \\
 \hline
1      & $2M^2 + 2$                      & & 1      & $2M^2 + 2$ \\
2      & $2M^2$                          & & 2      & $2M^2$     \\
3      & 24 if $M = 4$, $2M^2$ otherwise & & 3      & $M^2$      \\
$M+2$  & 2M                              & & M+2    & 2M         \\
$2M^2$ & \#Solutions                     & & $2M^2$ & 1                        
\end{tabular}
\end{table}
   
For the atomization of $F = \cup_S F_S$ we again see in the third embedding an extra atom of size 1 and the $2M^2$ atoms of size 3 that $F_C(R^{+})$ had, except on dimension $M = 4$ that we only get $24 = 2M^2 - 8$ atoms with 8 atoms missing corresponding with the 8 empty cells present in both solutions of the game.

So far we have seen that the atomization of the model $F_C(R^{+})$ for the second embedding and the atomization of $F_C(R^{+})$ for the third embedding are quite similar to each other while the atomization of the first embedding is very different than the other two. However, the atomizations of the restricted model $F_C(R^{+})^{|Q}$ for the first and third embeddings are identical while the atomization of this model for the second embedding is different. The same occurs for the atomization of $F_S^{|Q}$ or $F^{|Q}$.  This is indeed expected from the fact that the first and third embeddings are both concise and from theorems \ref{conciseEmbeddingTheorem}, \ref{weakEquivalenceTheorem} and \ref{equivalenceTheorem}.

The freest solution models restricted to the embedding constants $F_S^{|Q}$ are equal for the first and third embeddings and have the following spectra of atom sizes:

\begin{table}[!htbp]
\centering
\begin{tabular}{c|cc|c|cc|c|cc}
$F_S^{|Q}$ & \multicolumn{2}{c|}{EMB 1} &  & \multicolumn{2}{c|}{EMB 2} &  & \multicolumn{2}{|c}{EMB 3} \\
\hline
    & Size & \#        &  & Size & \#      &  & Size & \#         \\
\hline
    & 1    & $M^2 + 1$ &  & 1     & 1      &  & 1    & $M^2 + 1$  \\
    & 2    & $M^2$     &  & 2     & $2M^2$ &  & 2    & $M^2$      \\
    &      &           &  & $M^2$ & 1      &  &      & 
\end{tabular}
\end{table}

For the first and third embeddings, the model $F_C(R^{+})^{|Q}$ is equal to the freest model over the embedding constants $F_Q(\emptyset)$ that has $M^2 + 1$ atoms of size 1. However, for the second embedding the model $F_C(R^{+})^{|Q}$ has many atoms, $960$ for $M = 4$ and many more for higher dimensions. This atomization is actually very similar to that of $F_C(R^{+})$ but with smaller atom sizes and a single atom of size 1.

Finally, model $F^{|Q}$ is also equal for embeddings 1 and 3, and for $M > 4$, again equal to $F_Q(\emptyset)$. For embedding 2 model $F^{|Q}$ has one atom of size 1, $M^2$ atoms of size 2 and as many atoms of size $2M^2$ as solutions.

Since $F_C(R^{+})^{|Q} = F^{|Q}$ we can use theorem \ref{residualAtomTheorem} to state that the third embedding, just like the first, is complete. However, while the first embedding is also strictly complete for $M > 4$ the third embedding is not as implied by $F_C(R^{+}) \not= F$. To make the third embedding strictly complete we will have to add the same additional sentences than in embedding 2 with the difference that, now, the extra constant $U$ appears in the right hand side.

\section{Sudoku}

For the Sudoku it is possible to use an embedding strategy very similar to that used to embed the N-Queens Completion problem.  For the classical Sudoku game in a $9 \times 9$ grid with 81 cells, we can define 9 constants $N_{nij}$ for each of the $9$ possible numbers written in a cell $(i, j)$, i.e. 729 constants $N_{nij}$. We can also define a constant $G$ for the grid and the order relation $\leq$ can be used to determine if the grid has a number in a cell. An interpretation can be built with the following sentences:
\[ 
{\bf{\alpha}} = "\forall i \forall j \vee_n(N_{nij} \leq G)"
\]\[ 
{\bf{\beta}} = "\forall n \forall m (m \not = n)  \forall i \forall j  (N_{nij} \odot N_{mij} \not\leq G)"   
\]\[ 
{\bf{\gamma}} = "\forall n  \forall i \exists j  (N_{nij} \leq G)   \wedge   \forall n  \forall j \exists i  (N_{nij} \leq G)   \wedge   \forall n  \forall \text{z} \exists i \exists j ((i,j) \in \text{cells(z)}) (N_{nij} \leq G)"
\]\[ 
{\bf{\epsilon}} = "\forall n \forall i \forall j   \, ( (N_{nij}  \in I) \Rightarrow  (N_{nij} \leq G))"
\]\[ 
{\bf{\pi}} =  "\forall n  \forall i \forall j  \forall r \forall s  \,(((N_{nij}, N_{nrs}) \in A) \Rightarrow  (N_{nij} \odot N_{nrs}\not \leq G))"
\]
where $z$ is an index for the nine $3\times 3$ subgrids of the sudoku, $cells(z)$ is the set of square cells in the subgrid $z$, $A$ is the set of pairs of constants representing incompatible choices, i.e. repeated numbers in the same column, row or subgrid. Finally, $I$ is the set of constants representing fixed numbers in the sudoku grid at the beginning of the game. 

The interpretation constants are $Q = \{ N_{nij}, G \}$, the interpretation sentence is:
\[ 
{\bf{\varphi}} = {\bf{\alpha}}  \wedge {\bf{\beta}} \wedge {\bf{\gamma}}  \wedge {\bf{\epsilon}}  \wedge {\bf{\pi}}, 
\]
and the scope sentence is:
\[ 
\Xi = {\bf{\alpha}}.
\]
We can determine the position of the numbers in the grid by using:
\[ 
N_{nij} \leq G  \, \Leftrightarrow \,  \text{the sudoku has a number n at position (i, j)}
\]
The separator set is:
\[ 
\Gamma =   \{ (N_{nij} \leq G) :  1 \leq n \leq 9, 1 \leq i \leq 9, 1 \leq j \leq 9  \} . 
\]
\bigskip

\underline{Embedding}:  We have interpretation constants $Q = \{ N_{nij}, G \}$. Consider the embedding constants $C =\{ N_{nij}, G, W_{nij}, R_{ni}, C_{nj}, Z_{nz}, g_{nij} \}$ where $W_{nij}$ plays the role of a negation of the $N_{nij}$ in a similar way that $E_{ij}$ represents the negation of $Q_{ij}$ in the Queens Completion problem. The set of constants $R_{ni}$ represent the presence of a number $n$ in a row $i$,  $C_{nj}$ represent the presence of a number $n$ in the column $j$ and $Z_{nz}$ represent the presence of a number $n$ in the subgrid $z$; these three sets have $9^2$ constants each. Finally the set of $9^3$ context constants $g_{nij}$ are used to make the embedding explicit. In total we need $\vert C \vert = 3 \times 9^3 + 3 \times 9^2 + 1 = 2431$ constants. 

We start by encoding the meaning of $W_{nij}$. We can use the set $WN$ of $2\times 9^3$ atomic sentences:
\[ 
\odot_{m:m\not=n} W_{mij}  \leq N_{nij} \odot G
\]\[ 
N_{nij}  \leq\odot_{m:m\not=n} W_{mij}  \odot G.
\]
Now it is possible to encode ${\bf{\beta}}$ with the set $NWG$ of $9^3$ negative atomic sentences:
\[ 
N_{nij} \odot W_{nij}  \not\leq G,
\]
instead of directly using the $8 \times 9^3$ negative atomic sentences of  ${\bf{\beta}}$.

The encoding of ${\bf{\gamma}}$ can be carried out with the set $RCZ$ of $9^3$ atomic sentences:
\[ 
R_{ni} \odot C_{nj}  \odot Z_{n z(i,j)}   \leq  N_{nij},
\]
where $z(i,j)$ is a function that maps a position $(i,j)$ of the grid to its subgrid index with $1 \leq z \leq 9$. We also use the set $NRCZ$ of $3 \times 9^2$ negated atomic sentences:
\[ 
R_{mx}  \not<   \odot_{nij: (n, i) \not= (m, x) }  N_{nij},
\]\[ 
C_{my}  \not<   \odot_{nij: (n, j) \not= (m, y) }  N_{nij},
\]\[ 
Z_{mk}  \not<   \odot_{nij: (n, z(i,j)) \not= (m, k) }  N_{nij},
\]
and the atomic sentence:
\[ 
\delta =  "\odot_{nx}  R_{nx}  \odot_{ny}  C_{ny}  \odot_{nz}  Z_{nz}   \leq  G"
\]
that sets the goal of the game.

${\bf{\epsilon}}$ is encoded as $\vert I \vert$ atomic sentences, the block $IG$, of the form $(N_{nij}  \leq  G)$, one for each of the numbers provided as hints and fixed in the sudoku grid. 

To encode ${\bf{\pi}}$ we use the block $WNG$ of $4 \times 9^3$ atomic sentences: 
\[ 
\odot_{rs} \{ W_{nrs} : (N_{nrs},N_{nij})  \in A  \} \leq N_{nij} \odot G 
\]\[ 
N_{nij} \leq  \odot_{s} \{ W_{nis}: s \not= j  \}  \odot G
\]\[ 
N_{nij} \leq  \odot_{r} \{ W_{nrj}: r \not= i  \}  \odot G
\]\[ 
N_{nij} \leq  \odot_{rs} \{ W_{nrs}: z(rs) = z(ik) \wedge (r,s) \not= (i,j) \}  \odot G,
\]
and also the set  $WG$ of $9^2$ negative atomic sentences:
\[ 
\odot_{n} W_{nij} \not\leq G.
\]

Finally we use the set $NGG$ of $9^3$ atomic sentences:
\[ 
N_{nij} \leq G \odot g_{nij} 
\]
to make the embedding explicit.

The embedding set is:
\[ 
R^{+} = WN \cup RZC \cup IG \cup WNG  \cup NGG  \cup \{ \delta \}
\]\[ 
R^{-} = NWG \cup NRCZ \cup WG 
\]

The embedding is tight and explicit but not concise.

The calculation of the complete atomization using full crossing is computationally heavy. To get some insight we can reduce the dimension of the sudoku grid to $4\times 4$ with 4 subgrids. For a sudoku with no numbers set at the beginning of the game we can assign numbers $1, 2, 3 ,4$ to one of the subgrids and still find every solution up to a permutation of the numbers. We obtain the following atomization:    
\begin{table}[!htbp]
\centering
\begin{tabular}{cc}
\multicolumn{2}{c}{$F_C(R^{+})$} \\
 \hline
Size                     & \#        \\
\hline
1                        & 65                      \\
2                        & 128                     \\
6                        & 48                      \\
112                      & 12 (\#Solutions)        \\
143                      & 32                      \\
144                      & 16                      \\
149                      & 48                      \\
150                      & 48                      \\
151                      & 16                      
\end{tabular}
\end{table}

Albeit computationally expensive full crossing could be used to calculate $F_C(R^{+})$ and, hence, find every solution for Sudoku problem. Fortunately, if we use an sparse version of full crossing, the sparse crossing algorithm of \cite{AMLPresentation}, this embedding works very well in practice and can be used to resolve sudokus. Even more, if no numbers are set in the grid sparse crossing is capable of generating complete sudoku grids with this embedding.

The freest solution models $F_S$ have every atom of $F_C(R^{+})$ with size smaller than 112, one atom of size 112 corresponding to the solution $S$ and no more atoms, i.e. 242 atoms in total. $F = \cup_s F_s$ has 253 atoms corresponding with every atom of $F_C(R^{+})$ smaller or equal to 112, which means the embedding is not strongly complete. Each atom of size 112 can be mapped to one of the 12 solutions of the game, up to permutations, as we have fixed the numbers in one of the subgrids.  The embedding can be made strongly complete  by adding a set of $80$ positive atomic sentences with the form:
\[
W_{ab} \leq N_{cd} \odot W_{ef} \odot G
\] \[
W_{ij} \leq W_{kl} \odot W_{mn} \odot G
\]
which, once added to the embedding, have the sole effect of removing every atom of size larger than 112 from the atomization. This set of sentences may not be unique; other sets of sentences may also work to make the embedding strictly complete. The $80$ atomic sentences encode rules of the game that can be used to resolve the sudoku and can be derived from the atoms of size larger than 112. The sparse crossing algorithm \cite{AMLPresentation} uses these larger atoms as well as redundant atoms to (via pinning terms, see \cite{AMLPresentation}) discover the solutions of the game.

\section{Hamiltonian Path} \label{sec:hamiltonian_path}

We consider here the problem of undirected Hamiltonian paths and build embeddings for this problem that can be easily extended to Hamiltonian cycles or directed Hamiltonian paths. Given a graph of $N$ vertices $\{v_1,  v_2,..., v_N\}$ and $E$ edges $\{e_1,  e_2,..., e_E\}$ we want to find a path $P$ that passes through every vertex without repeating any. 

We start by defining constants $e_1,  e_2,..., e_E$ for the edges and a constant $P$ for the path, so a solution that satisfies $e_i \leq P $ is interpreted as a path that passes through edge $e_i$. 

Consider the interpretation sentence:
\[ 
\varphi = ``\exists \,  v^1, v^2,...,v^N \,  \exists \,  e^1, e^2,...,e^{N - 1} \, \alpha \wedge \beta"
\]
with
\[ 
\alpha =  ``\forall k(  (v^k,v^{k+1} \in V(e^k)) \wedge \forall j (j \not=k \Rightarrow  v^k \not= v^j))"
\]\[ 
\beta =  ``\forall j ((e_j \leq P) \Leftrightarrow \exists k (e_j = e^k))"
\]
where $v^i$ and $e^k$ are variables bounded to vertexes and edges respectively and $V(e)$ is the set with the two vertexes of edge $e$. Super-indexes enumerate positions along the path while sub-indexes identify the different edges and vertexes. 

The interpretation constants are $Q = \{ E_j, P \}$ and we can chose the scope sentence to be:
\[
 \Xi = ``\wedge_i^N \vee_j  \{ e_j \leq P : v_i \in V(e_j) \}"
\]
that satisfies $\varphi \Rightarrow \Xi$. The separator set for this scope sentence is:
\[ 
\Gamma =   \{ (e_j \leq P) :  1 \leq j \leq E \}. 
\]

\underline{First embedding (almost)}: We are going to build first an embedding for a different but closely related problem, the problem of finding a path, perhaps disconnected, that joins the N vertexes of the graph. 

For the disconnected path problem we can use the same interpretation constants, scope sentence and separator set than we just defined for the Hamiltonian path but, with the help of:
\[ 
\gamma =  ``\forall i, j,k ((i \not= j \not= k \not= i) \wedge  (e_i, e_j, e_k \leq P)) \Rightarrow \not\exists v (v \in V(e_i)\cap V(e_j)  \cap V(e_k))"
\]
we define the new interpretation sentence:
\[ 
\varphi = \gamma \wedge \Xi.
\]

To build an embedding for this problem we define constants $v_1,  v_2,..., v_N$ for the vertexes. With these constants we can describe the graph with a block VVE of $E$ positive atomic sentences:
\[ 
v_{L(k)} \odot v_{R(k)} \leq e_k \odot U 
\]
where $v_{L(k)}$ and $v_{R(k)}$ are the two vertexes of edge $e_k$ and $U$ is an auxiliary constant used to make the embedding concise. We complement the description of the graph with the block VEVU of $V$ negative atomic sentences:
\[ 
v_i \not\leq \odot_k \{ e_k :  v_i \not\in V(e_k) \} \odot_j \{ v_j :  j \not= i \}   \odot U. 
\]
Let $E(i)$ be the set of edges that have vertex $v_i$ in one of their ends. For each vertex $v_i$, consider the $m_i = \vert E(i) \vert (\vert E(i) \vert - 1) (\vert E(i) \vert - 2) / 3!$ sets of three edges connected to $v_i$. To describe a path we can define a block B3E of $\sum_i m_i$ positive sentences:\[ 
x \leq e_a \odot e_b \odot e_c  
\] 
where $e_a, e_b, e_c$ are three different edges connected to the same vertex and $x$ is an extra constant. To state the problem we use: \[ 
\delta =  ``\odot_i^V  v_i \leq P \odot U"
\]
that specifies the path passes through all the vertexes and \[ 
\epsilon = ``x \not\leq P \odot U"
\]
to make sure the path does not cross a vertex more than once. 

To make the embedding explicit (and also tight) we use the block G of $E$ positive atomic sentences:\[ 
e_k \leq P \odot g_k
\]
where $g_1,  g_2,..., g_E$ are $E$ context constants.

We have used the interpretation constants $Q = \{ E_j, P \}$ and the embedding constants $C = \{ E_j, P, v_i, g_j, x, U\}$, with a total of $2E + N + 3$ embedding constants. The embedding set is:
\[ 
R^{+} = VVE \cup B3E \cup G \cup \{  \delta \}
\]\[ 
R^{-} = VEVU \cup \{ \epsilon \}.
\]
In total, $1 + 2E + \sum_i m_i$ positive atomic sentences and $N + 1$ negative atomic sentences.

\bigskip
\bigskip

We have chosen to encode the fact that a path does not cross itself by using the set B3E of positive atomic sentences plus the negative atomic sentence $\epsilon$. Instead, we could get rid of constant $x$ and directly use $\sum_i m_i$ negative sentences: \[ 
e_a \odot e_b \odot e_c \not\leq P \odot U.
\] 
However, that is a worse option for this embedding. If we use the block B3E of positive atomic sentences we can find solution models with at most $\vert R^{-} \vert = V + 1$ atoms, while an embedding using negative sentences could require as many as $1 + V + \sum_i m_i$ atoms. It is clear that if we use negative sentences the embedding has the geometry of the graph encoded in the atoms but not the path. If we use positive sentences the atoms encode both, the requirements of a well-defined path and the geometry of the graph. 

Since the embedding is tight we will be able to find a freest solution model for the suitable path (or paths) atomized with non redundant atoms of $F_C(R^{+})$. In addition, since the scope sentence $\Xi$ is a conjunction of disjunctions, theorem \ref{positiveScopeTheroem} tells us that for each solution path there is an irreducible model for the solution atomized with non-redundant atoms of $F_C(R^{+})$.

\bigskip

\underline{Second embedding (almost)}: We can extend our first embedding to get a bit closer to the Hamiltonian path. The ingredient missing in the previous embedding is the continuity of the path. Since we have a graph with a fixed number of vertexes $N$, we can use first order logic to express continuity without problems as we did in the sentence $\alpha$ (such a thing is not possible if $N$ not specified).

To express continuity we are going to use $N$ additional constants $\{w_1, w_2,..., w_N\}$ for the sequence of $N$ vertexes along the path and $N - 1$ additional constants $\{p_1, p_2,..., p_{N - 1}\}$ for the edges along the path.   

Define the block WWp of $N - 1$ positive atomic sentences:\[
w_i \odot w_{i + 1} \leq p_i \odot U,
\]
and the block WpV of $N$ negative atomic sentences: \[ 
w_i \not\leq \odot_k \{ p_k :  i \not= k \, \wedge \, i \not= k - 1 \} \odot_j \{ v_j :  j \not= i \} \odot U. 
\]
We should also add: \[ 
P = \odot^{N-1}_i p_i,
\]
that form a block Pp with two positive atomic sentences, and:\[ 
\odot^{N}_i v_i = \odot^{N}_i w_i
\]
that also form a block VW of two positive atomic sentences.

The interpretation constants are the same as before and the embedding constants are now $C = \{ E_j, P, p_i, w_i, v_i, g_j, x, U\}$, with a total of $2E + 3N + 2$ embedding constants. The embedding set is:
\[ 
R^{+} = VVE \cup B3E \cup G \cup \{  \delta \} \cup WWp \cup Pp \cup VW
\]\[ 
R^{-} = VEVU \cup \{ \epsilon \} \cup WpV.
\]
In total, $4 + N + 2E + \sum_i m_i$ positive atomic sentences and $2N + 1$ negative atomic sentences.

This embedding is closer to an embedding for a Hamiltonian path but it is still an embedding for a potentially disconnected path that passes through the $N$ vertexes. 

\bigskip

\underline{Third embedding}: what fails in the second embedding is that there is no guarantee that constants $w_i$ are mapped to vertexes neither constants $p_j$ necessarily map to edges. To build an actual embedding for a Hamiltonian path we have to ensure such mapping which, as far as we know, can only be done by modifying the interpretation.

Instead of using $e_i \leq P$ to indicate that edge $e_i$ is in the path, we can use $e_k = p_l$ to assign edge $e_k$ not only to the path but also to a specific position $p_l$ along the path. In addition, if we want to make sure the sentences WWp, Pp, VW and WpV do indeed enforce a continuous path, also vertexes should each be assigned to a specific position in the path: $v_i = w_j$

The new interpretation requires the interpretation constants: \[
Q = \{ v_i, w_j, e_k, p_l \}
\]
The scope sentence can be chosen to be: \[
\Xi = ``(\wedge^{N}_i \vee^{N}_j (v_i = w_j)) \wedge (\wedge^{N-1}_l \vee^{E}_k (e_k = p_l))",
\] 
that lead to a separator set:
\[ 
\Gamma =  \begin{cases}
      (v_j \leq w_i), (w_j \leq v_i) :  1 \leq i,j \leq N \\
      (e_k \leq p_l), (p_l \leq e_k) :  1 \leq l \leq N - 1  \,\,\,\,\, 1 \leq k \leq E \\
    \end{cases}   
\]
The interpretation sentence has to be modified accordingly by replacing $\beta$ with:
\[
\beta' =  ``\forall l \exists k ((e_l = p_k) \Leftrightarrow  (e_l = e^k)) \wedge \forall j \exists i ((v_j = w_i) \Leftrightarrow  (v_j = v^i))"
\]
in the interpretation sentence $\varphi = ``\exists \,  v^1, v^2,...,v^N \,  \exists \,  e^1, e^2,...,e^{N - 1} \, \alpha \wedge \beta'\,\, "$.

With this change of interpretation it is possible to use the same atomic sentences of the second embedding. However, the block G introduced to make the first and second embeddings explicit should be replaced by a new set of sentences according to the change in the separator set. The following $2N^2 + 2E(N - 1)$ atomic sentences: 
\[ 
e_l \leq p_k \odot u_{lk0}
\]\[ 
p_k \leq e_l \odot u_{lk1}
\]\[ 
v_i \leq w_j \odot z_{ij0}
\]\[ 
w_j \leq v_i \odot z_{ij1}
\]
where the $u's$ and $z's$ are two families of context constants, make the embedding explicit. It is possible (and convenient) to use theorem  \ref{contextConstantReuse} to reduce the number of context constants to $2E(N - 1)$. Consider the block H with $12\times E(N - 1)$ atomic sentences:\[ 
h_{lk0} \odot e_l = p_k \odot h_{lk0}
\]\[ 
h_{lk0} \odot v_{L(e_l)} = w_k \odot h_{lk0}
\]\[ 
h_{lk0} \odot v_{R(e_l)} = w_{k + 1} \odot h_{lk0}
\]\[ 
h_{lk1} \odot e_l = p_k \odot h_{lk1}
\]\[ 
h_{lk1} \odot v_{R(e_l)} = w_k \odot h_{lk1}
\]\[ 
h_{lk1} \odot v_{L(e_l)} = w_{k + 1} \odot h_{lk1}.
\]
Each context constant is reused here 6 times and each edge $e_l$ is associated to a position $p_k$ twice, corresponding with the two ways a path can traverse through an edge.  
Finally the embedding for the Hamiltonian path uses embedding constants:\[
C = \{ E_j, P, p_i, w_i, v_i, h_j, x, U\},
\] with a total of $E + N^2 + 2N + 2$ embedding constants and the embedding set is:
\[
R^{+} = VVE \cup B3E \cup H \cup \{  \delta \} \cup WWp \cup Pp \cup VW
\]\[ 
R^{-} = VEVU \cup \{ \epsilon \} \cup WpV.
\]
The embedding has $4 + N + E + \sum_i m_i + 12\times E(N - 1)$ positive atomic sentences and $2N + 1$ negative atomic sentences.

\section{Prerequisites} \label{prerequisites}

\begin{theorem}  \label{restrictionLemma} 
Let $M$ be an atomized semilattice model over a set of constants $C$ and let $Q$ be a subset of $C$. Let the restriction of $M$ to $Q$, written $M^{|Q}$ be the subalgebra of $M$ spawned by the terms and constants over $Q$. Then $M^{|Q}$ is the model spawned by the restriction to $Q$ of the atoms of $M$. The restriction of the atoms of a model to a subset of its constants produces the same model for every atomization. 
\end{theorem}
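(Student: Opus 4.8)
The plan is to reduce the claimed equality of models to an equality of order relations on terms over $Q$, using the atomic characterization of the semilattice order. Recall from \cite{AtomizedSL} that in an atomized model the atoms behave as join-prime elements: for any term $t$ (an idempotent sum of constants) and any atom $\phi$ of $M$ one has $\phi \leq t$ if and only if $U^{c}(\phi)$ contains at least one of the constants occurring in $t$, and for two terms $s,t$ one has $s \leq t$ in $M$ if and only if the set of atoms below $s$ is contained in the set of atoms below $t$. Write $\mathrm{const}(t)$ for the set of constants occurring in $t$. Since the subalgebra $M^{|Q}$ and the model $N$ spawned by the restricted atoms $\{\phi^{|Q}\}$ are both join-semilattices generated by the images of the constants of $Q$, it suffices to show that the two induced orders on terms over $Q$ coincide.

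First I would record the one computation that drives everything. Let $t$ be a term over $Q$, so $\mathrm{const}(t) = S \subseteq Q$, and let $\phi$ be any atom of $M$. Because $S \subseteq Q$,
\[
U^{c}(\phi) \cap S = \bigl(U^{c}(\phi) \cap Q\bigr) \cap S = U^{c}(\phi^{|Q}) \cap S ,
\]
so $\phi \leq t$ in $M$ if and only if $\phi^{|Q} \leq t$ in $N$. Letting $\phi$ range over all atoms of $M$, the map $\phi \mapsto \phi^{|Q}$ therefore carries the atoms of $M$ below $s$ exactly onto the atoms of $N$ below $s$, for every term $s$ over $Q$. Combining this with the atomic characterization of the order, for terms $s,t$ over $Q$ the condition ``every atom below $s$ is below $t$'' reads identically in $M$ and in $N$; hence $s \leq t$ in $M^{|Q}$ if and only if $s \leq t$ in $N$. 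As both models are generated by $Q$ with the same join and order structure, the identity on $Q$ extends to an isomorphism and $M^{|Q} = N$.

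Two pieces of bookkeeping must be dispatched, neither of which affects the order on $Q$-terms. Distinct atoms $\phi,\phi'$ may restrict to the same atom when $U^{c}(\phi)\cap Q = U^{c}(\phi')\cap Q$; collapsing such duplicates does not change the model they spawn. An atom with $U^{c}(\phi)\cap Q = \emptyset$ restricts to an atom that lies below no term over $Q$, so it never occurs among the atoms below such a term and is irrelevant to the comparisons above. The genuine obstacle is not this bookkeeping but the atomic (join-prime) characterization of the order invoked at the outset: I would not reprove it here but cite it as a foundational property of atomized models from \cite{AtomizedSL}, since the entire argument rests on the identity $\phi \leq t \Leftrightarrow U^{c}(\phi)\cap \mathrm{const}(t)\neq\emptyset$.

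Finally, for the atomization-independence claim, I would observe that $M^{|Q}$ is the subalgebra of $M$ generated by the constants $Q$ and is therefore determined by the operations of $M$ alone, with no reference to any particular atomization. The previous paragraphs show that the model spawned by the restricted atoms of any given atomization equals this intrinsic subalgebra $M^{|Q}$; consequently two different atomizations of $M$ restrict to one and the same model, which is exactly the asserted well-definedness.
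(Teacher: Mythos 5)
Your proof is correct, but note that the paper itself does not prove this statement at all: its ``proof'' is a citation to Theorem 22 of the companion text \emph{Finite Atomized Semilattices} \cite{AtomizedSL}, so there is no in-paper argument to compare yours against step by step. What you have written is a legitimate self-contained derivation, and it rests on exactly the right foundation: the atomic characterization of the order ($s \leq t$ in $M$ iff every atom below $s$ is below $t$, together with $\phi < t \Leftrightarrow U^{c}(\phi) \cap \mathrm{const}(t) \neq \emptyset$), which is indeed a core property of the formalism and is fair to import from \cite{AtomizedSL} rather than reprove. Your key observation --- that for a term $t$ over $Q$ one has $U^{c}(\phi) \cap \mathrm{const}(t) = U^{c}(\phi^{|Q}) \cap \mathrm{const}(t)$, so membership of atoms in lower segments of $Q$-terms is unchanged by restriction --- does all the work, and the passage from agreement of orders on $Q$-terms to an isomorphism of the two $Q$-generated semilattices is sound. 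Your two bookkeeping remarks are also handled correctly, with one cosmetic caveat: in the paper's conventions an atom with $U^{c}(\phi) \cap Q = \emptyset$ simply has no restriction $\phi^{|Q}$ (see, e.g., the phrasing ``has a restriction $\phi^{|Q}$'' in Theorem \ref{restrictionToQExistsTheorem}), rather than restricting to an atom lying below nothing; your treatment is materially equivalent since such atoms participate in no comparison between $Q$-terms, but you should phrase it as omitting those atoms rather than restricting them. Finally, your argument for atomization-independence --- that the subalgebra $M^{|Q}$ is defined by the operations of $M$ alone, and every atomization's restricted atoms spawn that same intrinsic object --- is exactly the right way to get well-definedness for free.
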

\begin{proof}
See theorem 22 of Finite Atomized Semilattices for a proof.
\end{proof}
\bigskip

\begin{theorem}  \label{restrictionAndRedundacyLemma} Let $M$ be an atomized semilattice model over $C$ and let $Q \subset C$. For each non-redundant atom $\alpha$ of $M^{|Q}$ there is at least one non-redundant atom of $M$ that restricted to $Q$ is equal to $\alpha$.
\end{theorem}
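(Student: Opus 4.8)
The plan is to reduce the statement to a discriminant (witness) argument. First I would invoke Theorem \ref{restrictionLemma}, which says that $M^{|Q}$ is exactly the model spawned by the restricted atoms $\{\phi^{|Q} : \phi \in M\}$, where $U^{c}(\phi^{|Q}) = U^{c}(\phi)\cap Q$. Since a non-redundant atom of any spawned model must lie among its generators (it cannot be a union of other atoms), the given non-redundant atom $\alpha$ of $M^{|Q}$ already equals $\phi^{|Q}$ for \emph{some} atom $\phi$ of $M$. The whole content of the theorem is therefore that $\phi$ can be chosen \emph{non-redundant} in $M$, since a priori every $\phi$ restricting to $\alpha$ might be redundant.

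Next I would turn the non-redundancy of $\alpha$ into a concrete discriminant. Because $\alpha$ is non-redundant in $M^{|Q}$, its upper constant segment is not the intersection of the strictly larger ones, so there is a constant $c_0 \in Q\setminus U^{c}(\alpha)$ lying in $U^{c}(\beta)$ for every atom $\beta \neq \alpha$ of $M^{|Q}$ with $U^{c}(\alpha)\subseteq U^{c}(\beta)$. Writing $s_0 = \odot\, U^{c}(\alpha)$, this says precisely that $\alpha$ is the \emph{unique} atom of $M^{|Q}$ witnessing $s_0 \not\leq c_0$, i.e. the unique atom below $s_0$ but not below $c_0$.

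Then I would lift this inequality back to $M$. Both $s_0$ and $c_0$ are terms over $Q$, and by Definition \ref{restriction} the model $M^{|Q}$ is the subalgebra of $M$ spawned by $Q$, so the order between $Q$-terms is inherited; hence $s_0 \not\leq c_0$ already holds in $M$. Consequently some atom of $M$ witnesses it, and since the non-redundant atoms of $M$ alone determine its order relations (a redundant witness is a union of non-redundant atoms, at least one of which still lies below $s_0$ and not below $c_0$), I can take a \emph{non-redundant} atom $\phi$ of $M$ with $U^{c}(\alpha)\subseteq U^{c}(\phi)$ and $c_0\notin U^{c}(\phi)$. Its restriction $\phi^{|Q}$ then satisfies $U^{c}(\alpha)\subseteq U^{c}(\phi^{|Q})$ and $c_0\notin U^{c}(\phi^{|Q})$ and is an atom of $M^{|Q}$; by the uniqueness established above it must coincide with $\alpha$. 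This $\phi$ is the required non-redundant atom of $M$ restricting to $\alpha$.

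The step I expect to be most delicate is the reduction of an arbitrary witness in $M$ to a non-redundant one: it rests on the fact from \cite{AtomizedSL} that redundant atoms are unions of non-redundant atoms and so carry no discriminating power beyond what the non-redundant atoms already provide. I would also be careful that the constant $c_0$ extracted from the non-redundancy of $\alpha$ genuinely lies in $Q$ (it does, since every $U^{c}(\beta)$ for atoms of $M^{|Q}$ sits inside $Q$), and that the uniqueness in $M^{|Q}$ ranges over \emph{all} atoms, redundant ones included, so that $\phi^{|Q}$ is forced to equal $\alpha$ rather than merely to contain its upper constant segment.
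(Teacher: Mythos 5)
Your Step~2 inverts what non-redundancy means in this formalism, and the proof breaks there. An atom is redundant with a model when it is a union $\bigtriangledown$ of other atoms of the model, and $U^{c}(\phi \bigtriangledown \psi) = U^{c}(\phi) \cup U^{c}(\psi)$; so non-redundancy of $\alpha$ says that $U^{c}(\alpha)$ is \emph{not the union of the upper segments of atoms of $M^{|Q}$ contained in it} --- not that it fails to be the intersection of the strictly larger ones. Consequently the constant $c_0 \in Q \setminus U^{c}(\alpha)$ you extract need not exist, and the ``unique witness of $s_0 \not\leq c_0$'' claim is false. Concretely, take $M^{|Q} = [\phi_a, \phi_b, \phi_c] = F_Q(\emptyset)$ over $Q=\{a,b,c\}$ (this occurs as a restriction in the paper's trivial example) and $\alpha = \phi_c$: for $c_0 = a$ the redundant atom $\phi_{bc} = \phi_b \bigtriangledown \phi_c$ is also below $s_0 = c$ and not below $a$, and for $c_0 = b$ the atom $\phi_{ac}$ does the same, so no choice of $c_0$ makes $\alpha$ the unique discriminator --- and, as your own closing remark correctly insists, uniqueness must range over redundant atoms. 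A second, independent error sits in Step~3: in an atomized semilattice an atom lies below a sum $\odot\, U^{c}(\alpha)$ iff its upper segment \emph{meets} $U^{c}(\alpha)$ (lower atomic segments of joins are unions of lower atomic segments), so a witness $\phi$ of $s_0 \not\leq c_0$ only satisfies $U^{c}(\phi) \cap U^{c}(\alpha) \neq \emptyset$; your assertion $U^{c}(\alpha) \subseteq U^{c}(\phi)$, on which the identification $\phi^{|Q} = \alpha$ rests, does not follow.

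The discriminant fact you need is the mirror image of the one you used, and it is quoted verbatim inside the paper's proof of theorem \ref{restrictionToQExistsTheorem}(iii): a non-redundant atom $\alpha$ with $U^{c}(\alpha) \neq Q$ is the \emph{unique} discriminator of a duple $(c_0, t_0)$ with $c_0 \in U^{c}(\alpha)$ and $t_0 = \odot (Q - U^{c}(\alpha))$. With that orientation your scheme does go through: $c_0 \not\leq t_0$ lifts from the subalgebra $M^{|Q}$ to $M$, a non-redundant witness $\phi$ in $M$ satisfies $c_0 \in U^{c}(\phi)$ and $U^{c}(\phi) \cap Q \subseteq U^{c}(\alpha)$, so $\phi^{|Q}$ exists, is an atom of $M^{|Q}$, discriminates $(c_0,t_0)$ there, and by uniqueness equals $\alpha$ (the case $U^{c}(\alpha) = Q$ needs a separate word). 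But note that your Step~1 already contains a complete proof with no discriminant argument at all: theorem \ref{restrictionLemma} holds \emph{for every atomization}, so apply it to the atomization of $M$ by its non-redundant atoms; the generators of $M^{|Q}$ are then exactly the restrictions of non-redundant atoms of $M$, and since a non-redundant atom of a spawned model must lie among the generators (your own observation), $\alpha = \phi^{|Q}$ for some non-redundant $\phi$ of $M$. The paper itself offers no internal proof to compare against --- it defers to theorem 23 of \cite{AtomizedSL} --- so the issue is purely the correctness of your intermediate claims, not a divergence of routes.
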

\begin{proof}
See theorem 23 of Finite Atomized Semilattices for a proof.
\end{proof}
\bigskip

\begin{theorem}  \label{sumOfModelsWhenNoIntersection} 
Let $M$ and $N$ be two semilattice models over the constants $C_M$ and $C_N$ respectively, such $C_M \cap C_N = \emptyset$ and $M = [A]$ and $N = [B]$ where $A$ and $B$ are sets of atoms. The model $M \oplus N$ is atomized by $A \cup B$, i.e. $M \oplus N  \approx  M + N$.
\end{theorem}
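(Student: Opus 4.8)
The plan is to show that the two models joined by $\approx$ carry the same atoms, namely $A\cup B$. By definition $M+N=[A\cup B]$ is the model spawned by the union of the two atom sets, whereas $M\oplus N$ is the direct sum of $M$ and $N$ regarded as models over the disjoint constant set $C_M\cup C_N$. I would therefore argue in two stages: first that $A\cup B$ is a legitimate atomization, so that $[A\cup B]$ is a well-defined model, and then that the model it spawns satisfies the defining property of the direct sum $M\oplus N$. The single structural fact driving the whole argument is disjointness: since $A$ atomizes $M$ over $C_M$ and $B$ atomizes $N$ over $C_N$, every atom $\phi\in A$ has $U^{c}(\phi)\subseteq C_M$ and every atom $\psi\in B$ has $U^{c}(\psi)\subseteq C_N$, so $U^{c}(\phi)\cap U^{c}(\psi)\subseteq C_M\cap C_N=\emptyset$.

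First I would verify the atomization axioms of \cite{AtomizedSL} for $A\cup B$. The instances that involve only atoms and constants of $C_M$ hold because $A$ atomizes $M$, and those confined to $C_N$ hold because $B$ atomizes $N$. Every remaining ``mixed'' instance — one that would relate an $A$-atom to a $C_N$-constant, or a $B$-atom to a $C_M$-constant — is vacuous, because the relevant upper-segment intersection is empty by disjointness. The same observation shows that no atom of $A$ is redundant with the atoms of $B$ and vice versa (two atoms with disjoint upper segments cannot absorb one another), so forming $[A\cup B]$ from $A\cup B$ merges no atoms and forces no new ones.

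Next I would pin down $M\oplus N$ through its restrictions. By Theorem \ref{restrictionLemma} the restriction $[A\cup B]^{|C_M}$ is computed by replacing each atom $\chi$ with one of upper segment $U^{c}(\chi)\cap C_M$; this annihilates every $B$-atom (its restricted upper segment is empty) and leaves every $A$-atom unchanged, so $[A\cup B]^{|C_M}=[A]=M$, and symmetrically $[A\cup B]^{|C_N}=N$. To conclude $[A\cup B]\approx M\oplus N$ I would then check that $[A\cup B]$ realises the universal property of the direct sum: for any term $t$ mixing constants of both sides, the set of atoms below $t$ splits as the $A$-atoms below the $C_M$-part of $t$ together with the $B$-atoms below the $C_N$-part, so every order relation holding in $[A\cup B]$ is already forced inside $M$ or inside $N$ and nothing is added across the two. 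This is exactly what the direct sum of $M$ and $N$ over disjoint constants must satisfy, giving $M\oplus N\approx[A\cup B]=M+N$.

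The step I expect to be the main obstacle is the bookkeeping around the full list of atomization axioms: I must ensure that every axiom of \cite{AtomizedSL} quantifying over all constants of the ambient model — for instance the requirement of a discriminating atom for each constant, or the condition tied to the external atom $\ominus_c$ in the sixth axiom — is genuinely met by $A\cup B$ and not merely by $A$ and by $B$ in isolation. Each of these should collapse to the vacuous mixed case via disjointness, but that reduction is precisely where the argument has to be carried out explicitly rather than asserted.
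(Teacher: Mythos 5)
The paper never actually proves Theorem \ref{sumOfModelsWhenNoIntersection}: it is listed as a prerequisite and its ``proof'' is the citation to Theorem 30 of Finite Atomized Semilattices \cite{AtomizedSL}. So there is no in-paper argument to measure yours against; your text has to stand as a self-contained proof, and in outline it does. The load-bearing observation is the correct one: disjointness forces $U^{c}(\phi) \subseteq C_M$ for every $\phi \in A$ and $U^{c}(\psi) \subseteq C_N$ for every $\psi \in B$, so for any term $t$ over $C_M \cup C_N$, written $t = t^{M} \odot t^{N}$ with $t^{M}$ over $C_M$ and $t^{N}$ over $C_N$ (one part possibly absent), the atoms of $A \cup B$ below $t$ are exactly the $A$-atoms below $t^{M}$ together with the $B$-atoms below $t^{N}$. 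Hence $[A \cup B] \models (t_1 \leq t_2)$ if and only if $M \models (t_1^{M} \leq t_2^{M})$ and $N \models (t_1^{N} \leq t_2^{N})$, with the obvious conventions when a part is absent. Your worry about axiom bookkeeping is also smaller than you fear: a set of atoms spawns a well-defined model as soon as every constant has some atom below it (this is what $\ominus$ repairs in the sixth axiom), and here that is inherited separately from $A$ and $B$, since every constant of $C_M$ already has an $A$-atom below it and every constant of $C_N$ a $B$-atom.

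The step you must not leave as stated is the appeal to ``the universal property of the direct sum'': no such property is stated in this paper, and your proof nowhere pins down what $M \oplus N$ means, yet that identification is the entire content of the theorem, so as written it is asserted rather than proved. To close it, anchor the argument in an actual definition of $\oplus$; with the natural one — $M \oplus N$ is the freest semilattice over $C_M \cup C_N$ whose positive theory contains $Th_0^{+}(M) \cup Th_0^{+}(N)$ — two facts finish the proof. First, $[A \cup B] \models Th_0^{+}(M) \cup Th_0^{+}(N)$, which follows from your restriction computation $[A \cup B]^{|C_M} = M$ and $[A \cup B]^{|C_N} = N$ together with the fact that a model and its restriction agree on duples over the restricted constants. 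Second, every positive duple satisfied by $[A \cup B]$ is already entailed by $Th_0^{+}(M) \cup Th_0^{+}(N)$: by the componentwise characterization above, $[A \cup B] \models (t_1 \leq t_2)$ gives $M \models (t_1^{M} \leq t_2^{M})$ and $N \models (t_1^{N} \leq t_2^{N})$, and these two duples entail $t_1 = t_1^{M} \odot t_1^{N} \leq t_2^{M} \odot t_2^{N} = t_2$ in any semilattice. Together these identify $[A \cup B]$ with the freest model of the joint theory, i.e. with $M \oplus N$, which is the statement.
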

\begin{proof}
See theorem 30 of Finite Atomized Semilattices for a proof.
\end{proof}
\bigskip

We are also going to use the following result:

\begin{theorem} \label{segregationTheorem}
Let $R^{+}$ be a set of positive duples and $R^{-}$ a set of negative duples. \\
i) If $\,R^{+} \,\cup\, R^{-} \Rightarrow r^{+}$ for some duple $r$ then, either $R^{+} \cup\, R^{-} \cup\, \{r^{+}\}$ has no model or $R^{+}$ alone implies $r^{+}$, i.e. $\,R^{+} \Rightarrow r^{+}$. \\
ii) If $\,R^{+} \,\cup\, R^{-} \Rightarrow s^{-}$ for some duple $s$ then there is at least one duple $t \in R^{-}$ such that $\,R^{+} \cup\, \{t^{-}\} \Rightarrow s^{-}$.
\end{theorem}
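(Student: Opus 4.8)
The plan is to reduce both parts to two facts about the atomized representation. The first is the freest-model characterization: $R^{+} \Rightarrow p^{+}$ holds exactly when $F_{C}(R^{+}) \models p^{+}$, and a negative duple $\nu^{-}$ is satisfiable together with $R^{+}$ exactly when $F_{C}(R^{+}) \models \nu^{-}$. The second is the linearity of the sum recalled in the introduction (and proved in \cite{AtomizedSL}): because every atom keeps its upper constant segment $U^{c}(\phi)$ inside $M + N$ and atoms are join-prime, a positive duple satisfied by both $M$ and $N$ is satisfied by $M + N$, while a negative duple witnessed by an atom of $M$ stays witnessed in $M + N$. Phrased differently, positive duples are preserved by sums, and a satisfied negative duple survives the addition of further atoms; these are the two monotonicity properties I would lean on throughout.

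For part (i), I would first dispose of the trivial disjunct: if $R^{+} \cup R^{-}$ has no model then neither does $R^{+} \cup R^{-} \cup \{r^{+}\}$, so assume a model $M \models R^{+} \cup R^{-}$ exists (and note $M \models r^{+}$ by the hypothesis). Arguing by contradiction, suppose $R^{+} \not\Rightarrow r^{+}$ and write $r^{+} = (\sigma \le \tau)$. Then $F = F_{C}(R^{+})$ violates $r^{+}$, so it carries an atom $\psi$ whose segment $U^{c}(\psi)$ meets the constants of $\sigma$ but none of $\tau$. The key move is to form $M + F$: both summands model $R^{+}$, so $M + F \models R^{+}$; each duple of $R^{-}$ has a witness in $M$ that survives, so $M + F \models R^{-}$; and $\psi$ witnesses $\sigma \not\le \tau$, so $M + F \not\models r^{+}$. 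This contradicts $R^{+} \cup R^{-} \Rightarrow r^{+}$, forcing $R^{+} \Rightarrow r^{+}$.

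For part (ii) I would again argue by contradiction, assuming that for every $t \in R^{-}$ one has $R^{+} \cup \{t^{-}\} \not\Rightarrow s^{-}$. Writing $s^{-} = (\sigma \not\le \tau)$, this yields for each such $t$ a model $M_{t} \models R^{+} \cup \{t^{-}\}$ with $M_{t} \models (\sigma \le \tau)$. The plan is to take the sum $M = \sum_{t \in R^{-}} M_{t}$: every summand models $R^{+}$ so $M \models R^{+}$; each $t^{-}$ keeps a witness coming from $M_{t}$ so $M \models R^{-}$; and since every summand satisfies the positive duple $\sigma \le \tau$, so does $M$, whence $M \not\models s^{-}$. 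This contradicts $R^{+} \cup R^{-} \Rightarrow s^{-}$. The degenerate case $R^{-} = \emptyset$ I would treat separately by observing that the one-element collapsed model satisfies every positive duple while making $\sigma \le \tau$ hold, so $R^{+}$ alone can never entail a negative duple and the hypothesis is then vacuous. The inconsistent case is absorbed automatically: if $R^{+} \cup R^{-}$ has no model then $F_{C}(R^{+})$ violates some $t_{0}^{-} \in R^{-}$, so $R^{+} \Rightarrow$ the positive form of $t_{0}$, making $R^{+} \cup \{t_{0}^{-}\} \Rightarrow s^{-}$ vacuously and already defeating the ``for every $t$'' assumption.

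The main obstacle, and the one point I would justify with care rather than assert, is the behavior of atoms under the sum: that whether an atom lies below a term is determined solely by its own upper constant segment, via join-primality, so that positive implications are inherited by $M + N$ and negative witnesses are never destroyed when atoms are adjoined. Everything else is bookkeeping around the freest-model characterization; once this linearity is pinned down, both parts follow by the two sum constructions above.
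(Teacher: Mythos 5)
Your proposal is correct and follows essentially the same route as the paper's proof: both parts hinge on the sum-of-models (linearity) construction, with positive duples preserved by sums and negative witnesses surviving the addition of atoms, yielding a countermodel that contradicts the entailment hypothesis. The only cosmetic differences are that you instantiate the countermodel in part (i) as the freest model $F_C(R^{+})$ rather than an arbitrary model of $R^{+}\cup\{r^{-}\}$, and that you treat the degenerate cases ($R^{-}=\emptyset$, inconsistency) explicitly where the paper leaves them implicit.
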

\begin{proof}
Assume $R^{+} \cup R^{-} \cup\, \{r^{+}\}$ is consistent. There is a model $M \models R^{+} \cup\, R^{-} \cup\, \{r^{+}\}$. Now assume $\,R^{+} \not\Rightarrow r^{+}$, then there is a model $N \models R^{+} \cup\, \{r^{-}\}$. Without loss of generality we can assume $M$ and $N$ are atomized. Let the model $M + N$ be the model spawn by the union of the atoms of $M$ and the atoms of $N$. Model $M + N$ satisfy all the negative duples of $M$ and $N$, i.e. $M + N \models Th_0^{-}(M) \cup Th_0^{-}(N)$ and $M + N \models Th_0^{+}(M) \cap Th_0^{+}(N)$. It follows $M + N \models R^{+} \cup\, R^{-} \cup\, \{r^{-}\}$ contradicting $\,R^{+} \,\cup\, R^{-} \Rightarrow r^{+}$. This proves (i).\\
To prove (ii) assume that for every duple $t \in R^{-}$, $\,R^{+} \cup\, \{t^{-}\} \not\Rightarrow s^{-}$. This means that there are models $N_t$ satisfying $\forall t(t \in R^{-})\exists N_t (N_t \models \,R^{+} \cup\, \{t^{-}\} \cup\, \{s^{+}\})$. Consider the model $Q$ spawn by the atoms of all the models $N_t$. We have $Q \models \,R^{+} \cup\, \{R^{-}\} \cup\, \{s^{+}\}$ contradicting $R^{+} \,\cup\, R^{-} \Rightarrow s^{-}$.
\end{proof}
\bigskip

\begin{theorem} \label{inferenceTheorem}
Let duples $r=(r_L, r_R)$ and $s=(s_L,s_R)$ and a model $M$ such that $M \models s^{-}$,\\
i) $\,Th_0^{+}(M) \cup \{r^{+}\} \Rightarrow s^{+}$ if and only if $M \models ((r_R < s_R) \wedge (s_L < s_R \odot r_L))$, and \\
ii) $\,Th_0^{+}(M) \cup \{s^{-}\} \Rightarrow r^{-}$ if and only if $M \models ((r_R < s_R) \wedge (s_L < s_R \odot r_L))$.  
\end{theorem}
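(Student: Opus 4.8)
The plan is to prove (i) by turning the entailment into a statement about a single model, then to deduce (ii) from (i) by a purely logical duality, so that the only substantial work is understanding how full crossing of a single positive duple modifies the order relation.

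\emph{Reduction of (i) to one model.} Since $Th_0^{+}(M)$ and $r^{+}$ are both positive, the set $Th_0^{+}(M)\cup\{r^{+}\}$ has a freest model, namely $M' = \square_{r_L \le r_R} M$ obtained by full crossing the single duple $r^{+}$ onto $M$; by the full crossing construction recalled from \cite{AtomizedSL}, $M'$ is the freest model of $Th_0^{+}(M)\cup\{r^{+}\}$. Hence $Th_0^{+}(M)\cup\{r^{+}\}\vdash s^{+}$ holds if and only if $s_L \le s_R$ holds in $M'$, and (i) becomes a question purely about the order of $M'$.

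\emph{The two directions.} For the ``if'' direction I would argue in every model without using the internals of $M'$: if $M\models (r_R\le s_R)\wedge(s_L\le s_R\odot r_L)$ then both are positive duples, so they lie in $Th_0^{+}(M)$ and hold in every $N\models Th_0^{+}(M)\cup\{r^{+}\}$; there $r_L\le r_R\le s_R$ gives $s_R\odot r_L=s_R$, and then $s_L\le s_R\odot r_L=s_R$, i.e. $N\models s^{+}$. For the ``only if'' direction I would invoke the following key lemma on the order of $M'$:
\begin{quote}
For all terms $x,y$ over $C$, $x\le y$ holds in $M'=\square_{r_L\le r_R}M$ if and only if, in $M$, either $x\le y$, or both $r_R\le y$ and $x\le y\odot r_L$.
\end{quote}
Applying the lemma with $x=s_L$, $y=s_R$, the hypothesis $M\models s^{-}$ (that is, $s_L\not\le s_R$ in $M$) discards the first alternative, so $s_L\le s_R$ holds in $M'$ exactly when $M\models(r_R\le s_R)\wedge(s_L\le s_R\odot r_L)$, which is (i). Note that the hypothesis $M\models s^{-}$ is exactly what is needed to rule out the trivial case in which $s^{+}$ already holds in $M$.

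\emph{Part (ii) by duality.} I would then observe that $Th_0^{+}(M)\cup\{s^{-}\}\vdash r^{-}$ means that no model of $Th_0^{+}(M)\cup\{s^{-}\}$ satisfies $r^{+}$, i.e. $Th_0^{+}(M)\cup\{r^{+},s^{-}\}$ is unsatisfiable, which is the same as saying every model of $Th_0^{+}(M)\cup\{r^{+}\}$ satisfies $s^{+}$, i.e. $Th_0^{+}(M)\cup\{r^{+}\}\vdash s^{+}$. Thus the entailments in (i) and (ii) are literally equivalent, and (ii) inherits the characterisation established in (i).

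\emph{Main obstacle.} The crux is the lemma. I would prove it by realising $M'$ as the quotient of the semilattice $M$ by the congruence $\theta$ generated by identifying $r_L\odot r_R$ with $r_R$ (which is what enforcing $r_L\le r_R$ amounts to, since $r_R\le r_L\odot r_R$ always holds), so that $x\le y$ in $M'$ is equivalent to $x\odot y \mathrel{\theta} y$ in $M$. The work is to show that the explicit relation displayed in the lemma is precisely $\theta$: that it is reflexive, transitive and compatible with $\odot$ is routine, and the only delicate point is that iterating the second alternative produces nothing new, because $x\le w\odot r_L$ together with $w\le y\odot r_L$ yields $x\le y\odot r_L$, so a single application already generates the whole congruence. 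The remaining care is to confirm that this congruence-quotient coincides with the full-crossing model of \cite{AtomizedSL}, i.e. that full crossing a single positive duple introduces no order relations beyond those forced by the semilattice axioms together with $r^{+}$.
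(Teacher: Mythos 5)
Your proof is correct, and its skeleton matches the paper's: both reduce (ii) to (i) by the same logical duality, both dispose of the right-to-left direction by the same two-line argument inside an arbitrary model of $Th_0^{+}(M)\cup\{r^{+}\}$, and both test the left-to-right direction in the freest model of $Th_0^{+}(M)\cup\{r^{+}\}$. Where you genuinely diverge is in how that last, crucial step is discharged. The paper outsources it: it cites the theorem of \cite{AtomizedSL} that full crossing produces the freest model, and then quotes, from inside that theorem's proof, the fact that any positive duple $s^{+}$ newly created by crossing $r$ satisfies $M \models ((r_R < s_R) \wedge (s_L < s_R \odot r_L))$. You instead isolate this fact as a self-contained characterisation of the order of the freest extension ($x\le y$ there iff $x\le_M y$, or both $r_R\le_M y$ and $x\le_M y\odot r_L$) and prove it by congruence generation. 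Your lemma is true and your sketch is sound: the displayed relation is reflexive, $\odot$-compatible, contains the pair $(r_L,r_R)$, is transitive --- the single delicate case being exactly the absorption $x\le w\odot r_L,\ w\le y\odot r_L \Rightarrow x\le y\odot r_L$ that you flag --- and it is contained in the order of every model of $Th_0^{+}(M)\cup\{r^{+}\}$, since there $r_R\le y$ forces $y\odot r_L\le y\odot r_R\le y$; hence it is precisely the order of the freest model. What your route buys is independence from the atomized machinery: it is purely order/congruence-theoretic and proves slightly more (a complete description of the order after enforcing one positive duple, in both directions at once), whereas the paper's route is shorter given that \cite{AtomizedSL} already did the atom-level work, and it stays uniform with the framework used everywhere else in the paper. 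One simplification you should make: your final ``remaining care'' step --- reconciling the congruence quotient with the full-crossing model --- is unnecessary, and in fact reintroduces the very dependency you avoided. The entailment $Th_0^{+}(M)\cup\{r^{+}\}\Rightarrow s^{+}$ can be tested directly in the quotient $M/\theta$, which is the freest model of $Th_0^{+}(M)\cup\{r^{+}\}$ by the universal property of the generated congruence (using that $M$ itself is the freest model of its own positive theory); full crossing never needs to appear in your argument at all.
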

\begin{proof}
$\,Th_0^{+}(M) \cup \{r^{+}\} \Rightarrow s^{+}\,$ means that for any model $N$ such that $N \models Th_0^{+}(M)$ we should have $N \models (\neg r^{+} \vee  s^{+})$, and using $\neg r^{+} = r^{-}$ and $s^{+} =  \neg s^{-}$ we get $N \models (r^{-} \vee  \neg s^{-})$ which proves that statements $Th_0^{+}(M) \cup \{r^{+}\} \Rightarrow s^{+}\,$ and $\,Th_0^{+}(M) \cup \{s^{-}\} \Rightarrow r^{-}\,$ are equivalent, and then (i) and (ii) are also equivalent. 
It suffices with proving (i). Right to left, condition $M \models ((r_R < s_R) \wedge (s_L < s_R \odot r_L))$ implies that $Th_0^{+}(M) \cup \{r^{+}\} \Rightarrow ((r_L < r_R < s_R) \wedge  (s_L < s_R \odot r_L)) \Rightarrow (s_L < s_R)$, and the left side follows. To prove (i) left to right we can use theorem \ref{fullCrossingIsFreestTheorem}. Note first that if $M \models r^{+}$ then the left side of (i) implies $M \models s^{+}$ which we have assumed false. Therefore, we should have $M \models r^{-}$. A duple $s^{+}$ implied by the set $Th_0^{+}(M) \cup \{r^{+}\}$ should be modeled by $F(Th_0^{+}(M) \cup \{r^{+}\})$ and therefore either $M \models s^{+}$, against our assumptions, or $s^{+}$ is obtained from the full crossing of $r$. In the proof of \ref{fullCrossingIsFreestTheorem} it is shown that if $s^{+}$ occurs as a consequence of crossing $r$ then $M \models ((r_R < s_R) \wedge (s_L < s_R \odot r_L))$.
\end{proof}
\bigskip

\section{Grounded models} \label{groundedModels}

Let $K$ be a subset of the constants $K \subset C$. The atom ``$\phi$ grounded to $K$'', written $\phi^{{}^{\vee}K}$ is an atom that exists only when $U^{c}(\phi) \subseteq K$ and then $\phi^{{}^{\vee}K} = \phi$. 

The model $M$ grounded to $K$, written $M^{{}^{\vee}K}$, is the model spawned by the grounded atoms of $M$ and $\ominus_K$ or, in other words, the model spawned by the atoms of $M$ with an upper constant segment that is a subset of $K$ and $\ominus_K$.  It turns out that $M^{{}^{\vee}K}$ is a well defined model, i.e. it does not depend upon the atomization chosen for $M$. 

The role of  $\ominus_K$ is to ensure the sixth lemma of finite atomized semilattices is satisfied. In practice, we need to add $\ominus_K$ only if it is not redundant with the grounded-to-$K$ atoms of $M$.

We say that model $M$ over the constants $K$ is a tight subset model of another model $N$, written $M \sqsubset N$ if each non-redundant atom of $M$ is either a non-redundant atom of $N$ or is equal to $\ominus_K$.

\bigskip
\bigskip

\begin{theorem}  \label{groundingProperties} 
Let $K \subset C$ and $M$ a model over the constants $C$.  \\ 
i) If $\phi$ is redundant with $M$ then $\phi^{{}^{\vee}K}$ either does not exist or is redundant with $M^{{}^{\vee}K}$.  \\ 
ii) Each non-redundant atom of $M^{{}^{\vee}K}$ is a non-redundant atom of $M$ or is equal to $\ominus_K$, i.e. $M^{{}^{\vee}K}$ is a tight subset model of $M$, written $M^{{}^{\vee}K} \sqsubset M$.  \\ 
iii) $M^{{}^{\vee}K}$ is a well-defined model. \\ 
iv) $M^{{}^{\vee}K}$ is as free or less free than $M$
\end{theorem}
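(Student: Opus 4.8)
The plan is to prove the four claims in the order stated, since each rests on the previous one; the real content is in (i), from which (ii) is immediate, (iii) follows by combining (i) with the uniqueness of non-redundant atoms, and (iv) is a separate monotonicity argument. Throughout I will use the basic fact from \cite{AtomizedSL} that atoms are ordered by reverse inclusion of their upper constant segments, $\phi \leq \psi$ iff $U^{c}(\psi) \subseteq U^{c}(\phi)$, and that $\phi$ is redundant with $M$ exactly when it is generated by the non-redundant atoms of $M$ lying above it. The single observation driving everything is that any such generator $\psi$ satisfies $\phi \leq \psi$, hence $U^{c}(\psi) \subseteq U^{c}(\phi)$, so a redundant atom and all of its generators have upper segments contained in the same set.

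For (i) I would split on whether $U^{c}(\phi) \subseteq K$. If not, then $\phi^{{}^{\vee}K}$ does not exist by definition and there is nothing to prove. If $U^{c}(\phi) \subseteq K$, write $\phi$ as generated by the non-redundant atoms $\psi_1,\dots,\psi_r$ of $M$ lying above it. By the observation above each $U^{c}(\psi_i) \subseteq U^{c}(\phi) \subseteq K$, so every $\psi_i$ survives grounding and is an atom of $M^{{}^{\vee}K}$; hence $\phi$ is still generated by atoms of $M^{{}^{\vee}K}$ lying above it, i.e. $\phi^{{}^{\vee}K}$ is redundant with $M^{{}^{\vee}K}$. Part (ii) then follows at once: a non-redundant atom $\chi$ of $M^{{}^{\vee}K}$ with $\chi \neq \ominus_K$ lies in the generating set of $M^{{}^{\vee}K}$, so it is one of the grounded atoms of $M$ and $U^{c}(\chi) \subseteq K$; were $\chi$ redundant with $M$, part (i) would force it to be redundant with $M^{{}^{\vee}K}$, a contradiction, so $\chi$ is a non-redundant atom of $M$. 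This is exactly the statement $M^{{}^{\vee}K} \sqsubset M$.

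For (iii) the goal is independence from the atomization of $M$. I would fix the canonical generating set $G_0$ consisting of the non-redundant atoms of $M$ whose upper segment is contained in $K$, together with $\ominus_K$. Any atomization $A$ of $M$ contains all non-redundant atoms of $M$ plus possibly some redundant ones, so its grounded generating set is $G_0$ enlarged by redundant atoms of $M$ with upper segment in $K$; by part (i) each such extra atom is redundant with $[G_0]$ and therefore does not change the spawned model. Hence every atomization yields $[G_0]$. Since the non-redundant atoms of $M$ are unique by \cite{AtomizedSL}, and whether $\ominus_K$ is itself redundant with the grounded non-redundant atoms is decided by that same unique set, $G_0$ is atomization-independent and so is $M^{{}^{\vee}K}$.

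For (iv) I would show $Th_0^{+}(M) \subseteq Th_0^{+}(M^{{}^{\vee}K})$. Each atom acts as a witness against positive duples, a positive duple failing only when some atom lies below the left side but not the right. Grounding only deletes atoms, and deleting witnesses can never destroy a positive duple already valid in $M$, so every positive consequence of $M$ survives and $M^{{}^{\vee}K}$ is no freer than $M$. The one point requiring care is that we also adjoin $\ominus_K$, and I must check that it introduces no new witness capable of breaking a positive duple of $M$; this is guaranteed by $U^{c}(\ominus_K) \subseteq K$ together with its defining role under the sixth axiom, so it cannot separate constants that $M$ already identified. I expect the main obstacle to be precisely this bookkeeping around $\ominus_K$ in (iii) and (iv), namely ensuring that adjoining it neither depends on the chosen atomization nor spuriously increases freeness, whereas the monotonicity of ordinary atom deletion is routine.
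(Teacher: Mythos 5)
Your proposal is correct and takes essentially the same route as the paper's proof: part (i) hinges on the identical key fact that the generators of a redundant atom have upper constant segments contained in $U^{c}(\phi)$ (the paper writes this as $U^{c}(\phi)=\cup_i U^{c}(\varphi_i)$), so they all survive grounding whenever $\phi$ does, and parts (ii) and (iii) then follow exactly as in the paper. Your argument for (iv) is also the paper's, phrased contrapositively: grounding only removes potential discriminating atoms, and $\ominus_K$ discriminates no duple over $K$, so no positive duple of $M$ over $K$ can become negative in $M^{{}^{\vee}K}$.
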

\begin{proof}
When we ground $M$ to a subset of its constants $K$ we do a selection of atoms of $M$. For each atom $\phi \in M$ either $\phi^{{}^{\vee}K}$ does not exist or if it does $\phi^{{}^{\vee}K} = \phi$. \\
(i) Assume $\phi$ is redundant. Then $\phi = \bigtriangledown_i \varphi_i$ for some (different) atoms $\{\varphi_1,...,\varphi_n\}$ in $M$. If $\phi^{{}^{\vee}K}$ exists (i.e. $U^{c}(\phi) \subseteq K$) then since $U^{c}(\phi) = \cup_i U^{c}(\varphi_i)$ we have that each $\varphi_i^{{}^{\vee}K}$ satisfies $U^{c}(\varphi_i) \subseteq K$ and then $\varphi_i^{{}^{\vee}K}$ exists and is an atom in $M^{{}^{\vee}K}$, and then $\phi^{{}^{\vee}K} = \bigtriangledown_i \varphi_i^{{}^{\vee}K}$. Since $\varphi_i^{{}^{\vee}K} = \varphi_i  \not = \phi = \phi^{{}^{\vee}K}$, then $\phi^{{}^{\vee}K}$ is a union of atoms different than $\phi^{{}^{\vee}K}$ and, hence, redundant with $M^{{}^{\vee}K}$.   \\
(ii) Suppose now that $\phi^{{}^{\vee}K} \not= \ominus_K$ is a non-redundant atom in $M^{{}^{\vee}K}$. This implies that $\phi^{{}^{\vee}K}$ exists, i.e. $U^{c}(\phi) \subseteq K$, and $\phi^{{}^{\vee}K} = \phi$. From (i) we know that if $\phi$ were redundant with $M$ then $\phi^{{}^{\vee}K}$ would be redundant with $M^{{}^{\vee}K}$ and it is not. Hence, $\phi$ is non redundant in $M$. Every non-redundant atom in $M^{{}^{\vee}K}$ is non-redundant in $M$ or equal to $\ominus_K$, and this is the definition of a proper subset model.  \\
(iii) We have shown that if $\phi^{{}^{\vee}K}$ exist and $\phi$ is redundant with $M$ then $\phi^{{}^{\vee}K}$ is redundant with $M^{{}^{\vee}K}$, therefore $M^{{}^{\vee}K}$ is defined by the non-redundant atoms of $M$ and is independent of the atomization of $M$. \\ 
(iv) Since  $\ominus_K$ discriminates no duple over $K$, hence, any duple over $K$ discriminated in $M^{{}^{\vee}K}$ is discriminated by some grounded atom of $M$ that is also an atom of $M$. It follows that any duple over $K$ that is negative in $M^{{}^{\vee}K}$ is also negative for $M$.
\end{proof}

\bigskip
\bigskip

 When is cleaner, we sometimes use $[|Q]M$ as an alternative symbol for the restriction $M^{|Q}$ and also $[{\vee}K]M$ instead of the grounding $M^{{\vee}K}$; for example, in the next theorem $[{\vee}K]M$ appears in an smaller font at the discriminant:

\bigskip
\begin{theorem}  \label{groundingAndCrossingCommute}  Let $r$ a duple over $K \subset C$ and $M$ a model over $C$.  \\
i) ${\bf{dis}}_{[{}^{\vee}K] M}(r)  = [{}^{\vee}K] {\bf{dis}}_{M}(r)$. \\
ii) Grounding and crossing commute: $\square_{r} [{}^{\vee}K] M = [{}^{\vee}K] \square_{r} M$. 
\end{theorem}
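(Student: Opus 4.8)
The plan is to prove (i) as a statement purely about which atoms discriminate $r$, and then to derive (ii) from (i) together with the analogous fact for the atoms lying below $r_R$, since full crossing is completely determined by these two families of atoms.

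For (i), the key observation I would record is that, for a duple $r=(r_L,r_R)$ over $K$, whether an atom $\phi$ discriminates $r$ is a \emph{local} property: the relations $\phi\le r_L$ and $\phi\not\le r_R$ are decided by the upper constant segment $U^c(\phi)$ together with the constants occurring in $r_L$ and $r_R$, all of which lie in $K$. In particular this property does not depend on the ambient model. By Definition \ref{grounding} the atoms of $[{}^{\vee}K]M$ are exactly the atoms $\phi$ of $M$ with $U^c(\phi)\subseteq K$, together with $\ominus_K$. Since $\ominus_K$ discriminates no duple over $K$, and since an atom $\phi$ with $U^c(\phi)\subseteq K$ discriminates $r$ in $[{}^{\vee}K]M$ exactly when it does so in $M$ (by locality), I would conclude ${\bf dis}_{[{}^{\vee}K]M}(r)=\{\phi\in{\bf dis}_M(r):U^c(\phi)\subseteq K\}=[{}^{\vee}K]\,{\bf dis}_M(r)$.

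For (ii), I would write out both sides atom by atom. Recalling the full crossing operation, $\square_r$ keeps every non-discriminant atom of the model and replaces each discriminant atom $\phi$ by the crossings $\phi\bigtriangledown\psi$, where $\psi$ ranges over the atoms below $r_R$ and $U^c(\phi\bigtriangledown\psi)=U^c(\phi)\cup U^c(\psi)$. Hence a crossing survives grounding precisely when $U^c(\phi)\subseteq K$ and $U^c(\psi)\subseteq K$, i.e. when both factors survive. So the right-hand side $[{}^{\vee}K]\square_r M$ is spawned by the non-discriminant atoms of $M$ with $U^c\subseteq K$, the crossings $\phi\bigtriangledown\psi$ with $\phi\in{\bf dis}_M(r)$, $\psi\le r_R$ and $U^c(\phi),U^c(\psi)\subseteq K$, and $\ominus_K$. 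On the left-hand side $\square_r[{}^{\vee}K]M$, part (i) identifies the discriminant atoms of $[{}^{\vee}K]M$ as exactly those $\phi\in{\bf dis}_M(r)$ with $U^c(\phi)\subseteq K$, while the same locality argument applied to the condition $\psi\le r_R$ (and to any further term-conditions over $K$ the crossing may impose) identifies the crossing partners as the atoms of $M$ below $r_R$ with $U^c\subseteq K$, plus $\ominus_K$. Matching the two lists term for term shows both models are spawned by the same atoms; since grounding (Theorem \ref{groundingProperties}) and full crossing are each well defined independently of the chosen atomization, this yields the equality of models.

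The main obstacle I anticipate is the careful handling of $\ominus_K$. On the left-hand side $\ominus_K$ occurs as a legitimate crossing partner, because $K\subseteq U^c(\ominus_K)$ forces $\ominus_K\le r_R$; on the right-hand side it is introduced only after crossing, by the grounding step. I would close this gap by observing that crossing any surviving atom $\phi$ (with $U^c(\phi)\subseteq K$) against $\ominus_K$ gives $U^c(\phi)\cup U^c(\ominus_K)=U^c(\ominus_K)$, i.e. reproduces $\ominus_K$ and creates no new atom, so the single copy of $\ominus_K$ supplied by grounding on the right exactly matches the copies produced on the left. With that bookkeeping settled, the atom-by-atom comparison is routine and the commutation follows.
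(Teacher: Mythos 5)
Your proposal is correct and takes essentially the same route as the paper's proof: part (i) by identifying the atoms of $[{}^{\vee}K]M$ and observing that the conditions $\phi < r_L$, $\phi \not< r_R$ are decided by upper constant segments, and part (ii) by decomposing full crossing into the non-discriminant atoms plus the crossings of the discriminant with the lower atomic segment of $r_R$, using the key fact that $\phi \bigtriangledown \psi$ survives grounding if and only if both factors do. Your explicit bookkeeping of $\ominus_K$ is in fact slightly more careful than the paper's own argument, which leaves that case implicit.
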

\begin{proof}
First, observe that, since $[{}^{\vee}K] M$ is an algebra over $K$, for $\square_{r} [{}^{\vee}K] M$ or ${\bf{dis}}_{[{}^{\vee}K] M}(r)$ to be defined $r$ should be a duple over $K$.  \\
(i) Let $r = (r_L, r_R)$.  The discriminant of $r$ in $[{}^{\vee}K] M$ is:  \[
{\bf{dis}}_{[{}^{\vee}K] M}(r)  = \{ \phi : (\phi \in [{}^{\vee}K] M) \wedge (\phi < r_L) \wedge (\phi  \not< r_R) \} = 
\]\[
=  \{ \phi : (\phi \in M) \wedge (U^c(\phi) \subseteq K) \wedge (\phi < r_L) \wedge (\phi  \not< r_R)  \} = [{}^{\vee}K] {\bf{dis}}_{M}(r). 
\]
(ii) Using the definition of full crossing: \[
\square_{r} [{}^{\vee}K] M =  ([{}^{\vee}K] M - {\bf{dis}}_{[{}^{\vee}K] M}(r)) +  {\bf{dis}}_{[{}^{\vee}K] M}(r) \bigtriangledown {\bf{L}}^{a}_{[{}^{\vee}K] M}(r_R) = 
\]\[
=  ([{}^{\vee}K] M - {\bf{dis}}_{[{}^{\vee}K] M}(r)) + ([{}^{\vee}K] {\bf{dis}}_{M}(r)) \bigtriangledown  ([{}^{\vee}K] {\bf{L}}^{a}_{M}(r_R)).
\]
where we have used proposition (i) and taken into account that:  \[
{\bf{L}}^{a}_{[{}^{\vee}K] M}(r_R)  =\{ \phi : (\phi \in [{}^{\vee}K] M) \wedge (\phi  < r_R) \} = 
\]\[
= \{ \phi : (\phi \in M) \wedge (U^c(\phi) \subseteq K)  \wedge (\phi  < r_R) \} =  [{}^{\vee}K] {\bf{L}}^{a}_{M}(r_R).
\]
Now, let's look into the set:  \[
[{}^{\vee}K] M - {\bf{dis}}_{[{}^{\vee}K] M}(r)) = \{ \phi : (\phi \in [{}^{\vee}K] M) \wedge ((\phi \not< r_L) \vee (\phi  < r_R)) \} =
\] \[
= \{ \phi :(\phi \in M) \wedge (U^c(\phi) \subseteq K) \wedge  ((\phi \not< r_L) \vee (\phi  < r_R))  \} = [{}^{\vee}K] (M -  {\bf{dis}}_{M}(r)).
\]
Finally, we argue that $([{}^{\vee}K] {\bf{dis}}_{M}(r)) \bigtriangledown  ([{}^{\vee}K] {\bf{L}}^{a}_{M}(r_R)) = [{}^{\vee}K] \, ({\bf{dis}}_{M}(r) \bigtriangledown {\bf{L}}^{a}_{M}(r_R))$. Suppose $\alpha  \in {\bf{dis}}_{M}(r)$ and $\beta  \in {\bf{L}}^{a}_{M}(r_R)$. If $(\alpha \bigtriangledown \beta)^{{}^{\vee}K}$ exists then $\alpha^{{}^{\vee}K}$ and $\beta^{{}^{\vee}K}$ should also exist, which implies $([{}^{\vee}K] {\bf{dis}}_{M}(r)) \bigtriangledown  ([{}^{\vee}K] {\bf{L}}^{a}_{M}(r_R))  \supseteq [{}^{\vee}K] \, ({\bf{dis}}_{M}(r) \bigtriangledown {\bf{L}}^{a}_{M}(r_R))$ and, conversely, if $\alpha^{{}^{\vee}K}$ and $\beta^{{}^{\vee}K}$ exist then $(\alpha \bigtriangledown \beta)^{{}^{\vee}K}$ also exists, hence,   $([{}^{\vee}K] {\bf{dis}}_{M}(r)) \bigtriangledown  ([{}^{\vee}K] {\bf{L}}^{a}_{M}(r_R))  \subseteq [{}^{\vee}K] \, ({\bf{dis}}_{M}(r) \bigtriangledown {\bf{L}}^{a}_{M}(r_R))$. \\
Putting all together \[
\square_{r} [{}^{\vee}K] M =  ([{}^{\vee}K] M - {\bf{dis}}_{[{}^{\vee}K] M}(r)) + ([{}^{\vee}K] {\bf{dis}}_{M}(r)) \bigtriangledown  ([{}^{\vee}K] {\bf{L}}^{a}_{M}(r_R)) =
\]\[
=   [{}^{\vee}K] (M -  {\bf{dis}}_{M}(r)) +  [{}^{\vee}K] \, ({\bf{dis}}_{M}(r) \bigtriangledown {\bf{L}}^{a}_{M}(r_R)) =  [{}^{\vee}K] \square_{r} M,
\]
as we wanted to show.
\end{proof}

\bigskip

If we compare the proofs of theorem \ref{groundingAndCrossingCommute} and of theorem 25 of the paper on Finite Atomized Semilattices, we can see that grounding and restriction are similar. However, while the hypotheses ``$r$ is a duple over $Q$'' is central for the theorem on restriction the equivalent ``$r$ is a duple over $K$'' is required in the proof of \ref{groundingAndCrossingCommute} only to make $\square_{r}$ definable for an algebra over $K$.

\newpage

\section{Theorems} \label{sec:theorems}

We sometimes write $S \vdash { \bf{\eta}}$ for a sentence ${ \bf{\eta}}$ and a set $S$ of sentences as a shorthand for the conjunction $\wedge \{  \sigma:  \sigma  \in S  \} \vdash { \bf{\eta}}$.

\bigskip
\begin{theorem}  \label{freestSolutionModel} 
Assume an embedding of a problem $P$ with embedding set $R$ over the constants $C$ and scope sentence $\Xi$. Let $S$ be a solution of $P$. The model $F_S \equiv F_C(R^{+} \cup S)$ satisfies $\Xi$ and is a solution model of $P$. Moreover, $F_S$ is the freest model that can be interpreted as the solution $S$.
\end{theorem}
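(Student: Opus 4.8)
The plan is to exploit two facts: that $F_S = F_C(R^{+}\cup S)$ is, by construction via full crossing (see \cite{AtomizedSL}), the freest model of the positive duples $R^{+}\cup S$, and that $S$ being a solution supplies, through Definition \ref{solution}, a witness model $N \models R \wedge \gamma_S$. The freest-model property I will use is its discrimination characterization: for any positive duple $r$ over $C$, $F_S \models r$ if and only if $R^{+}\cup S \vdash r$; equivalently, $F_S$ satisfies the negative duple $r^{-}$ exactly when $R^{+}\cup S \not\vdash r$. With these in hand the proof splits into showing $F_S \models R\wedge\gamma_S$, then $F_S\models\Xi$, then freestness.

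First I would show $F_S$ is a solution model, i.e. $F_S \models R \wedge \gamma_S$. The positive content is immediate: $F_S$ is a model of $R^{+}\cup S$, so it satisfies every positive duple of $R^{+}$ and every positive conjunct $\pi\in S$ of $\gamma_S$. The real work is the negative content, namely the duples of $R^{-}$ together with the duples $\neg\nu$ for $\nu\in\Gamma\cap\overline{S}$ appearing in $\gamma_S$. Let $\mu^{-}$ be any one of these required negative duples, with associated positive duple $\mu$. Since $N\models R\wedge\gamma_S$, the model $N$ satisfies $\mu^{-}$, i.e. $N\not\models\mu$; but $N$ is also a model of $R^{+}\cup S$, so $\mu$ cannot be a consequence of $R^{+}\cup S$, that is $R^{+}\cup S\not\vdash\mu$. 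By the discrimination characterization of the freest model, $F_S\models\mu^{-}$. Hence $F_S$ satisfies every negative duple required by $R$ and by $\gamma_S$, and therefore $F_S\models R\wedge\gamma_S$.

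Having established $F_S\models\gamma_S$, the scope sentence follows at once: Definition \ref{solution} guarantees $\gamma_S\vdash\Xi$, and since entailment means every model of the left sentence is a model of the right, $F_S\models\Xi$. For the final ``freest'' claim, let $M_S$ be any model that can be interpreted as the solution $S$, i.e. any $M_S\models R\wedge\gamma_S$. Then in particular $M_S\models R^{+}\cup S$, and because $F_S$ is the freest model of $R^{+}\cup S$, every model of $R^{+}\cup S$ is less free than $F_S$ (its non-redundant atoms are atoms of $F_S$, as per Definition \ref{solutionModel}); thus $M_S$ is less free than $F_S$. Since $F_S$ is itself such a model, it is the freest. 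The main obstacle is the negative-duple step of the second paragraph: one must notice that the mere existence of the witness $N$ certifies that none of the needed positive duples is forced by the positive theory $R^{+}\cup S$, which is precisely what the freeness of $F_S$ then converts into satisfaction of all the required negative duples. The positive duples transfer trivially, but without $N$ there would be no guarantee that $F_S$ discriminates the duples of $R^{-}$ and of $\Gamma\cap\overline{S}$.
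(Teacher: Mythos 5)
Your proof is correct and follows essentially the same route as the paper's: both obtain a witness solution model from the definition of the embedding/solution, transfer the positive duples of $R^{+}\cup S$ trivially, and use the freeness of $F_S$ over $R^{+}\cup S$ to transfer the negative duples of $R^{-}$ and of $\Gamma\cap\overline{S}$ from the witness to $F_S$, after which $\gamma_S\vdash\Xi$ and the maximal-freeness claim follow immediately. The only cosmetic difference is that you phrase freeness via the entailment characterization ($F_S\models r$ iff $R^{+}\cup S\vdash r$) and argue duple-by-duple, whereas the paper invokes the ``freer than'' relation between $F_S$ and the witness model to transfer all negative duples at once; these are equivalent formulations of the same fact.
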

\begin{proof}
A valid embedding assumes that for each solution $S$ there should be at least one solution model $M_S \models  R \wedge {\bf{\varphi}}_S$. Let $\Gamma$ be the separator set obtained from $\Xi$. There is a sentence $\Xi_S$ such that $\Xi_S \Rightarrow \Xi$ and $R \wedge {\bf{\varphi}}_S \Leftrightarrow \Xi_S \wedge R$ so the solution $S$ can be unambiguously mapped to the subset $S$ of sentences of $\Gamma$ satisfied by $\Xi_S$. It follows that $M_S$ satisfies $R^{+}  \cup S$ and $R^{-}  \cup \{  \neg \sigma : \sigma \in \Gamma \cap \overline{S} \}$. Since $F_S$ is the freest model of $R^{+} \cup S$ then $F_S$ is freer than $M_S$ and, hence, $F_S$ should also model $R^{-}  \cup \{  \neg \sigma : \sigma \in \Gamma \cap \overline{S} \}$, i.e. $F_S \models R \wedge \Xi_S$. Using the equivalence above between $R \wedge \Xi_S$ and $R \wedge {\bf{\varphi}}_S$ we get to $F_S \models {\bf{\varphi}}_S$ and $F_S \models R \wedge \Xi$, so $F_S$ is a solution model for $P$ that can be interpreted as $S$. Notice that $F_S$ is freer than any model of $R \wedge \Xi_S$ and then also freer than any model of $R \wedge {\bf{\varphi}}_S$.
\end{proof}

\bigskip
\bigskip

\begin{theorem} \label{conciseEmbeddingTheorem} 
An embedding of a problem $P$ with interpretation constants $Q$ and a separator set $\Gamma$ is concise if and only if the freest solution model restricted to $Q$ is equal to $F_{Q}(S)$ for each properly defined solution $S$ of $P$.
\end{theorem}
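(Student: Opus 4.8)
The plan is to translate each of the two entailments appearing in Definition \ref{concise} into a statement about which positive duples over $Q$ hold in a concrete freest model, and then to use the elementary fact that a semilattice model over $Q$ is completely determined by the set of positive duples over $Q$ it satisfies (equivalently, by its order relation), so that agreement of positive theories is the same as equality of models.

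First I would fix a properly defined solution $S$ and a positive duple $r^{+}$ over $Q$ and analyse the left-hand entailment $R \wedge {\bf{\varphi_S}} \vdash r^{+}$. Writing ${\bf{\varphi_S}} = \gamma_S$, the positive duples occurring in $R \wedge \gamma_S$ are exactly $R^{+} \cup S$, the remaining ones being negative. By theorem \ref{freestSolutionModel}, $F_S$ is a solution model, so $F_S \models R \wedge \gamma_S$ and the set $R \wedge \gamma_S$ is consistent; hence if $R \wedge \gamma_S \vdash r^{+}$ then $R \wedge \gamma_S \wedge r^{+}$ is still consistent, and segregation \ref{segregationTheorem}(i) forces the positive part alone to entail $r^{+}$, i.e. $R^{+} \cup S \vdash r^{+}$, which is $F_S \models r^{+}$. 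The converse is immediate because $R \wedge \gamma_S$ contains $R^{+} \cup S$. Since $F_S^{|Q}$ is the subalgebra of $F_S$ spawned by $Q$ (theorem \ref{restrictionLemma}) and $r^{+}$ is over $Q$, we also have $F_S \models r^{+}$ if and only if $F_S^{|Q} \models r^{+}$. Thus $R \wedge {\bf{\varphi_S}} \vdash r^{+}$ holds exactly when $F_S^{|Q} \models r^{+}$.

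Next I would treat the right-hand entailment ${\bf{\varphi_S}} \vdash r^{+}$ identically. The only positive duples of $\gamma_S$ are the members of $S$, and $\gamma_S$ is consistent because $\gamma_S \wedge R$ has a model by Definition \ref{solution}; so segregation \ref{segregationTheorem}(i) gives ${\bf{\varphi_S}} \vdash r^{+}$ if and only if $S \vdash r^{+}$. As $S$ and $r^{+}$ are both over $Q$, the defining property of the freest model yields $S \vdash r^{+}$ if and only if $F_Q(S) \models r^{+}$. Combining the two reductions, conciseness at $S$ asserts precisely that $F_S^{|Q} \models r^{+} \iff F_Q(S) \models r^{+}$ for every positive duple $r^{+}$ over $Q$, i.e. that the two models over $Q$ have identical positive theories, and hence are equal. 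Quantifying over all solutions $S$ gives the stated equivalence.

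The step requiring the most care is the consistency bookkeeping that licenses segregation \ref{segregationTheorem}(i): one must confirm that each positive-plus-negative set stays consistent after adjoining $r^{+}$, which is exactly where theorem \ref{freestSolutionModel} (left side) and the solution property of $\gamma_S \wedge R$ (right side) are used. It is worth recording that one inclusion is automatic and needs no segregation: since $F_S^{|Q}$ models $S$, it is less free than $F_Q(S)$, so $F_Q(S) \models r^{+}$ always implies $F_S^{|Q} \models r^{+}$. Conciseness is precisely the reverse, nontrivial implication, and supplying it is the real work of the segregation argument.
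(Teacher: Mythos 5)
Your proposal is correct and takes essentially the same route as the paper's proof: both translate the two entailments in Definition \ref{concise} into statements about the positive theories of the freest models $F_S$ and $F_Q(S)$ (using theorem \ref{segregationTheorem} to discard the negative sentences and the freest-model property to pass between entailment and satisfaction), and then conclude equality of $F_S^{|Q}$ and $F_Q(S)$ from the coincidence of their positive duples over $Q$. The only cosmetic difference is that the paper disposes of the left-hand entailment in one step by citing that $F_S$ is the freest model of $R \wedge {\bf{\varphi_S}}$ (theorem \ref{freestSolutionModel}), whereas you rerun the segregation argument there with explicit consistency bookkeeping; the two are interchangeable.
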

\begin{proof}
An embedding is concise when for every solution $S$ and for every duple $r$ over the interpretation constants $Q$ we have $R \wedge {\varphi_S} \vdash r^{+}$ if and only if ${\varphi_S} \vdash r^{+}$ where the solution is properly defined, i.e. described by the sentence ${\bf{\varphi}}_S = \wedge \{\pi: \pi \in S\} \wedge \{\neg \nu: \nu \in \Gamma \cap \overline{S}\}$. \\
We showed in theorem \ref{freestSolutionModel} that the freest solution model $F_S$ is the freest model of $R \wedge {\bf{\varphi_S}}$, therefore, $R \wedge {\bf{\varphi_S}} \vdash r^{+}$ if and only if $F_S \models r^{+}$. According to theorem \ref{segregationTheorem} negative sentences do not have positive consequences so, ${\bf{\varphi_S}} \vdash r^{+}$ if and only if $\wedge \{\pi: \pi \in S\} \vdash r^{+}$ which, in turn, occurs if and only if $F_{C}(S) \models r^{+}$. Putting both things together: an embedding is concise if for every $r$ over the interpretation constants $Q$ we have $F_S \equiv F_{C}(R^{+} \cup S) \models r^{+}$ if and only if $F_{C}(S) \models r^{+}$. Since any model and its restriction to $Q$ agree on the sign of every duple over $Q$, in a concise embedding the restriction to $Q$ of $F_S$ should be equal to the restriction to $Q$ of $F_{C}(S)$. Using that all the duples of $S$ are over $Q$, the restriction $F_{C}(S)^{|Q}$ is equal to  $F_{Q}(S)$ and we finally get that in a concise embedding $F_S^{|Q} = F_{Q}(S)$. \\
Conversely, if $F_S^{|Q} = F_{Q}(S)$, for any duple $r$ over the interpretation constants $Q$ holds that $F_S \models r^{+}$ if and only if $F_{Q}(S) \models  r^{+}$. Now, $F_S \equiv F_{C}(R^{+} \cup S) \models r^{+}$ means that the freest model of  $R^{+} \cup S$ or, equivalently, the freest model of $R \wedge {\bf{\varphi_S}}$ models $r^{+}$ and then every model of $R \wedge {\bf{\varphi_S}}$ models $r^{+}$ which can be written as $R \wedge {\varphi_S} \vdash r^{+}$. Same is true for $F_{Q}(S)$ that satisfies $F_{Q}(S) \models r^{+}$ if and only if ${\varphi_S} \vdash r^{+}$. Therefore, $R \wedge {\varphi_S} \vdash r^{+}$ if and only ${\varphi_S} \vdash r^{+}$ and the embedding is concise.
\end{proof}

\bigskip

\begin{theorem} \label{weakEquivalenceTheorem} 
Two concise algebraic semantic embeddings of a problem $P$ using embedding sets $R_1$ and $R_2$ over $C_1$ and $C_2$ respectively that share the same interpretation $(Q, {\bf{\varphi}}, \Gamma)$ produce for each solution $S$ of $P$ a freest model of $S$ with the same non-redundant atoms restricted to $Q$. 
\end{theorem}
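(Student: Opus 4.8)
The plan is to reduce the claim to Theorem \ref{conciseEmbeddingTheorem}, which for a concise embedding pins down the restriction to $Q$ of each freest solution model as $F_Q(S)$. The decisive observation is that $F_Q(S)$ is assembled only from the shared interpretation data $(Q, {\bf{\varphi}}, \Gamma)$, so it is literally the same model for both embeddings; the theorem then falls out as a transitivity statement across two applications of that result.

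First I would fix a solution $S$ of $P$ and write $F_{1S} = F_{C_1}(R_1^{+} \cup S)$ and $F_{2S} = F_{C_2}(R_2^{+} \cup S)$ for the two freest solution models, each a genuine solution model by Theorem \ref{freestSolutionModel}. Because the first embedding is concise, Theorem \ref{conciseEmbeddingTheorem} gives $F_{1S}^{|Q} = F_Q(S)$; because the second embedding is concise, the same theorem gives $F_{2S}^{|Q} = F_Q(S)$. The point that must be spelled out carefully is that the two occurrences of $F_Q(S)$ really denote one and the same object. Since both embeddings carry the same interpretation constants $Q$ and the same separator set $\Gamma$, a solution $S$ denotes the same subset $S \subseteq \Gamma$ in both settings, and its positive part (the set $S$ of positive duples over $Q$ appearing in $\gamma_S$) is identical. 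Hence $F_Q(S)$, the freest semilattice over $Q$ enforcing those positive duples, is unambiguous, and chaining the two equalities yields $F_{1S}^{|Q} = F_Q(S) = F_{2S}^{|Q}$.

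To conclude I would invoke Theorem \ref{restrictionLemma}: each restriction $F_{iS}^{|Q}$ is a well-defined atomized model, namely the one spawned by the restrictions to $Q$ of the atoms of $F_{iS}$, independent of the chosen atomization. The equality of models $F_{1S}^{|Q} = F_{2S}^{|Q}$ therefore forces their non-redundant atoms to coincide, which is exactly the assertion for the arbitrary solution $S$. I expect no genuine obstacle beyond justifying that $F_Q(S)$ is shared across the two embeddings, since all the analytic weight of the argument has already been absorbed into Theorem \ref{conciseEmbeddingTheorem}; what remains is essentially bookkeeping about the common interpretation and the well-definedness of the restriction.
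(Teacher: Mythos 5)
Your proposal is correct and follows essentially the same route as the paper's own proof: apply Theorem \ref{conciseEmbeddingTheorem} to each concise embedding to get $F_{1S}^{|Q} = F_{Q}(S) = F_{2S}^{|Q}$, then invoke Theorem \ref{restrictionLemma} to pass from equality of the restricted models to equality of the non-redundant atoms restricted to $Q$. Your extra care in checking that $F_{Q}(S)$ is the same object for both embeddings (because the interpretation $(Q, {\bf{\varphi}}, \Gamma)$ and hence the subset $S \subseteq \Gamma$ are shared) is a worthwhile clarification of a point the paper leaves implicit, but it does not change the argument.
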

\begin{proof}
From theorem \ref{conciseEmbeddingTheorem} we know that for every solution $S$ two concise embeddings will produce the same freest solutions restricted to interpretation constants $F_{1S}^{|Q} = F_{2S}^{|Q} = F_{Q}(S)$.  Theorem \ref{restrictionLemma} says that the restriction to $Q$ of a model is the model spawned by the restriction to $Q$ of its atoms. Therefore, the non-redundant atoms produced by restricting the atoms of $F_{1S}$ to $Q$ should be equal to those obtained from the restriction to $Q$ of the atoms of $F_{2S}$.
\end{proof}

\bigskip

\begin{theorem} \label{restrictionToQExistsTheorem} 
Let $R$ be an embedding of a problem $P$ with embedding constants $C$ and interpretation constants $Q$.  Let $F = \cup_S F_{S}$ where the union runs along the solutions of $P$. \\
i) A non-redundant atom $\phi$ of $F_C(R^{+})$ is external to $F$ if and only if $\phi$ is external to every freest solution model $F_S$. \\
ii) Let $s$ be a duple and $S$ a solution of $P$ such that $F_C(R^{+}) \models s^{-}$ and $F_S \models s^{+}$. Every atom $\phi \in F_C(R^{+})$ that discriminates $s$ has a restriction $\phi^{|Q}$. \\
iii) Every atom $\phi$ of $F_C(R^{+})$ that is external to at least one $F_S$ has a restriction $\phi^{|Q}$. \\
iv) An atom of $F_C(R^{+})$ with no restriction to $Q$ is an atom of every freest solution model $F_S$.
\end{theorem}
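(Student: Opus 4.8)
The four parts are tightly linked, so my plan is to prove (iv) first, read off (iii) as its contrapositive, deduce (ii) from (iv), and treat (i) separately at the end. The engine behind (ii)--(iv) is a single structural observation about full crossing. The plan is to build $F_S$ not from scratch but from $F_C(R^{+})$, writing $F_S = \square_{S} F_C(R^{+})$; this is legitimate because the order in which positive duples are enforced by full crossing is immaterial. Since $S \subseteq \Gamma$ and $\Gamma$ consists of duples over the interpretation constants, every duple crossed in this passage is over $Q$. I would first record the interpretation of the phrase ``no restriction to $Q$'': $\phi^{|Q}$ fails to exist precisely when $U^{c}(\phi) \cap Q = \emptyset$, since $U^{c}(\phi^{|Q}) = U^{c}(\phi) \cap Q$. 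The load-bearing remark is then that an atom $\phi$ with $U^{c}(\phi) \cap Q = \emptyset$ can never lie in the discriminant $\mathbf{dis}(r)$ of a duple $r = (r_L, r_R)$ over $Q$: membership in $\mathbf{dis}(r)$ forces $U^{c}(\phi)$ to meet the constant set of $r_L$, which is contained in $Q$. Hence such a $\phi$ is untouched by every crossing in $\square_{S}$ and survives, with its upper segment intact, as an atom of $F_S$.

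For (iv), I would argue that an atom of $F_C(R^{+})$ with no restriction to $Q$ is exactly such an inert atom. If $\phi$ is non-redundant it is carried unchanged into every $F_S$ by the remark above. If $\phi$ is redundant I would decompose it as $\phi = \bigtriangledown_i \psi_i$ into non-redundant atoms of $F_C(R^{+})$ lying strictly below it, observe that each $U^{c}(\psi_i) \subseteq U^{c}(\phi)$ still misses $Q$, apply the non-redundant case to each $\psi_i$, and recombine to get $\phi \in F_S$. Statement (iii) is then the contrapositive of (iv): if $\phi$ is external to some $F_S$ it cannot have empty intersection with $Q$, so $\phi^{|Q}$ exists. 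For (ii) I would argue by contradiction: if the discriminating atom $\phi$ had $U^{c}(\phi) \cap Q = \emptyset$, then by (iv) $\phi \in F_S$; since $\phi$ keeps the same upper segment it still satisfies $\phi < s_L$ and $\phi \not< s_R$, forcing $F_S \models s^{-}$ and contradicting the hypothesis $F_S \models s^{+}$.

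For (i), I would first establish the auxiliary closure fact that every atom of any $F_S$ --- and therefore every atom of $F = \cup_{S} F_{S}$ --- is an atom (possibly redundant) of $F_C(R^{+})$, because each full crossing only introduces atoms of the form $\psi \bigtriangledown \chi$ that are unions of atoms already present. The forward implication (external to every $F_S$ yields external to $F$) I would prove by contraposition: if $\phi \in F_{S_0}$ for some $S_0$ then $\phi$ is a union of non-redundant atoms of $F_{S_0}$, all of which spawn $F$, so $\phi \in F$. The reverse implication is where the hypothesis that $\phi$ is non-redundant in $F_C(R^{+})$ does the work. Were $\phi \in F$ redundant with $F$, I would write it as a union of non-redundant atoms of $F$ strictly below it; by the closure fact these are atoms of $F_C(R^{+})$, exhibiting $\phi$ as redundant with $F_C(R^{+})$ and contradicting its non-redundancy. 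Hence $\phi \in F$ must be non-redundant in $F$, so it is a spawning atom, i.e.\ a non-redundant atom of some $F_{S_0}$, giving $\phi \in F_{S_0}$.

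The main obstacle is keeping the distinction between non-redundant and redundant atoms straight, especially in (i): the equivalence is comfortable only because $\phi$ is assumed non-redundant in $F_C(R^{+})$, and the argument hinges on pairing this with the closure fact that all atoms of $F$ are atoms of $F_C(R^{+})$. Making the inert-atom remark rigorous --- that $\psi < r_L$ for $r_L$ a sum of $Q$-constants forces $U^{c}(\psi)$ to meet $Q$ --- is routine given the characterization of the order on atoms, but it is the step on which (ii)--(iv) all rest, so I would state and verify it cleanly before everything else.
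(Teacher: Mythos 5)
Your proof is correct, but for parts (ii)--(iv) it takes a genuinely different route from the paper's. The paper proves (ii) first, by invoking Theorem \ref{inferenceTheorem} on a stepwise enforcement of the duples of $S$: it locates the first duple $r = q_i$ whose crossing flips $s$ from negative to positive and deduces from $M \models (s_L \leq s_R \odot r_L)$ that every discriminating atom lies below $r_L$ and hence meets $Q$; it then obtains (iii) from (ii) together with an imported lemma (every non-redundant atom $\phi$ with $U^c(\phi) \neq C$ is the \emph{unique} discriminant of some duple $(c,t)$ with $c \in U^c(\phi)$), and finally reads (iv) as the contrapositive of (iii). You run the chain in the opposite direction, powered by a single elementary observation: an atom whose upper constant segment misses $Q$ can never lie in the discriminant of a duple over $Q$, so it survives, unchanged, every crossing in $\square_S$ carrying $F_C(R^{+})$ to $F_S = \square_S F_C(R^{+})$ (legitimate, since positive duples can be enforced in any order). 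That gives (iv) directly, (iii) as its contrapositive, and (ii) by the contradiction you describe; your handling of redundant atoms by decomposition into non-redundant ones is sound in both (iv) and (i). Your route is more self-contained --- it needs neither Theorem \ref{inferenceTheorem} nor the unique-discriminating-duple lemma, and it never even uses the hypothesis $F_C(R^{+}) \models s^{-}$ in (ii) --- whereas the paper's route yields slightly finer information in (ii), namely a specific duple $r \in S$ below whose left-hand side all discriminating atoms must lie. Part (i) is essentially identical in the two treatments (your redundant/non-redundant case split on $\phi \in F$ is the paper's argument); note only that your labels ``forward'' and ``reverse'' there are swapped relative to the implications you actually prove, a cosmetic slip since both directions of the equivalence are covered.
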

\begin{proof}
(i) Since $F = \cup_S F_{S}$, an atom that is external to $F$ must be external to every $F_{S}$. In the other direction,  suppose $\phi$ is a non-redundant atom of $F_C(R^{+})$ that is external to every $F_S$.  Since $F_S \equiv F_{C}(R^{+} \cup S)$ the atoms of $F_S$ are all atoms of $F_{C}(R^{+})$ so we can write $F_S \subset F_{C}(R^{+})$ and from here $F = \cup_S F_S \subseteq F_{C}(R^{+})$. Since the atoms of $F$ are all atoms in $F_{C}(R^{+})$ then $\phi$ cannot be redundant with $F$ because if it were then it would also be redundant with $F_{C}(R^{+})$ and it is not. If $\phi$ is non-redundant in $F$ then it is non-redundant in some $F_S$ and this contradicts our assumption. Therefore $\phi$ is external to $F$. \\
(ii)  Suppose we have a model $M$ and a couple of duples $r = (r_L, r_R)$ and $s = (s_L, s_R)$ such that $M \models s^{-}$ and $\square_r M \models s^{+}$ (we write this as $M \models (r^{+} \Rightarrow s^{+})$). Theorem \ref{inferenceTheorem} proves that $M \models (r^{+} \Rightarrow s^{+})$ if and only if $M \models (r_R \leq s_R)  \wedge (s_L \leq s_R \odot r_L)$.  \\
Suppose a duple $s$ and a solution $S$ of $P$ such that $F_C(R^{+}) \models s^{-}$ and $F_S \models s^{+}$.  Then $F_{C}(R^{+} \cup S) \models s^{+}$. Let $\{ q_1, q_2, ..., q_n\} = S$. There is some $i \leq n$ and $i > 0$ such that $F_{C}(R^{+} \cup \{ q_1, q_2, ..., q_i\}) \models s^{+}$ and $F_{C}(R^{+} \cup \{ q_1, q_2, ..., q_{i-1}\}) \models s^{-}$. Let $r = q_{i}$ and $M =  F_{C}(R^{+} \cup \{ q_1, q_2, ..., q_{i-1}\})$ and let's use the result above to write: $M \models (r_R \leq s_R)  \wedge (s_L \leq s_R \odot r_L)$.  \\
All the duples in $\{ q_1, q_2, ..., q_n\}$ are duples over $Q$ so $r$ is a duple over $Q$. Any atom $\phi$ of $M$ that discriminates $s$ is an atom $(\phi < s_L) \wedge (\phi \not< s_R)$, that together with $M \models (s_L \leq s_R \odot r_L)$ implies $(\phi < r_L)$.  Since $r$ is a duple over $Q$ then $U^c(\phi)  \cap Q \not= \emptyset$.  \\
(iii) We are going to use the result: \\
\emph{- If $\phi$ is a non-redundant atom in a model $M$ with constants $C$ then either $U^c(\phi) = C$ or there is at least one duple $(c, t)$ that is discriminated in $M$ only by $\phi$ where $c$ is a constant in the set $U^c(\phi)$ and $t$ is equal to any idempotent summation of constants in $C - U^c(\phi)$}. \\
Applying this result to a non-redundant atom $\phi$ in $F_C(R^{+})$ with $U^c(\phi) \not= C$ that is external to at least one $F_S$ follows that there is a duple $s = (c, t)$ that is discriminated in $F_C(R^{+})$ only by $\phi$ and then $F_S \models s^{+}$. In this situation proposition (ii) tells us that $\phi^{|Q}$ exists. In case $U^c(\phi) = C$ then $\phi^{|Q}$ exists and has upper segment equal to $Q$. Therefore, every non-redundant atom in $F_C(R^{+})$ that is external to $F_S$ has a restriction $\phi^{|Q}$. \\ What about redundant atoms? A redundant atom $\phi$ with $F_C(R^{+})$ is a union of non-redundant atoms of $F_C(R^{+})$. Let $X$ be a set of non-redundant atoms of $F_C(R^{+})$ which union is equal to $\phi$. If $\phi$ is external to $F_S$ there should be at least one non-redundant atom $\varphi$ in $X$ external to $F_S$, so the upper constant segment of  $\phi$ contains the upper constant segment of $\varphi$ and, because $\varphi^{|Q}$ exists, then $\phi$ cannot have null intersection with $Q$, so $\phi^{|Q}$ also exists. \\
(iv) It is the negation of prop iii.
\end{proof}

\bigskip
\begin{theorem} \label{noResidualAtomModelEquality} 
Consider an embedding with embedding set $R$ and interpretation constants $Q$:  \\
i) $\alpha \not= \ominus_Q$ is a residual atom if and only if $\alpha$ is a non-redundant atom of $F_C(R^{+})^{|Q}$ and is external to the restriction to $Q$ of the freest solution model, i.e. $F_S^{|Q}$, of every solution $S$ of $P$.  \\
ii) A residual atom is external to $F^{|Q}$ where $F = \cup_S F_S$.  \\
iii) An embedding has no residual atoms if and only if $F^{|Q} = F_C(R^{+})^{|Q}$.  \\
iv) If $\alpha$ is residual then there is at least one non-redundant atom $\phi \in F_{C}(R^{+})$ such that $\alpha = \phi^{|Q}$. \\
v) Any non-redundant atom $\phi$ in $F_{C}(R^{+})$ such that $\phi^{|Q}$ is residual is external to $F = \cup_S F_S$.  \\
vi) Suppose the embedding has no residual atoms and let $\phi$ be a non-redundant atom of $F_{C}(R^{+})$ such that $\phi^{|Q}$ exists, is non-redundant in  $F_C(R^{+})^{|Q}$ and is different than $\ominus_Q$; there is some solution $S$ such that $\phi^{|Q} \in F_S^{|Q}$. 
\end{theorem}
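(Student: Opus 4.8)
The plan is to read off the conclusion directly from the residual-atom characterization established in part (i), invoking the no-residual-atoms hypothesis. First I would record exactly what the hypotheses supply: $\phi^{|Q}$ is a non-redundant atom of $F_C(R^{+})^{|Q}$ and $\phi^{|Q} \neq \ominus_Q$. These are precisely the standing assumptions under which part (i) operates, so no extra bookkeeping about upper segments is needed beyond noting that $\phi^{|Q} \neq \ominus_Q$, which by the grounding conventions of Section \ref{groundedModels} is equivalent to requiring an upper constant segment different from $Q$ (the clause appearing in Definition \ref{residualComponent}).

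Next I would apply part (i) in contrapositive form. Part (i) asserts that a non-redundant atom $\alpha \neq \ominus_Q$ of $F_C(R^{+})^{|Q}$ is residual if and only if it is external to $F_S^{|Q}$ for every solution $S$. Since the embedding has no residual atoms, $\phi^{|Q}$ is not residual; as $\phi^{|Q}$ already satisfies the non-redundancy and $\neq \ominus_Q$ clauses of that characterization, the only way it can fail to be residual is for the externality clause to fail. Hence $\phi^{|Q}$ is \emph{not} external to $F_S^{|Q}$ for every $S$; unwinding the convention that $\psi \in M$ means exactly that $\psi$ is not external to $M$, this produces a solution $S$ with $\phi^{|Q} \in F_S^{|Q}$, which is the desired statement.

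As a cross-check, and to exhibit the witnessing solution more constructively, I would also sketch the route through part (iii). By (iii) the no-residual-atoms hypothesis is equivalent to $F^{|Q} = F_C(R^{+})^{|Q}$, where $F = \cup_S F_S$, so $\phi^{|Q}$ is non-redundant in $F^{|Q}$. Theorem \ref{restrictionAndRedundacyLemma} then supplies a non-redundant atom $\psi$ of $F$ with $\psi^{|Q} = \phi^{|Q}$; since a non-redundant atom of $\cup_S F_S$ is non-redundant in some individual $F_S$ (the same fact used in the proof of part (i)), restricting to $Q$ gives $\phi^{|Q} = \psi^{|Q} \in F_S^{|Q}$ for that $S$.

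I expect no genuine obstacle here, since the content is entirely packaged in parts (i) and (iii); the only points demanding care are (a) matching the $\neq \ominus_Q$ hypothesis to the upper-segment clause in the definition of a residual atom, and (b) treating the membership symbol $\in$ as ``not external'' rather than ``non-redundant'', so that a restricted atom that happens to become redundant in $F_S^{|Q}$ still counts as belonging to $F_S^{|Q}$. Both are routine once stated explicitly, and either route above closes the argument.
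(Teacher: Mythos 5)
Your primary argument for part (vi) is, step for step, the paper's own proof: the paper supposes no witnessing solution exists, so that $\phi^{|Q}$ is external to every $F_S^{|Q}$, and then invokes part (i) to conclude $\phi^{|Q}$ is residual, contradicting the no-residual-atoms hypothesis; your contrapositive phrasing is the same two-line argument. Your two points of care are also exactly right: $\alpha \neq \ominus_Q$ is the paper's rendering of the clause ``upper segment different than $Q$'' in Definition \ref{residualComponent}, and $\in$ must be read as ``non-redundant or redundant'' (i.e.\ not external), since the restricted atom may well become redundant in $F_S^{|Q}$. Your cross-check route through part (iii) and Theorem \ref{restrictionAndRedundacyLemma} is likewise sound: it yields a non-redundant atom $\psi$ of $F = \cup_S F_S$ with $\psi^{|Q} = \phi^{|Q}$, non-redundancy in $F$ passes to some individual $F_S$ (an atom redundant with $F_S$ would be redundant with $F$), and Theorem \ref{restrictionLemma} then places $\psi^{|Q}$ among the atoms of $F_S^{|Q}$; this is essentially the machinery the paper itself deploys in parts (iv) and (v).

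The genuine gap is one of scope. The statement is the full six-part theorem, and your proposal proves only part (vi), treating parts (i) and (iii) as ``established.'' Within this theorem they are not prior results but the bulk of what must be shown, and they carry the real content. Part (i) in particular is not bookkeeping: Definition \ref{residualComponent} quantifies over \emph{every} solution model $M_S$ of $P$, and part (i) reduces that to the freest solution models alone; the paper's proof of (i) needs that $F_S$ is freer than any $M_S$ (Theorem \ref{freestSolutionModel}), so that every atom of $M_S^{|Q}$ is an atom of $F_S^{|Q}$ and externality to $F_S^{|Q}$ propagates to every $M_S^{|Q}$. Similarly, (ii), (iii) and (v) rest on the inclusions $F_S \subset F_C(R^{+})$ and $F^{|Q} \subset F_C(R^{+})^{|Q}$, and (iv) rests on Theorem \ref{restrictionAndRedundacyLemma}. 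Since both of your routes lean precisely on these unproven parts, the proposal as written establishes only the final clause of the theorem, conditionally on the five clauses it cites; to count as a proof of the statement you would need to supply the arguments for (i)--(v) as well.
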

\begin{proof}
(i) An atom $\alpha$ is residual if it is a non-redundant atom of $F_C(R^{+})^{|Q}$ different than $\ominus_Q$ that is external to every $M_S^{|Q}$ where $M$ is any solution model of $P$. The freest solution model $F_S$ of a solution $S$ is (equal or) freer than any solution model $M$ of $S$ and then $F_S^{|Q}$ is freer than every $M_S^{|Q}$. Therefore, any atom of that is external to $F_S^{|Q}$ is external to every $M_S^{|Q}$. Since  $F_S$ is also a model of $S$ then we have the equivalent statement for a definition of redundant atom:  an atom $\alpha$ is residual if and only if $\alpha$ is a non-redundant atom of $F_C(R^{+})^{|Q}$ different than $\ominus_Q$ that is external to the solution model $F_S^{|Q}$ of every solution of $P$.   \\
(ii) The proof is similar to that of theorem \ref{restrictionToQExistsTheorem} part i.  Since $F_S = F_C(R^{+}\cup S)$ then $F_C(R^{+})$ is freer than $F_S$ so the atoms of $F_S$ are all atoms of $F_C(R^{+})$, i.e. $F_S  \subset F_C(R^{+})$. Therefore $F \equiv \cup_S F_S \subset F_C(R^{+})$ and then $F^{|Q} = \cup_S F_S^{|Q} \subset F_C(R^{+})^{|Q}$. A residual atom $\alpha$ must be a non-redundant atom of $F_C(R^{+})^{|Q}$, so it is either a non-redundant atom of $F^{|Q}$ or external to $F^{|Q}$. Now, $F^{|Q} = \cup_S F_S^{|Q}$ implies that every non-redundant atom of $F^{|Q}$ should be a non-redundant atom of at least one $F_S^{|Q}$ and proposition i. says that $\alpha$ is external to every $F_S^{|Q}$, hence, $\alpha$ must be external to $F^{|Q}$.   \\
(iii) We just showed that $F^{|Q} \subset F_C(R^{+})^{|Q}$. Assume $\alpha$ is a non-redundant atom of $F_C(R^{+})^{|Q}$ external to $F^{|Q}$. Notice that $\ominus_Q$ is an atom of every model over $Q$ so, if $\alpha$ is external to $F^{|Q}$ then $\alpha \not= \ominus_Q$. Since $F^{|Q} = \cup_S F_S^{|Q}$ then $\alpha$ must be external to every $F_S^{|Q}$ and therefore, from proposition i. $\alpha$ is residual. It follows that, if the embedding has no residual atoms, every non-redundant atom $\alpha$ of $F_C(R^{+})^{|Q}$ must be in $F^{|Q}$ and then $F^{Q} = F_C(R^{+})^{|Q}$. In the other direction, using part ii, a residual atom is external to $F^{|Q}$ so if  $F^{|Q} = F_C(R^{+})^{|Q}$ it would be also external to $F_C(R^{+})^{|Q}$ which contradicts the definition of residual atom. Therefore, $F^{|Q} = F_C(R^{+})^{|Q}$ implies the embedding has no residual atoms.   \\
(iv) A residual atom $\alpha$ is non-redundant atom in $F_{C}(R^{+})^{|Q}$.  From theorem \ref{restrictionAndRedundacyLemma} follows that there is at least one non-redundant atom $\phi \in F_{C}(R^{+})$ such that $\alpha = \phi^{|Q}$. \\
(v) We showed above that $F = \cup_S F_S \subseteq F_{C}(R^{+})$. Since the atoms of $F$ are all atoms in $F_{C}(R^{+})$ then $\phi$ cannot be redundant with $F$ because, if it were, it would also be redundant with $F_{C}(R^{+})$ and it is not. In addition, if $\phi$ is non-redundant in $F$ then it would be non-redundant in some $F_S$ and then $\alpha \in F_S^{|Q}$ against the assumption that $\phi^{|Q}$ is a residual atom. Since $\phi$ is neither redundant with $F$ nor a non-redundant atom in $F$ then $\phi \not\in F$.   \\
(vi) Suppose such solution $S$ does not exist, i.e. for every solution $\phi^{|Q}  \not\in F_S^{|Q}$.  Since $\phi^{|Q}$ is non-redundant in $F_C(R^{+})^{|Q}$ and is not equal to $\ominus_Q$, prop. i tells us that $\phi^{|Q}$ is a residual atom, against our assumptions.   
\end{proof}
\bigskip

\begin{theorem} \label{residualAtomIffPositiveAtomicTheorem} 
An embedding of a problem $P$ has a residual atom if and only if there is a positive atomic sentence $r^{+}$ over $Q$ such that $R \wedge {\bf{\varphi_S}} \vdash r^{+}$ for every solution $S$ of $P$ and $R \not\vdash r^{+}$. 
\end{theorem}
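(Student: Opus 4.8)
The plan is to show that each side of the biconditional is equivalent to the single model-theoretic statement $F^{|Q} \neq F_C(R^{+})^{|Q}$, where $F = \cup_S F_S$, and then to invoke Theorem \ref{noResidualAtomModelEquality}(iii), which already equates the existence of a residual atom with $F^{|Q} \neq F_C(R^{+})^{|Q}$. This reduces the whole theorem to translating the right-hand condition into that same inequality.

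First I would translate the two entailments into statements about freest models. By Theorem \ref{freestSolutionModel}, $F_S$ is the freest model of $R \wedge {\bf{\varphi_S}}$, so $R \wedge {\bf{\varphi_S}} \vdash r^{+}$ holds exactly when $F_S \models r^{+}$; by the linearity of the positive theory ($Th_0^{+}(M+N) = Th_0^{+}(M) \cap Th_0^{+}(N)$, as used in the proof of Theorem \ref{segregationTheorem}), the condition $\forall S (F_S \models r^{+})$ is equivalent to $F \models r^{+}$. For the second clause, I must turn $R \not\vdash r^{+}$ into a statement about $F_C(R^{+})$: since $R$ is consistent (it has solution models), Theorem \ref{segregationTheorem}(i) gives $R \vdash r^{+}$ iff $R^{+} \vdash r^{+}$, and $R^{+} \vdash r^{+}$ iff $F_C(R^{+}) \models r^{+}$ because $F_C(R^{+})$ is the freest model of $R^{+}$. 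Hence the right-hand side of the theorem is equivalent to: there is a positive duple $r^{+}$ over $Q$ with $F \models r^{+}$ and $F_C(R^{+}) \not\models r^{+}$.

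Next I would push everything down to the restriction to $Q$. Because a model and its restriction to $Q$ agree on the sign of every duple over $Q$ (the fact used in the proof of Theorem \ref{conciseEmbeddingTheorem}), the existence of such an $r^{+}$ is equivalent to the existence of a positive duple true in $F^{|Q}$ but false in $F_C(R^{+})^{|Q}$. Since $F \subset F_C(R^{+})$ (every atom of each $F_S$ is an atom of $F_C(R^{+})$), Theorem \ref{restrictionLemma} gives $F^{|Q} \subset F_C(R^{+})^{|Q}$, so adding atoms can only destroy positive duples and $Th_0^{+}(F_C(R^{+})^{|Q}) \subseteq Th_0^{+}(F^{|Q})$. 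Therefore a separating $r^{+}$ exists precisely when the two positive theories differ, i.e. precisely when $F^{|Q} \neq F_C(R^{+})^{|Q}$ (a semilattice model over $Q$ is determined by the signs of its duples over $Q$, that is, by its positive theory, which is exactly the identification of models with agreement on duple signs that Theorem \ref{conciseEmbeddingTheorem} relies on).

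Finally, Theorem \ref{noResidualAtomModelEquality}(iii) states that the embedding has no residual atom iff $F^{|Q} = F_C(R^{+})^{|Q}$, so the existence of a residual atom is equivalent to $F^{|Q} \neq F_C(R^{+})^{|Q}$, which we have just shown is equivalent to the right-hand side. The main obstacle I anticipate is the faithful translation of the syntactic $\vdash$ into the freest-model semantics, in particular correctly invoking Theorem \ref{segregationTheorem}(i) (together with a consistency remark) to replace $R$ by $R^{+}$ in the clause $R \not\vdash r^{+}$, and the routine but essential observation that restriction preserves both the subset relation and the sign of over-$Q$ duples, so that the comparison can be carried out entirely inside the algebra over $Q$.
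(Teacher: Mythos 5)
Your proposal is correct, but it takes a genuinely different route from the paper's own proof. The paper argues directly at the level of atoms and discriminants: for the forward direction it takes a residual atom $\alpha$, uses Theorem \ref{noResidualAtomModelEquality}(iv) to lift it to a non-redundant atom $\phi$ of $F_C(R^{+})$ with $\alpha = \phi^{|Q}$, and exhibits an explicit witness duple $r$ over $Q$ discriminated only by $\alpha$, so that $F_C(R^{+}) \models r^{-}$ while every $F_S \models r^{+}$; for the converse it takes a non-redundant atom of $F_C(R^{+})$ discriminating the given duple and argues by cases (its restriction being non-redundant or redundant in $F_C(R^{+})^{|Q}$) to produce a residual atom. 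You instead translate both entailments semantically ($R \wedge {\bf{\varphi_S}} \vdash r^{+}$ iff $F_S \models r^{+}$ by Theorem \ref{freestSolutionModel}; $R \vdash r^{+}$ iff $F_C(R^{+}) \models r^{+}$ via consistency and Theorem \ref{segregationTheorem}(i)), use linearity to replace the quantification over solutions by satisfaction in $F = \cup_S F_S$, push everything down to $Q$, and note that since $Th_0^{+}(F_C(R^{+})^{|Q}) \subseteq Th_0^{+}(F^{|Q})$ and a semilattice over $Q$ is determined by its positive theory, the right-hand side of the theorem is precisely $F^{|Q} \neq F_C(R^{+})^{|Q}$; Theorem \ref{noResidualAtomModelEquality}(iii) then closes the loop. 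What your reduction buys is modularity: all the discriminant work is delegated to the already-proven part (iii), nothing is re-derived, and the argument is a clean chain of equivalences. What the paper's proof buys is constructive content: it tells you concretely which duple witnesses the failure of completeness (one discriminated only by the residual atom) and, conversely, which atom a witnessing duple yields — information that your version hides inside the comparison of two models. Two tacit points in your write-up are fine but worth making explicit: consistency of $R$ (needed to invoke Theorem \ref{segregationTheorem}(i)) holds because the embedding has solution models, and the construction $F = \cup_S F_S$ presupposes that $P$ has at least one solution, an assumption equally implicit in the paper's proof.
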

\begin{proof}
Let $R$ be the embedding set over the embedding constants $C$. Define the set of atoms $F = \cup_S F_S$ where $F_S$ is an atomization of the freest solution model $F_{C}(R^{+} \cup S)$.  \\
$(\Rightarrow)$ Suppose $\alpha$ is a residual atom of the embedding. From theorem \ref{noResidualAtomModelEquality} we know that there is at least one non-redundant atom $\phi \in F_{C}(R^{+})$ such that $\alpha = \phi^{|Q}$. If $\alpha$ is non-redundant in $F_{C}(R^{+})^{|Q}$ with $U^{c}(\alpha) \not= Q$ then there is at least one duple $r$ of $Q$ discriminated only by $\phi^{|Q}$ in $F_{C}(R^{+})^{|Q}$ and discriminated by $\phi$ in $F_{C}(R^{+})$ so  $F_{C}(R^{+}) \models r^{-}$ and then $R \not\vdash r^{+}$. Since $\alpha$ is not an atom in any $F_S^{|Q}$ and every $F_S^{|Q} \subset F_{C}(R^{+})^{|Q}$ (see the proof of theorem \ref{noResidualAtomModelEquality}) then $r$ is not discriminated in any solution model $F_S^{|Q}$. This means that $\forall S (F_S \models r^{+})$ or, in other words, $\forall S (F_{C}(R^{+} \cup S) \models r^{+})$ that can also be written as $\forall S  (R \wedge {\bf{\varphi_S}} \vdash r^{+})$. \\
$(\Leftarrow)$ Conversely if there is a duple $r$ over $Q$ such that $R \not\vdash r^{+}$ and $R \wedge {\bf{\varphi_S}} \vdash r^{+}$ for every solution $S$ of $P$ then $F_{C}(R^{+}) \models r^{-}$ and it follows that there is at least one non-redundant atom $\phi$ in the freest model $F_{C}(R^{+})$ that discriminates $r$. If $\phi^{|Q}$ is a non-redundant atom of $F_{C}(R^{+})^{|Q}$ then $\phi^{|Q}$ cannot be an atom of any $F_S^{|Q}$ (otherwise $R \wedge {\bf{\varphi_S}} \vdash r^{-}$) and then $\phi^{|Q}$ is a residual atom of the embedding. The atom $\phi^{|Q}$ may, however, be redundant in $F_{C}(R^{+})^{|Q}$. If $\phi^{|Q}$ is redundant with $F_{C}(R^{+})^{|Q}$ then there should be some non-redundant atom $\alpha$ in $F_{C}(R^{+})^{|Q}$ that discriminates $r$. From  theorem \ref{noResidualAtomModelEquality} we know that $\alpha$ is the restriction of at least one non-redundant atom $\beta$ in $F_{C}(R^{+})$. Since $F_S^{|Q} \models r^{+}$ then $\alpha = \beta^{|Q}$ cannot be in any $F^{|Q}_S$ and $\alpha$ is a residual atom of the embedding.
\end{proof}

\bigskip
\begin{theorem} \label{residualAtomTheorem} 
(i) An embedding that has no residual atoms is complete.   \\
(ii)  An embedding that is complete and concise has no residual atoms.
\end{theorem}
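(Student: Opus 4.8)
The plan is to reduce both statements to the characterization already obtained in Theorem \ref{residualAtomIffPositiveAtomicTheorem}, which says that the embedding has a residual atom if and only if there is a positive atomic sentence $r^{+}$ over $Q$ with $\forall S\,(R \wedge {\bf{\varphi_S}} \vdash r^{+})$ and $R \not\vdash r^{+}$. Both parts then become a matter of matching this condition against the definitions of completeness and conciseness, the only subtlety being the difference between the entailment ${\bf{\varphi_S}} \vdash r^{+}$ appearing in Definition \ref{completeEmbedding} and the entailment $R \wedge {\bf{\varphi_S}} \vdash r^{+}$ appearing in the characterization.

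For part (i) I would argue by contraposition. Suppose the embedding is not complete; then by Definition \ref{completeEmbedding} there is a positive atomic sentence $r^{+}$ over $Q$ with $\forall S\,({\bf{\varphi_S}} \vdash r^{+})$ yet $R \not\vdash r^{+}$. Since adding hypotheses never destroys an entailment, ${\bf{\varphi_S}} \vdash r^{+}$ gives $R \wedge {\bf{\varphi_S}} \vdash r^{+}$ for every $S$, so the pair $(r^{+}, R)$ satisfies exactly the two conditions of Theorem \ref{residualAtomIffPositiveAtomicTheorem}. Hence the embedding has a residual atom, contradicting the hypothesis of (i). Therefore an embedding with no residual atoms is complete.

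For part (ii) I would again proceed by contradiction, now exploiting conciseness to run the implication in the opposite direction. Assume the embedding is complete and concise but has a residual atom. By Theorem \ref{residualAtomIffPositiveAtomicTheorem} there is a positive atomic sentence $r^{+}$ over $Q$ with $\forall S\,(R \wedge {\bf{\varphi_S}} \vdash r^{+})$ and $R \not\vdash r^{+}$. Because $r^{+}$ is over $Q$, Definition \ref{concise} lets me strip the embedding set from the left-hand side: $R \wedge {\bf{\varphi_S}} \vdash r^{+}$ is equivalent to ${\bf{\varphi_S}} \vdash r^{+}$, so $\forall S\,({\bf{\varphi_S}} \vdash r^{+})$. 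Completeness (Definition \ref{completeEmbedding}) then forces $R \vdash r^{+}$, contradicting $R \not\vdash r^{+}$. Hence a complete and concise embedding has no residual atoms.

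The only real work sits inside Theorem \ref{residualAtomIffPositiveAtomicTheorem}, which has already been established; here the main point to get right is the asymmetry between the two formulations of the entailment. In part (i) the passage from ${\bf{\varphi_S}} \vdash r^{+}$ to $R \wedge {\bf{\varphi_S}} \vdash r^{+}$ is free by monotonicity of $\vdash$, whereas in part (ii) the converse passage genuinely uses conciseness and is precisely where a complete but non-concise embedding can still harbor a residual atom. The restriction ``upper segment different than $Q$'' in the definition of residual atom is already absorbed into Theorem \ref{residualAtomIffPositiveAtomicTheorem}, so no separate bookkeeping over the upper segments of atoms is needed.
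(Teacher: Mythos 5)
Your proof is correct and takes essentially the same route as the paper: both reduce the statement to Theorem \ref{residualAtomIffPositiveAtomicTheorem}, handle part (i) via monotonicity of $\vdash$ (you phrase it contrapositively, the paper directly, which is logically the same step), and handle part (ii) by combining conciseness to pass from $R \wedge {\bf{\varphi_S}} \vdash r^{+}$ to ${\bf{\varphi_S}} \vdash r^{+}$ with completeness to conclude $R \vdash r^{+}$. Your closing remark about the asymmetry between the two entailments is exactly the point the paper's proof exploits.
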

\begin{proof}
Theorem \ref{residualAtomIffPositiveAtomicTheorem} says that an embedding of a problem $P$ has a residual atom if and only if $\exists r((R \not\vdash r^{+}) \wedge \forall S(R \wedge {\bf{\varphi_S}} \vdash r^{+}))$ where $r$ is a duple over the embedding constants $Q$ and the universal quantifier runs along the solutions $S$ of $P$.  \\
i) If an embedding has no residual atoms then $\forall r ((R \vdash r^{+}) \vee \exists S(R \wedge {\bf{\varphi_S}} \not\vdash r^{+}))$. Since $\exists S(R \wedge {\bf{\varphi_S}} \not\vdash r^{+})$ implies $\exists S({\bf{\varphi_S}} \not\vdash r^{+})$ then $\forall r ((R \vdash r^{+}) \vee \exists S({\bf{\varphi_S}} \not\vdash r^{+}))$ which is the definition of a complete embedding.  \\
ii) If an embedding is concise then $R \wedge {\bf{\varphi_S}} \vdash r^{+}$ implies ${\bf{\varphi_S}} \vdash r^{+}$ when $r$ is a duple over $Q$. If the model is also complete then $\forall S({\bf{\varphi_S}} \vdash r^{+})$ implies $R \vdash r^{+}$, in other words $\forall r (\forall S(R \wedge {\bf{\varphi_S}} \vdash r^{+})  \Rightarrow (R \vdash r^{+}))$ and then theorem \ref{residualAtomIffPositiveAtomicTheorem} tells us that the embedding has no residual atoms. 
\end{proof}

\bigskip

\begin{theorem} \label{equivalenceTheorem} 
Two concise and complete embeddings with embedding constants $C_1$ and $C_2$ respectively sharing the same interpretation $(Q, {\bf{\varphi}}, \Gamma)$ have the same non-redundant atoms restricted to $Q$, i.e $F_{C1}(R_1^{+})^{|Q}= F_{C2}(R_2^{+})^{|Q}$.
\end{theorem}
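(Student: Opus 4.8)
The plan is to route everything through the restriction-to-$Q$ of the per-solution freest models, using conciseness to pin each $F_S^{|Q}$ to the interpretation-only model $F_Q(S)$, and using completeness only to recover the whole union $\cup_S F_S$ as $F_C(R^{+})$ after restriction.

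First I would invoke Theorem \ref{residualAtomTheorem}(ii): since each of the two embeddings is both concise and complete, neither has a residual atom. Then Theorem \ref{noResidualAtomModelEquality}(iii) converts this into a model equality, $F^{|Q} = F_C(R^{+})^{|Q}$ with $F = \cup_S F_S$. Applying it to both embeddings yields $F_{C1}(R_1^{+})^{|Q} = F_1^{|Q}$ and $F_{C2}(R_2^{+})^{|Q} = F_2^{|Q}$, where $F_i = \cup_S F_{iS}$. This reduces the whole claim to showing $F_1^{|Q} = F_2^{|Q}$, i.e. that the two unions of freest solution models agree after restriction to $Q$.

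Next I would treat the solutions one at a time. Because both embeddings are concise and share the interpretation $(Q, {\bf{\varphi}}, \Gamma)$, Theorem \ref{conciseEmbeddingTheorem} gives $F_{1S}^{|Q} = F_Q(S) = F_{2S}^{|Q}$ for every solution $S$; equivalently, Theorem \ref{weakEquivalenceTheorem} directly asserts that the two restricted freest solution models have the same non-redundant atoms. Finally I would promote this per-solution equality to the unions: by Theorem \ref{restrictionLemma} restriction acts atom-by-atom and independently of the atomization, so the restriction of a model spawned by a union of atom-sets is the model spawned by the union of the restricted atoms, giving $F_i^{|Q} = \cup_S F_{iS}^{|Q}$. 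Combining the two steps, $F_1^{|Q} = \cup_S F_Q(S) = F_2^{|Q}$, and therefore $F_{C1}(R_1^{+})^{|Q} = F_{C2}(R_2^{+})^{|Q}$.

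The main obstacle is the bookkeeping of the commutation ``restriction of a union equals union of restrictions'' at the level of non-redundant atoms: one must be sure that forming the union of models, restricting to $Q$, and discarding redundant atoms may be carried out in any order without changing the resulting model, which rests precisely on the atomization-independence in Theorem \ref{restrictionLemma}. It is also worth flagging where completeness is genuinely used: conciseness alone only yields the per-solution equality $F_{1S}^{|Q} = F_{2S}^{|Q}$, so without completeness one obtains at best $F_i^{|Q} \sqsubset F_{Ci}(R_i^{+})^{|Q}$ (residual atoms may intervene), and Theorem \ref{noResidualAtomModelEquality}(iii) is exactly what closes that gap between Theorem \ref{weakEquivalenceTheorem} and the present statement.
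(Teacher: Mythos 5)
Your proposal is correct and follows essentially the same route as the paper's own proof: both use Theorem \ref{conciseEmbeddingTheorem} to pin each $F_{iS}^{|Q}$ to $F_Q(S)$, and Theorem \ref{residualAtomTheorem} together with Theorem \ref{noResidualAtomModelEquality}(iii) to identify $F_i^{|Q}$ with $F_{Ci}(R_i^{+})^{|Q}$, then chain the equalities. Your explicit justification of the step $F_i^{|Q} = \cup_S F_{iS}^{|Q}$ via Theorem \ref{restrictionLemma} is a point the paper leaves implicit, but it is the same argument.
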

\begin{proof}
From theorem  \ref{conciseEmbeddingTheorem} we know that in a concise embedding $F_S^{|Q} = F_{Q}(S)$ and then $F^{|Q} =  \cup_S F_{Q}(S)$. Since embeddings 1 and 2 are both concise then $F_1^{|Q} =  \cup_S F_{Q}(S) = F_2^{|Q}$. Theorem \ref{residualAtomTheorem} proves that our two embeddings have no residual atoms and then theorem \ref{noResidualAtomModelEquality} (iii) shows that $F_1^{|Q} = F_{C1}(R^{+})^{|Q}$ and $F_2^{|Q} = F_{C2}(R^{+})^{|Q}$. Putting all together $F_{C1}(R^{+})^{|Q} = F_1^{|Q} =  \cup_S F_{Q}(S) = F_2^{|Q} = F_{C2}(R^{+})^{|Q}$ as we wanted to prove.
\end{proof}

\bigskip

\begin{theorem} \label{sameSolutionAtomsTheorem} 
Consider two embeddings $1$ and $2$ of a problem $P$ with the same interpretation $(Q, {\bf{\varphi}}, \Gamma)$ and let $R_1$ and $R_2$ be the embeddings sets and $C_1$ and $C_2$ the embedding constants. Let $\eta$ be a non-redundant atom of $F_{1S}^{|Q}$ where $S$ is a solution of $P$.  \\ 
i) If both embeddings are concise then $\eta$ is a non-redundant atom of $F_{2S}^{|Q}$. \\ 
ii) If both embeddings are concise and tight there are non-redundant atoms $\phi_1 \in F_{C1}(R_1^{+})$ and $\phi_2 \in F_{C2}(R_2^{+})$ such that $\phi_1^{|Q} = \phi_2^{|Q} = \eta$.
\end{theorem}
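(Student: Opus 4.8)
The plan is to handle the two parts in sequence, reducing part (i) to the characterization of conciseness and then bootstrapping part (ii) with tightness.

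For part (i), I would first invoke theorem \ref{conciseEmbeddingTheorem}, which tells us that a concise embedding satisfies $F_S^{|Q} = F_Q(S)$ for every solution $S$. Applying this to both embeddings gives $F_{1S}^{|Q} = F_Q(S) = F_{2S}^{|Q}$, so the two restricted freest solution models are literally the same semilattice model over $Q$ (this is also the content of theorem \ref{weakEquivalenceTheorem}, that the two embeddings share the same non-redundant atoms restricted to $Q$; note $F_Q(S)$ depends only on the common interpretation $(Q, {\bf{\varphi}}, \Gamma)$, not on the embedding). Since the set of non-redundant atoms of a model is well defined, and $\eta$ is by hypothesis a non-redundant atom of $F_{1S}^{|Q}$, it is therefore a non-redundant atom of the identical model $F_{2S}^{|Q}$. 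This settles part (i) with essentially no computation.

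For part (ii), the idea is to exhibit the pre-images $\phi_1$ and $\phi_2$ by lifting $\eta$ through the restriction map and then pushing it up from the freest solution model to the freest model of the positive duples using tightness. Concretely, since $\eta$ is non-redundant in $F_{1S}^{|Q}$, theorem \ref{restrictionAndRedundacyLemma} applied to $M = F_{1S}$ produces a non-redundant atom $\phi_1$ of $F_{1S}$ with $\phi_1^{|Q} = \eta$. Now tightness of embedding $1$ (definition \ref{tightEmbedding}) guarantees that every non-redundant atom of $F_{1S} = F_{C1}(R_1^{+} \cup S)$ is a non-redundant atom of $F_{C1}(R_1^{+})$, so $\phi_1$ is a non-redundant atom of $F_{C1}(R_1^{+})$, as required. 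By part (i) the same atom $\eta$ is non-redundant in $F_{2S}^{|Q}$, so repeating the argument with embedding $2$ — theorem \ref{restrictionAndRedundacyLemma} to obtain a non-redundant $\phi_2$ of $F_{2S}$ with $\phi_2^{|Q} = \eta$, then tightness to locate $\phi_2$ among the non-redundant atoms of $F_{C2}(R_2^{+})$ — yields the companion atom. Since both restrict to $\eta$, we get $\phi_1^{|Q} = \phi_2^{|Q} = \eta$.

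I do not expect a serious obstacle here; both parts are assembled from earlier results. The one point that needs care is making sure the atom delivered by theorem \ref{restrictionAndRedundacyLemma} is non-redundant in the freest solution model $F_S$ (not merely in some coarser solution model), because it is precisely non-redundancy in $F_S$ that tightness converts into non-redundancy in $F_C(R^{+})$; applying \ref{restrictionAndRedundacyLemma} directly to $M = F_{1S}$ and $M = F_{2S}$ secures this. A secondary subtlety is that theorem \ref{restrictionAndRedundacyLemma} only asserts the existence of some pre-image atom, so $\phi_1$ and $\phi_2$ need not be unique nor related to one another except through their common restriction $\eta$; since the statement asks only for existence, this is harmless.
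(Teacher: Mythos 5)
Your proposal is correct and follows essentially the same route as the paper: part (i) reduces to theorem \ref{conciseEmbeddingTheorem} giving $F_{1S}^{|Q} = F_{Q}(S) = F_{2S}^{|Q}$, and part (ii) lifts $\eta$ to a non-redundant atom of $F_{1S}$ (via theorem \ref{restrictionAndRedundacyLemma}) and then uses tightness to place it among the non-redundant atoms of $F_{C1}(R_1^{+})$, repeating for embedding $2$ via part (i). The only difference is that the paper's proof states this lifting step in one compressed sentence without explicitly citing theorem \ref{restrictionAndRedundacyLemma}, whereas you make the citation and the order of the two steps explicit, which is a faithful unpacking rather than a different argument.
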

\begin{proof}
(i) Since the embeddings are concise in agreement with theorem \ref{conciseEmbeddingTheorem}  so, since $\eta$ is a non-redundant atom of $F_{1S}^{|Q}$ then it is also a non-redundant atom of $F_{2S}^{|Q}$. \\ 
(ii) If the embeddings are tight every non-redundant atom $\eta$ of $F_{1S}^{|Q}$ is equal to the the restriction to $Q$ of at least one non-redundant atom $\phi_1$ of $F_{C1}(R_1^{+})$. From prop. i there is also at least one non-redundant atom $\phi_2$ of $F_{C2}(R_2^{+})$ with restriction equal to $\eta$.
\end{proof}

\bigskip

\begin{theorem} \label{sameSolutionAtomsTheoremPartB} 
Consider two embeddings $1$ and $2$ of a problem $P$ with the same interpretation $(Q, {\bf{\varphi}}, \Gamma)$ and let $R_1$ and $R_2$ be the embeddings sets and $C_1$ and $C_2$ the embedding constants. Let $\phi_1$ be a non-redundant atom of $F_{1S}$ where $S$ is a solution of $P$.  \\ 
i) If both embeddings are concise then $\phi_1^{|Q}$ is either redundant with both $F_{1S}^{|Q}$ and $F_{2S}^{|Q}$ or is a non-redundant atom in $F_{1S}^{|Q}$ and in  $F_{2S}^{|Q}$. \\ 
ii) If both embeddings are concise and tight either $\phi_1^{|Q}$ is redundant with $F_{1S}^{|Q}$ or there is a non-redundant atom $\phi_2 \in F_{C2}(R_2^{+})$ such that $\phi_1^{|Q} = \phi_2^{|Q}$.
\end{theorem}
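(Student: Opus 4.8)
The plan is to reduce both parts to facts already in hand: that conciseness forces the restricted freest solution model to coincide with $F_{Q}(S)$ (theorem \ref{conciseEmbeddingTheorem}), and that restriction is computed atom by atom (theorem \ref{restrictionLemma}). The conceptual crux is that conciseness collapses the two a priori different models $F_{1S}^{|Q}$ and $F_{2S}^{|Q}$ into the single model $F_{Q}(S)$; once this is recognized, part (i) is immediate and part (ii) is a short application of theorem \ref{restrictionAndRedundacyLemma} together with the tightness hypothesis.

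For part (i), first I would invoke theorem \ref{conciseEmbeddingTheorem} on each concise embedding to obtain $F_{1S}^{|Q} = F_{Q}(S) = F_{2S}^{|Q}$, so the two restricted solution models are literally the same model over $Q$. Next, by theorem \ref{restrictionLemma} the restriction $\phi_1^{|Q}$ of the non-redundant atom $\phi_1$ of $F_{1S}$ lies in the set of atoms that spawn $F_{1S}^{|Q}$, hence it is an atom of that model, i.e. it is either redundant with it or one of its non-redundant atoms (it cannot be external). Whether an atom is redundant or non-redundant is a property of the single model $F_{1S}^{|Q}$ and of $\phi_1^{|Q}$ alone; since that model equals $F_{2S}^{|Q}$, the status of $\phi_1^{|Q}$ is necessarily the same in both, which is exactly the claimed dichotomy. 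The only bookkeeping point is the degenerate case $U^{c}(\phi_1)\cap Q=\emptyset$, where $\phi_1^{|Q}=\ominus_Q$; since $\ominus_Q$ discriminates no duple over $Q$ it falls into the redundant alternative, so I would dispose of it once at the start and then treat $\phi_1^{|Q}$ uniformly as an atom of the restricted model in both parts.

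For part (ii), I would argue by cases on the dichotomy just established. If $\phi_1^{|Q}$ is redundant with $F_{1S}^{|Q}$ we are in the first alternative and there is nothing further to prove. Otherwise part (i) gives that $\phi_1^{|Q}$ is a non-redundant atom of $F_{2S}^{|Q}$, and applying theorem \ref{restrictionAndRedundacyLemma} to $M=F_{2S}$ yields a non-redundant atom $\psi$ of $F_{2S}$ with $\psi^{|Q}=\phi_1^{|Q}$. Finally, because embedding $2$ is tight, every non-redundant atom of $F_{2S}=F_{C2}(R_2^{+}\cup S)$ is a non-redundant atom of $F_{C2}(R_2^{+})$ by definition \ref{tightEmbedding}; taking $\phi_2=\psi$ supplies the required non-redundant atom of $F_{C2}(R_2^{+})$ with $\phi_2^{|Q}=\phi_1^{|Q}$. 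There is no serious obstacle here, since the heavy lifting is carried by the earlier results; the main thing to get right is keeping straight the roles of the two embeddings and invoking tightness of embedding $2$ (not of embedding $1$) at the final step.
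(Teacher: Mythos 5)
Your proof is correct and takes essentially the same route as the paper's: the paper likewise splits on whether $\phi_1^{|Q}$ is redundant with $F_{1S}^{|Q}$, settles the redundant case by conciseness (so that $F_{1S}^{|Q}=F_{2S}^{|Q}$), and dispatches the non-redundant case by citing theorem \ref{sameSolutionAtomsTheorem}, whose own proof is precisely your combination of theorems \ref{conciseEmbeddingTheorem} and \ref{restrictionAndRedundacyLemma} with the tightness of embedding $2$. The only differences are organizational: you inline the cited theorem's argument instead of invoking it, and you explicitly dispose of the degenerate case $U^{c}(\phi_1)\cap Q=\emptyset$ (which the paper treats as the restriction simply not existing, so the statement is vacuous there).
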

\begin{proof}
Suppose $\phi_1^{|Q}$ is redundant with $F_{1S}^{|Q}$. Because the embeddings are concise $F_{1S}^{|Q} = F_{2S}^{|Q}$ and then $\phi_1^{|Q}$ is redundant with $F_{2S}^{|Q}$. If, on the other hand, $\phi_1^{|Q}$ is non-redundant in $F_{1S}^{|Q}$ make $\eta$ in theorem \ref{sameSolutionAtomsTheorem} equal to $\phi_1^{|Q}$.
\end{proof}

\bigskip

\begin{theorem} \label{sameSolutionAtomsTheoremPartC} 
Consider two embeddings $1$ and $2$ of a problem $P$ with the same interpretation $(Q, {\bf{\varphi}}, \Gamma)$ and let $R_1$ and $R_2$ be the embeddings sets and $C_1$ and $C_2$ the embedding constants. Let $\phi_1$ be a non-redundant atom of $F_{1}$.  \\ 
i) If both embeddings are concise there is some solution $S$ where $\phi_1^{|Q}$  is either redundant with both $F_{1S}^{|Q}$ and $F_{2S}^{|Q}$ or is a non-redundant atom in $F_{1S}^{|Q}$ and in  $F_{2S}^{|Q}$. \\ 
ii) If both embeddings are concise and tight either $\phi_1^{|Q}$ is redundant with $F_{1S}^{|Q}$ for some solution $S$ or there is a non-redundant atom $\phi_2 \in F_{C2}(R_2^{+})$ such that $\phi_1^{|Q} = \phi_2^{|Q}$.
\end{theorem}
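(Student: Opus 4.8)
The plan is to reduce this statement to Theorem \ref{sameSolutionAtomsTheoremPartB}, which is precisely the present statement under the stronger hypothesis that $\phi_1$ be a non-redundant atom of a single freest solution model $F_{1S}$ rather than of the aggregate $F_1 = \cup_S F_{1S}$. So the whole task is to locate one solution $S$ for which $\phi_1$ is non-redundant in $F_{1S}$, and then quote Theorem \ref{sameSolutionAtomsTheoremPartB} verbatim for that $S$.

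First I would prove the reduction: if $\phi_1$ is a non-redundant atom of $F_1 = \cup_S F_{1S}$, then $\phi_1$ is a non-redundant atom of at least one $F_{1S}$. By definition $F_1$ is spawned by the atoms of all the freest solution models, so writing $A$ for the union over all solutions of the non-redundant atoms of the $F_{1S}$ we have $F_1 = [A]$. Every atom of $[A]$ is either an element of $A$ or a union $\bigtriangledown_i \varphi_i$ of two or more distinct elements of $A$; in the latter case it is, by the definition of redundancy, redundant with $F_1$. Hence a non-redundant atom $\phi_1$ of $F_1$ must itself lie in $A$, i.e.\ it is a non-redundant atom of some $F_{1S}$. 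This same inheritance of non-redundant atoms from the summands to the sum is already exploited in the proof of Theorem \ref{restrictionToQExistsTheorem}, where it is observed that an atom non-redundant in $F$ is non-redundant in some $F_S$; I would simply reuse that observation.

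With such an $S$ fixed, both parts follow immediately. For (i), conciseness of the two embeddings lets me apply part (i) of Theorem \ref{sameSolutionAtomsTheoremPartB} to this $S$ and $\phi_1$, giving that $\phi_1^{|Q}$ is either redundant with both $F_{1S}^{|Q}$ and $F_{2S}^{|Q}$ or non-redundant in both, which is exactly the existential claim of (i). For (ii), the additional tightness hypothesis lets me apply part (ii) of the same theorem, yielding either that $\phi_1^{|Q}$ is redundant with $F_{1S}^{|Q}$ or that there is a non-redundant atom $\phi_2 \in F_{C2}(R_2^{+})$ with $\phi_1^{|Q} = \phi_2^{|Q}$. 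The only real content is the reduction step of the previous paragraph; once the solution $S$ is exhibited there is nothing further to compute, since every statement about the restrictions to $Q$ is already packaged inside Theorem \ref{sameSolutionAtomsTheoremPartB}. The main obstacle is therefore simply to argue cleanly that the aggregate model $F_1$ creates no genuinely new non-redundant atom beyond those of the individual $F_{1S}$, and that rests on $F_{1S} \subset F_{C1}(R_1^{+})$ together with the definition of a redundant atom as a union of distinct atoms of the model.
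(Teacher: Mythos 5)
Your proposal is correct and takes essentially the same approach as the paper: the paper's proof likewise reduces the statement to Theorem \ref{sameSolutionAtomsTheoremPartB} by observing that a non-redundant atom of $F_1 = \cup_S F_{1S}$ must be non-redundant in at least one $F_{1S}$, and then applies that theorem to the chosen $S$. The only difference is that you spell out the justification of this reduction step (a non-redundant atom of the spawned model $[A]$ must lie in the spawning set $A$, since anything else is a union of distinct atoms and hence redundant), which the paper asserts without proof.
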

\begin{proof}
If $\phi_1$ is non-redundant in $F_1 = \cup_S F_{S1}$ then it is non-redundant in at least one $F_{1S}$. Apply theorem \ref{sameSolutionAtomsTheoremPartB} to $\phi_1$ non-redundant in $F_{1S}$.
\end{proof}

\bigskip

\begin{theorem} \label{stronglyComplete} 
An embedding is strongly complete if and only if $F \equiv \cup_S F_S = F_C(R^{+})$.
\end{theorem}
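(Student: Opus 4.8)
The plan is to recast both sides of the biconditional as statements about the atomized models $F = \cup_S F_S$ and $F_C(R^{+})$, and then reduce the equivalence to the behaviour of a single discriminating atom. First I would record the inclusion that always holds: since $F_S \equiv F_C(R^{+} \cup S)$ is less free than $F_C(R^{+})$, the atoms of $F_S$ are all atoms of $F_C(R^{+})$, so $F_S \subset F_C(R^{+})$ and hence $F = \cup_S F_S \subseteq F_C(R^{+})$. Thus $F = F_C(R^{+})$ fails precisely when some non-redundant atom of $F_C(R^{+})$ is external to $F$, and by theorem \ref{restrictionToQExistsTheorem}(i) this is the same as being external to every $F_S$.

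Next I would translate the two clauses of strong completeness into model statements. By theorem \ref{segregationTheorem}, negative duples have no positive consequences, so $R \vdash r^{+}$ iff $F_C(R^{+}) \models r^{+}$. For the hypothesis, theorem \ref{freestSolutionModel} identifies $F_S$ as the freest model of $R \wedge {\bf{\varphi_S}}$, so $R \wedge {\bf{\varphi_S}} \vdash r^{+}$ iff $F_S \models r^{+}$; moreover, because whether an atom lies below a term depends only on its upper constant segment, a positive duple holds in the union model $F$ iff it holds in every $F_S$. Hence ``$r^{+}$ holds in every freest solution model'', i.e. $\forall S(R \wedge {\bf{\varphi_S}} \vdash r^{+})$, is exactly $F \models r^{+}$, and strong completeness reads: for every positive duple $r^{+}$ over $C$, $F \models r^{+}$ implies $F_C(R^{+}) \models r^{+}$.

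With these translations the equivalence is quick. For ($\Leftarrow$), if $F = F_C(R^{+})$ then $F \models r^{+}$ and $F_C(R^{+}) \models r^{+}$ coincide, giving strong completeness at once. For ($\Rightarrow$), suppose $F \neq F_C(R^{+})$; then there is a non-redundant atom $\phi$ of $F_C(R^{+})$, necessarily with $U^{c}(\phi) \neq C$ (otherwise $\phi$ is the common bottom atom $\ominus_C$, present in both), that is external to $F$. I would invoke the lemma quoted inside theorem \ref{restrictionToQExistsTheorem}(iii): there is a duple $r = (c, t)$ with $c \in U^{c}(\phi)$ and $t = \odot\{d : d \in C - U^{c}(\phi)\}$ discriminated in $F_C(R^{+})$ only by $\phi$. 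Since the upper constant segment of a union $\bigtriangledown_i \varphi_i$ is $\cup_i U^{c}(\varphi_i)$, any atom discriminating $r$ must carry $\phi$ among its constituents; as $\phi$ is external to $F$, no atom of $F$ discriminates $r$, so $F \models r^{+}$ while $F_C(R^{+}) \models r^{-}$. This $r^{+}$ over $C$ satisfies $F \models r^{+}$ yet $R \not\vdash r^{+}$, contradicting strong completeness. Therefore $F = F_C(R^{+})$.

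I expect the main obstacle to be the book-keeping in the last step: confirming that the single external non-redundant atom $\phi$ is genuinely the only obstruction to $r$ being positive, in particular ruling out that some redundant atom of $F$ (a union not containing $\phi$) discriminates $r$. This is where the additivity of upper constant segments under $\bigtriangledown$ together with the uniqueness clause of the discriminant lemma do the real work; the remaining steps are routine passage between entailment and satisfaction in the freest models, exactly as in the proofs of theorems \ref{freestSolutionModel} and \ref{restrictionToQExistsTheorem}.
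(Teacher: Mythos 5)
Your proof is correct and takes essentially the same route as the paper's: establish the inclusion $F \equiv \cup_S F_S \subset F_C(R^{+})$, then for the nontrivial direction take a non-redundant atom $\phi$ of $F_C(R^{+})$ external to $F$, use the unique-discriminant lemma to produce a duple $r$ with $F_C(R^{+}) \models r^{-}$ while $r^{+}$ holds in every $F_S$ (equivalently in $F$), contradicting strong completeness. Your explicit handling of the trivial direction, of the case $U^{c}(\phi) = C$, and of redundant atoms of $F$ that might discriminate $r$ are refinements of, not departures from, the paper's argument, which leaves those points implicit.
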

\begin{proof}
Since $F_C(R^{+})$ is freer than every freest solution $F_S$ it follows that $F \equiv \cup_S F_S \subset F_C(R^{+})$. In the other direction, let $\phi$ be non-redundant in $F_C(R^{+})$ and external to $F$. Then there is at least one duple $r$ over the embedding constants $C$ that is discriminated only by $\phi$. Since $\phi$ is external to $F $ it must be external to each $F_S$ and, hence, $F_S \models r^{+}$ for every solution $S$ while $F_C(R^{+})  \models r^{-}$ contradicting the assumption that the embedding is strongly complete.
\end{proof}

\bigskip

\begin{theorem} \label{sameSolutionAtomsTheoremPartD} 
Consider two embeddings $1$ and $2$ of a problem $P$ with the same interpretation $(Q, {\bf{\varphi}}, \Gamma)$ and let $R_1$ and $R_2$ be the embeddings sets and $C_1$ and $C_2$ the embedding constants. Let $\phi_1$ be a non-redundant atom of $F_{C1}(R_1^{+})$ and assume embedding 1 is strongly complete.  \\ 
i) If both embeddings are concise there is some solution $S$ where $\phi_1^{|Q}$  is either redundant with both $F_{1S}^{|Q}$ and $F_{2S}^{|Q}$ or is a non-redundant atom in $F_{1S}^{|Q}$ and in  $F_{2S}^{|Q}$. \\ 
ii) If both embeddings are concise and tight either $\phi_1^{|Q}$ is redundant with $F_{1S}^{|Q}$ for some solution $S$ or there is a non-redundant atom $\phi_2 \in F_{C2}(R_2^{+})$ such that $\phi_1^{|Q} = \phi_2^{|Q}$.
\end{theorem}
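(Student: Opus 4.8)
The plan is to reduce this statement entirely to Theorem \ref{sameSolutionAtomsTheoremPartC} by exploiting the added hypothesis that embedding 1 is strongly complete. The only difference between the two statements is the location of the atom $\phi_1$: in Theorem \ref{sameSolutionAtomsTheoremPartC} it is a non-redundant atom of $F_1 = \cup_S F_{1S}$, whereas here it is a non-redundant atom of $F_{C1}(R_1^{+})$. Strong completeness is precisely what collapses the distinction between these two models, so the whole argument should be a short corollary.

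First I would invoke Theorem \ref{stronglyComplete}, which characterizes strong completeness exactly as the equality $F_1 \equiv \cup_S F_{1S} = F_{C1}(R_1^{+})$. Reading this as an equality of atomized semilattice models, the two models share the same non-redundant atoms. Hence the atom $\phi_1$, which by hypothesis is non-redundant in $F_{C1}(R_1^{+})$, is also a non-redundant atom of $F_1 = \cup_S F_{1S}$.

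Next I would apply Theorem \ref{sameSolutionAtomsTheoremPartC} directly to $\phi_1$, now regarded as a non-redundant atom of $F_1$. Under the assumption that both embeddings are concise, part (i) of that theorem yields a solution $S$ for which $\phi_1^{|Q}$ is either redundant with both $F_{1S}^{|Q}$ and $F_{2S}^{|Q}$ or non-redundant in both, which is exactly part (i) here. Under the additional assumption that both embeddings are tight, part (ii) of that theorem gives either redundancy of $\phi_1^{|Q}$ with some $F_{1S}^{|Q}$ or a matching non-redundant atom $\phi_2 \in F_{C2}(R_2^{+})$ with $\phi_1^{|Q} = \phi_2^{|Q}$, which is exactly part (ii) here.

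I do not expect a genuine obstacle, since all the content is carried by the earlier results. The one point requiring care is the passage from ``non-redundant in $F_{C1}(R_1^{+})$'' to ``non-redundant in $F_1$'': this relies on interpreting the equality in Theorem \ref{stronglyComplete} as an identity of atomizations (same non-redundant atoms up to the atom equality of the formalism), rather than as a mere semantic equivalence of the underlying semilattices. Once that identification is granted, both conclusions follow verbatim from Theorem \ref{sameSolutionAtomsTheoremPartC}.
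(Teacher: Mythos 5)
Your proposal is correct and follows exactly the paper's own argument: invoke Theorem \ref{stronglyComplete} to identify $F_{C1}(R_1^{+})$ with $F_1 = \cup_S F_{1S}$, conclude $\phi_1$ is non-redundant in $F_1$, and then apply Theorem \ref{sameSolutionAtomsTheoremPartC}. The subtlety you flag about reading the equality as an identity of non-redundant atoms is the same (implicit) reading the paper uses, so there is nothing further to add.
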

\begin{proof}
If embedding 1 is strongly complete, theorem  \ref{stronglyComplete} tells us that $F_1 = F_{C1}(R_1^{+})$. Therefore, $\phi_1$ is a non-redundant atom of $F_1$ and we can apply theorem \ref{sameSolutionAtomsTheoremPartC}. 
\end{proof}

\bigskip
\bigskip

\begin{theorem} \label{explicitIsTightTheorem} 
An explicit embedding is tight.
\end{theorem}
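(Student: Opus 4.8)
The plan is to feed the explicitness hypothesis directly into the two structural theorems about grounding and disjoint sums. By Definition \ref{explicitEmbedding}, for each solution $S$ there is a set $K_S$ with $Q \subseteq K_S \subset C$ such that $F_S = F_C(R^{+})^{{}^{\vee}K_S} \oplus F_{C-K_S}(\emptyset)$. Since the two summands live over the disjoint constant sets $K_S$ and $C-K_S$, Theorem \ref{sumOfModelsWhenNoIntersection} applies and tells me that an atomization of $F_S$ is obtained by taking the union of an atomization of $F_C(R^{+})^{{}^{\vee}K_S}$ and an atomization of $F_{C-K_S}(\emptyset)$. Because every atom of the first summand has upper constant segment inside $K_S$ while every atom of the second has upper constant segment inside $C-K_S$, and these are disjoint, no atom of one summand can enter a $\bigtriangledown$-union equal to an atom of the other; hence the non-redundant atoms of $F_S$ split cleanly into those coming from the grounded factor and those coming from the free factor, and I can treat the two families independently.

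For the grounded factor, Theorem \ref{groundingProperties}(ii) does the work immediately: $F_C(R^{+})^{{}^{\vee}K_S} \sqsubset F_C(R^{+})$, so each of its non-redundant atoms is either a non-redundant atom of $F_C(R^{+})$ or equals $\ominus_{K_S}$. The exceptional atom $\ominus_{K_S}$ is the bottom atom with empty upper constant segment, which coincides with $\ominus_C$ and is a non-redundant atom of $F_C(R^{+})$; so every non-redundant atom arising from the grounded factor is a non-redundant atom of $F_C(R^{+})$, exactly as tightness (Definition \ref{tightEmbedding}) requires.

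The remaining task, and the crux of the proof, is to show that the non-redundant atoms of the free factor $F_{C-K_S}(\emptyset)$, namely the single-constant generators $\phi_c$ with $U^{c}(\phi_c)=\{c\}$ for $c \in C-K_S$ (together with $\ominus$), are non-redundant atoms of $F_C(R^{+})$. First I would record that each such $\phi_c$ is an atom of $F_C(R^{+})$ at all: since $F_S \equiv F_C(R^{+}\cup S)$ is less free than $F_C(R^{+})$, every atom of $F_S$ is an atom of $F_C(R^{+})$, so $\phi_c \in F_C(R^{+})$ and is therefore either non-redundant or redundant there. Suppose, for contradiction, that $\phi_c$ were redundant with $F_C(R^{+})$; then $\phi_c = \bigtriangledown_j \psi_j$ for distinct atoms $\psi_j \neq \phi_c$ of $F_C(R^{+})$, and since $U^{c}(\phi_c)=\bigcup_j U^{c}(\psi_j)=\{c\}$, at least one $\psi_{j_0}$ must again have upper constant segment exactly $\{c\}$ while being distinct from $\phi_c$. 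Such a $\psi_{j_0}$ cannot be an atom of $F_S$, because the free factor contributes a unique atom over $\{c\}$, so it must be external to $F_S$.

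The plan for closing the argument is then to track what the full crossings enforcing the duples of $S$ (all of which lie over $Q \subseteq K_S$) do to atoms whose only constant $c$ lies outside $K_S$, and to show that the coexistence in $F_C(R^{+})$ of two distinct atoms sharing the singleton upper segment $\{c\}$ is incompatible with the explicit equality. This is the step I expect to be the main obstacle, because it requires controlling the fine discrimination structure of atoms rather than merely their upper constant segments: two distinct atoms can share an upper constant segment only by disagreeing on some duple whose right-hand side is a genuine idempotent sum, so the contradiction has to be extracted at that level. I anticipate closing it by combining the fact that $\phi_c$ is non-redundant in $F_S$ with a monotonicity-of-non-redundancy principle, namely that a non-redundant atom of the less free model $F_S$ which survives as an atom of the freer model $F_C(R^{+})$ must remain non-redundant there. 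Once this is in hand, both families of atoms are non-redundant in $F_C(R^{+})$, giving $F_S \sqsubset F_C(R^{+})$ for every solution $S$, which is precisely the assertion that the embedding is tight.
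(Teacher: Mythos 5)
Your opening moves are exactly the paper's: you use the explicit decomposition $F_S = F_C(R^{+})^{{}^{\vee}K_S} \oplus F_{C-K_S}(\emptyset)$, invoke Theorem \ref{sumOfModelsWhenNoIntersection} to replace $\oplus$ by $+$, and dispatch the grounded factor with Theorem \ref{groundingProperties}(ii). Up to that point the argument is correct. The problem is the free factor, which you yourself flag as ``the main obstacle'': you never actually close it, and both of the ideas you propose for closing it are flawed. First, your contradiction setup hinges on the possible existence of two \emph{distinct} atoms sharing the singleton upper constant segment $\{c\}$. In this formalism that cannot happen: an atom is determined by its upper constant segment (this is why the paper writes $\phi_{ac}$ for \emph{the} atom with segment $\{a,c\}$, why Theorem \ref{restrictionLemma} can define $\phi^{|Q}$ by intersecting segments, and why $\bigtriangledown$-unions simply union segments). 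Consequently the needed fact is far more elementary than you anticipate: if $\phi_c = \bigtriangledown_j \psi_j$ with $\psi_j \neq \phi_c$, then $U^{c}(\phi_c) = \cup_j U^{c}(\psi_j) = \{c\}$ forces $U^{c}(\psi_j) = \{c\}$ and hence $\psi_j = \phi_c$, a contradiction. So an atom with a single constant in its upper segment can be redundant with \emph{no} model whatsoever; combined with $F_{C-K_S}(\emptyset) \subset F_S \subset F_C(R^{+})$ this finishes the proof in one line, which is precisely what the paper does.

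Second, and more seriously, the ``monotonicity-of-non-redundancy principle'' you plan to invoke --- that a non-redundant atom of the less free model $F_S$ which is an atom of the freer model $F_C(R^{+})$ must remain non-redundant there --- is false in general. The paper says so explicitly right after Definition \ref{tightEmbedding}: a non-redundant atom of $F_S$ is always an atom of $F_C(R^{+})$ but ``it can be a redundant atom of $F_C(R^{+})$.'' Indeed, if that principle held, then since $F_S \equiv F_C(R^{+}\cup S) \subset F_C(R^{+})$ always holds, \emph{every} embedding would be tight, the notion of tightness would be vacuous, and Theorem \ref{explicitIsTightTheorem} would have nothing to prove (the first and second embeddings of the trivial example in Section \ref{trivialExample} are explicit counterexamples: they are not tight). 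So your argument as outlined cannot be completed along the route you describe; the repair is to replace that principle with the singleton-segment observation above, which applies only to the atoms of the free factor and is exactly why the explicit decomposition buys tightness.
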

\begin{proof}
An embedding is explicit if for each solution $S$ there is a subset $K_S$ of the constants $Q \subseteq K_S \subset C$ such that the freest solution model for $S$ is $F_S = F_{C}(R^{+})^{{}^{\vee}K_S} \oplus F_{C - K_S}( \emptyset)$.  Since the set of constants in $F_{C}(R^{+})^{{}^{\vee}K_S}$ and $F_{C - K_S}( \emptyset)$ are disjoint, theorem \ref{sumOfModelsWhenNoIntersection} tells us that $F_S = F_{C}(R^{+})^{{}^{\vee}K_S} + F_{C - K_S}( \emptyset)$. Theorem \ref{groundingProperties} shows that $F_{C}(R^{+})^{{}^{\vee}K_S} \sqsubset  F_{C}(R^{+})$, i.e. the non-redundant atoms of $F_{C}(R^{+})^{{}^{\vee}K_S}$ are all non-redundant atoms of $F_{C}(R^{+})$. Regarding the rest of the atoms of $F_S$, i.e the atoms in the set $F_{C - K_S}( \emptyset)$, consider that it follows from $F_S \equiv F_{C}(R^{+} \cup S)$ that $F_S \subset F_{C}(R^{+})$. Therefore, $F_{C - K_S}( \emptyset) \subset F_{C}(R^{+})$ and, since a non-redundant atom $\phi$ in $F_{C - K_S}( \emptyset)$ has a single constant in its upper constant segment it can be redundant with no model, so $\phi$ should be a non-redundant atom of both $F_S$ and $F_{C}(R^{+})$. We have shown the non-redundant atoms of $F_{C - K_S}( \emptyset)$ and the non-redundant atoms of $F_{C}(R^{+})^{{}^{\vee}K_S}$ are non-redundant atoms of $F_{C}(R^{+})$ so we can write $F_S  \sqsubset F_{C}(R^{+})$ for every solution $S$ of $P$ and the embedding is tight.
\end{proof}

\bigskip
\bigskip

\begin{lemma}  \label{extensionLemma} 
Let M be a semilattice model with constants $C$ and let $R$ be a set of positive duples over $C$. Let $M_R  \equiv F_C(Th^{+}_0(M) \cup R^{+})$, i.e. the freest extension of $M$ that satisfies $R^{+}$. Let $C'$ be a superset of $C$ with a constant $g_{\sigma}$ for each atomic sentence $\sigma$ of $R$. For a duple $\sigma \equiv (a \leq b)$ define $\sigma' \equiv(a \leq b \odot g_{\sigma})$.  Let $M_{R'} \equiv F_{C'}(Th^{+}_0(M) \cup_{i: (\sigma_i \in R) } \{\sigma'_i\})$. The set of non-redundant atoms of $M_R$ is a subset of the set of non-redundant atoms of $M_{R'}$, written $M_R\sqsubset M_{R'}$. In addition, the restriction to $C$ satisfies $M_{R'}^{|C} \approx M$ and grounded to $C$ satisfies $M_{R'}^{{}^{\vee}C} \approx M_R$.
\[
\begin{tikzcd}
M_{R} \arrow[r, symbol=\sqsubset] & M_{R'} \arrow[l, "{}^{\vee}C"', bend right=49] \arrow[ld, "|C"] \\
M \arrow[u, "\square_{R}"]    &                                                                     
\end{tikzcd}
\]
\end{lemma}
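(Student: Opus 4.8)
The plan is to reduce all three assertions to a step-by-step analysis of the full crossings that build $M_{R'}$, exploiting that each context constant $g_\sigma$ is \emph{fresh}. Write $M^{+} \equiv F_{C'}(Th^{+}_0(M))$; since the new constants appear in no duple of $Th^{+}_0(M)$, theorem \ref{sumOfModelsWhenNoIntersection} gives $M^{+} \approx M \oplus F_{\{g_\sigma\}}(\emptyset)$, atomized by the atoms of $M$ together with the singletons $\phi_{g_\sigma}$ (with $U^{c}(\phi_{g_\sigma}) = \{g_\sigma\}$), so that $M_{R'} = \square_{\sigma'_m}\cdots\square_{\sigma'_1} M^{+}$. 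First I would record a \emph{freshness invariant}: when the crossing $\square_{\sigma'_j}$ of $\sigma'_j \equiv (a_j \leq b_j \odot g_{\sigma_j})$ is applied, the only atom below $g_{\sigma_j}$ is $\phi_{g_{\sigma_j}}$. This holds because every earlier crossing $\square_{\sigma'_i}$ ($i<j$) only removes atoms below $a_i$ and builds unions with atoms below $b_i \odot g_{\sigma_i}$, none of which is $\phi_{g_{\sigma_j}}$; hence $\phi_{g_{\sigma_j}}$ survives untouched and $g_{\sigma_j}$ enters no atom before its own duple is enforced.

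The heart of the argument are two facts about a single crossing by a duple $\sigma' \equiv (a \leq b \odot g)$ with $g$ fresh in a model $N$ over $C'$. Because $g \in U^{c}(\phi_g)$ we always have $\phi_g \in \mathbf{L}^{a}_{N}(b \odot g)$, while freshness forces $\mathbf{dis}_{N}(\sigma') = \mathbf{dis}_{N}(\sigma)$ and $\mathbf{L}^{a}_{N}(b\odot g) = \mathbf{L}^{a}_{N}(b) \cup \{\phi_g\}$. Thus crossing replaces each discriminant atom $\phi$ by the ``original'' unions $\phi \bigtriangledown \psi$ (for $\psi \in \mathbf{L}^{a}_{N}(b)$) together with one ``preserved copy'' $\phi \bigtriangledown \phi_g$, whose upper segment is $U^{c}(\phi)\cup\{g\}$. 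From this I would derive: (a) restriction invariance, $(\square_{\sigma'} N)^{|C} \approx N^{|C}$, since the copy $\phi\bigtriangledown\phi_g$ restricts to exactly $\phi^{|C}$ (restoring every removed atom) while each $\phi\bigtriangledown\psi$ restricts to an atom redundant with $\{\phi^{|C},\psi^{|C}\}$; and (b) grounding commutation, $(\square_{\sigma'} N)^{{}^{\vee}C} \approx \square_{\sigma}(N^{{}^{\vee}C})$, since grounding to $C$ discards every copy $\phi\bigtriangledown\phi_g$ (it carries $g\notin C$) and keeps exactly the unions $\phi\bigtriangledown\psi$ with both operands in $C$, which are precisely the crossings produced by enforcing $\sigma=(a\leq b)$ on $N^{{}^{\vee}C}$.

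With (a) and (b) in hand the three claims follow by iterating along $\sigma'_1,\dots,\sigma'_m$, the freshness invariant supplying the hypotheses at each step. Iterating (a) from $(M^{+})^{|C} \approx M$ gives $M_{R'}^{|C} \approx M$. Iterating (b) from $(M^{+})^{{}^{\vee}C} \approx M$ gives $M_{R'}^{{}^{\vee}C} \approx \square_{\sigma_m}\cdots\square_{\sigma_1} M = F_C(Th^{+}_0(M)\cup R^{+}) \equiv M_R$. Finally, $M_R \sqsubset M_{R'}$ is immediate from the grounding result: theorem \ref{groundingProperties}(ii) states $M_{R'}^{{}^{\vee}C} \sqsubset M_{R'}$, so every non-redundant atom of $M_{R'}^{{}^{\vee}C}$ — that is, of $M_R$, up to $\ominus_C$ — is a non-redundant atom of $M_{R'}$.

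I expect the main obstacle to be the bookkeeping inside facts (a) and (b): one must verify that the auxiliary unions $\phi\bigtriangledown\psi$ never contribute a \emph{new} non-redundant atom to the $C$-restriction and are exactly the ones kept under grounding, and that the freshness invariant is not broken by cross-interactions between distinct duples (an atom created while enforcing $\sigma'_i$ may well enter the discriminant of a later $\sigma'_j$, yet it can never carry $g_{\sigma_j}$). Tracking $\ominus_C$ and the redundant atoms correctly throughout is the routine-but-delicate part.
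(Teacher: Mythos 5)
Your proof is correct, and its nucleus --- the explicit computation of one full crossing $\square_{\sigma'}$ with $\sigma' \equiv (a \leq b \odot g)$ and $g$ fresh, yielding the surviving atoms, the unprimed unions $\phi \bigtriangledown \psi$, and the preserved copies $\phi \bigtriangledown \phi_g$ --- is the same computation that drives the paper's proof. The scaffolding around it, however, is genuinely different. The paper interpolates a two-dimensional diagram of mixed models $M_{\sigma'_1 \cdots \sigma'_j \sigma_{j+1} \cdots \sigma_i}$ (some duples primed, some not) and checks two kinds of cells: triangular cells, which are exactly your single-crossing fact, and square cells, which compare the crossing of an \emph{unprimed} duple on two models related by $O = M \cup X^g$, arguing that the extra $g$-atoms $Y^g$ can never make an atom of $M_{\sigma}$ redundant; the relations $\sqsubset$, $|C$ and ${}^{\vee}C$ are then propagated cell by cell across the whole grid. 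You instead stay on the fully primed chain $M^{+} \to \square_{\sigma'_1} M^{+} \to \cdots \to M_{R'}$ and recover both unprimed objects by projection identities: (a) restriction invariance, and (b) the ``primed'' analogue of theorem \ref{groundingAndCrossingCommute}(ii), namely $(\square_{\sigma'} N)^{{}^{\vee}C} \approx \square_{\sigma}(N^{{}^{\vee}C})$, with your freshness invariant supplying the hypothesis at each step (your closing remark --- that an atom created at step $i$ may later enter a discriminant at step $j$ yet can never carry $g_{\sigma_j}$ --- is precisely why the invariant survives the iteration). This buys a leaner argument: no mixed models, no square cells, and $M_R \sqsubset M_{R'}$ drops out of theorem \ref{groundingProperties}(ii) applied to $M_{R'}^{{}^{\vee}C} \approx M_R$ rather than being tracked through the diagram (modulo the $\ominus_C$ proviso, which the paper's definition of $\sqsubset$ for grounded models already absorbs). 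What the paper's heavier machinery buys in return is the mixed models themselves: the proof of theorem \ref{solutionsAsSubsets} reuses exactly such a diagram, grounding to intermediate constant sets $C_S$ in which only the duples of $\Gamma \cap \overline{S}$ are primed, so the cell-by-cell bookkeeping there is an investment rather than overhead, whereas your route would need to re-derive those intermediate objects when proving that theorem.
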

\begin{proof}
$M_{R'}$ can be calculated by starting with model $M$ and enforcing each duple of $\sigma'_i$ one by one, in any order, using full crossing. Assume first that $R$ contains a single duple $\sigma \equiv (a \leq b)$ and consider the following two models: the model $M_{\sigma}  \equiv F_{C}(Th^{+}_0(M) \cup \{\sigma\}) = \square_{\sigma} M \equiv  (M -  {\bf{dis}}_M(a, b))  \, \cup \, ({\bf{L}}^{a}(b) \bigtriangledown {\bf{dis}}_M(a, b))$ resulting from full-crossing $\sigma$ over $M$, and the model $M_{\sigma'} \equiv F_{C\cup \{g_{\sigma}\}}(Th^{+}_0(M) \cup \{\sigma'\})$ obtained from the full-crossing of $\sigma'$ over a model atomized by the atoms of $M$ plus an additional atom $\phi_{\sigma}$ with upper constant segment $U^{c}(\phi_{\sigma}) = \{ g_{\sigma} \}$. \\ 
From the definition of full crossing follows that $M_{\sigma'} = M_{\sigma}  \cup \{\phi_{\sigma}\} \cup  (\phi_{\sigma} \bigtriangledown {\bf{dis}}_M(a, b))$, where we are using the same symbols for models and their atomizations. It follows that the atoms of $M_{\sigma}$ are a subset of the atoms of $M_{\sigma'}$. In addition, each atom of $M_{\sigma'}$ that is not an atom of $M_{\sigma}$ contains the constant $g_{\sigma}$ in its upper constant segment. $M_{\sigma'}^{|C}$ is the restriction of $M_{\sigma'}$ to the subalgebra spawn by the terms and constants over $C$, and it is a semilattice model that can be atomized by taking the atoms of $M_{\sigma'}$ and intersecting their upper constant segments with $C$; we get $M_{\sigma'}^{|C}= M_{\sigma}^{|C}  \cup \{\phi^{|C}_{\sigma}\} \cup  (\phi_{\sigma} \bigtriangledown {\bf{dis}}_M(a, b))^{|C} = M_{\sigma}  \cup {\bf{dis}}_M(a, b)$. Substituting $M_{\sigma}$ we obtain $M_{\sigma'}^{|C} =  (M -  {\bf{dis}}_M(a, b)) \, \cup \, ({\bf{L}}^{a}_M(b) \bigtriangledown {\bf{dis}}_M(a, b)) \cup {\bf{dis}}_M(a, b)$ and, since ${\bf{dis}}_M(a, b) \subseteq M$ then $M_{\sigma'}^{|C} = M \, \cup \, ({\bf{L}}^{a}_M(b) \bigtriangledown {\bf{dis}}_M(a, b))$. The atoms of the discriminant are atoms of $M$ as well as the atoms of the lower atomic segment ${\bf{L}}^{a}_M(b) \subseteq M$, so every atom in ${\bf{L}}^{a}_M(b) \bigtriangledown {\bf{dis}}_M(a, b)$ is redundant with $M$ and then it follows that $M_{\sigma'}^{|C}$ can be atomized by the same atoms than $M$, hence, they are isomorphic $M_{\sigma'}^{|C} \approx M$. \\
The model $M_{\sigma'} = M_{\sigma}  \cup \{\phi_{\sigma}\} \cup  (\phi_{\sigma} \bigtriangledown {\bf{dis}}_M(a, b))$ is atomized by the atoms in $M_{\sigma}$ and a bunch of other atoms that all contain $g_{\sigma}$ in their upper constant segment. We argue that a non-redundant atom $\phi$ of $M_{\sigma}$ is also non-redundant in $M_{\sigma'}$ because $\phi$ does not contain $g_{\sigma}$ in its upper constant segment and therefore it cannot be written as a union of atoms with some atom containing $g_{\sigma}$, so the extra atoms of  $M_{\sigma'}$ cannot make $\phi$ redundant.  Therefore, the non-redundant atoms of $M_{\sigma}$ are non-redundant atoms of $M_{\sigma'}$ and we can write $M_{\sigma}  \sqsubset M_{\sigma'}$.  \\
We have proven that given a model $M$ of a semilattice with constants in $C$ and a duple $\sigma$ there is an extended model $M_{\sigma'}$ with constants in $C' \supset C$ such that $M_{\sigma'}^{|C} \approx M$ and the set of non-redundant atoms of $M_{\sigma} \equiv F_{C}(Th^{+}_0(M) \cup \{\sigma\})$ is a subset of the set of non-redundant atoms of $M_{\sigma'}$. We can represent this property with a triangular cell:
\[
\begin{tikzcd}
M_{\sigma} \arrow[r,  symbol=\sqsubset] & M_{\sigma'} \arrow[l, "{}^{\vee}C"', bend right=49] \arrow[ld, "|C"] \\
M \arrow[u, "\square_{\sigma}"]    &                                                                     
\end{tikzcd}
\]
where the vertical arrows represent a full crossing operation with $\sigma$, written $\square_{\sigma}$. From $M^{{}^{\vee}C}_{\sigma} = M_{\sigma}$ and $\{\phi^{{}^{\vee}C}_{\sigma}\} \cup  (\phi_{\sigma} \bigtriangledown {\bf{dis}}_M(a, b))^{{}^{\vee}C} = \emptyset\,$ follows that $M_{\sigma'}$ grounded to $C$ satisfies $M^{{}^{\vee}C}_{\sigma'} = M^{{}^{\vee}C}_{\sigma}  \cup \{\phi^{{}^{\vee}C}_{\sigma}\} \cup  (\phi_{\sigma} \bigtriangledown {\bf{dis}}_M(a, b))^{{}^{\vee}C} = M_{\sigma}$, which is represented in the triangular cell.  \\
If $R$ contains more than one duple, we can sort the duples in any order and let the full diagram be: 
\[
\begin{tikzcd}
M_{\sigma_1 \sigma_2...\sigma_n} \arrow[r, symbol=\sqsubset]                    & M_{\sigma'_1 \sigma_2...\sigma_n} \arrow[r, symbol=\sqsubset]                    & M_{\sigma'_1 \sigma'_2...\sigma_n} \arrow[r, symbol=\sqsubset]                                            & ... \arrow[r, symbol=\sqsubset]                                                  & {M_{\sigma'_1 \sigma'_2,...,\sigma'_n}} \arrow[ld, dotted] \\
M_{\sigma_1 \sigma_2 \sigma_3} \arrow[r, symbol=\sqsubset] \arrow[u, dotted]    & M_{\sigma'_1 \sigma_2 \sigma_3} \arrow[r, symbol=\sqsubset] \arrow[u, dotted]    & M_{\sigma'_1 \sigma'_2 \sigma_3} \arrow[r, symbol=\sqsubset] \arrow[u, dotted]                            & M_{\sigma'_1 \sigma'_2 \sigma'_3} \arrow[ld, "{|C \,\cup\, \{g_{\sigma_2}\} }"] &                                                                     \\
M_{\sigma_1 \sigma_2} \arrow[r, symbol=\sqsubset] \arrow[u, "{\square_{\sigma_3}}"] & M_{\sigma'_1 \sigma_2} \arrow[r, symbol=\sqsubset] \arrow[u, "{\square_{\sigma_3}}"] & M_{\sigma'_1 \sigma'_2} \arrow[u, "{\square_{\sigma_3}}"] \arrow[ld, "{|C \,\cup\, \{g_{\sigma_1}\} }"] &                                                                            &                                                                     \\
M_{\sigma_1} \arrow[r, symbol=\sqsubset] \arrow[u, "{\square_{\sigma_2}}"]     & M_{\sigma'_1} \arrow[u, "{\square_{\sigma_2}}"] \arrow[ld, "|C"]          &                                                                                                     &                                                                            &                                                                     \\
M \arrow[u, "{\square_{\sigma_1}}"]                                      &                                                                            &                                                                                                     &                                                                            &                                                                    
\end{tikzcd}                                                            
\]
Consider the square diagrams:
\[
\begin{tikzcd}
M_{\sigma} \arrow[r, symbol=\subset]                    & O_{\sigma}                    &  & M_{ \sigma_1 \sigma_2} \arrow[r, symbol=\sqsubset]                & M_{ \sigma_1' \sigma_2} \arrow[l, "{}^{\vee}C"', bend right=49] \\
M \arrow[u, "\square_{\sigma}"] \arrow[r, symbol=\subset] & O \arrow[u, "\square_{\sigma}"] &  & M_{\sigma_1} \arrow[u,"\square_{\sigma_2}"] \arrow[r, symbol=\sqsubset] & M_{\sigma_1'} \arrow[u, "\square_{\sigma_2}"]               
\end{tikzcd}
\]
The square cell on the left says that if $M$ is atomized by a subset of the atoms of $O$ then the result of full crossing $\sigma$ over $M$ produces a model $M_{\sigma}$ that can be atomized by a subset of the atoms of $O_{\sigma}$, which is straightforward from the definition of full crossing. For the triangular cells we have proven that the non-redundant atoms of $M_{\sigma}$ are also non-redundant atoms of $M_{\sigma'}$. For a square cell this property does not hold in general, however it does for all the square cells in the full diagram; on each square cell we have some $O = M \, \cup \, X^g$ where $X^g$ is a set of atoms that contain each some constant $g$ that is not in the upper segment of any atom of $M$. In addition, the full crossing in the square cells occurs with some duple $\sigma \equiv (a \leq b)$ with $a$ and $b$ terms over a set of constants that do not contain $g$. It is easy to show that in this situation $O_{\sigma} = M_{\sigma} \, \cup \, Y^g$ where $Y^g$ is again a set of atoms that contain $g$ in their upper constant segments and, hence, cannot make any atom of  $M_{\sigma}$ redundant in $O_{\sigma}$ and we can write  $M_{\sigma} \sqsubset O_{\sigma}$. Also, the grounded model to $C$ of $O_{\sigma}$ when $C$ contains the constants of $O_{\sigma}$  except $g$ produces $O^{{}^{\vee}C}_{\sigma} = M^{{}^{\vee}C}_{\sigma} \, \cup \, (Y^g)^{{}^{\vee}C} = M_{\sigma} $.   \\
Finally, since $F_{C}(Th^{+}_0(M_{\sigma_1}) \cup \{\sigma_2\}) = F_{C}(Th^{+}_0(M)  \cup \{\sigma_1\} \cup \{\sigma_2\})$ we can identify $M_{\sigma'_1 \sigma'_2,...,\sigma'_n}$ with the model $F_{C'}(Th^{+}_0(M) \cup R)'$ and this proves the lemma. 
\end{proof}

\bigskip
\bigskip

\begin{theorem}  \label{solutionsAsSubsets} 
Let $R$ be an embedding set over the constants $C$ for a problem $P$ with interpretation $(Q, {\bf{\varphi}}, \Gamma)$ and a scope sentence $\Xi$ that is a first order sentence without quantifiers and with atomic subclauses in the separator set $\,\Gamma$. Create a new constant $g_{\sigma}$ for each atomic sentence $\sigma \equiv (a \leq b)$ of $\Gamma$ and define $\sigma' \equiv(a \leq b \odot g_{\sigma})$ and the set $\Gamma' = \cup_{\sigma \in \Gamma}  \{ \sigma'  \}$: \\
i) The extended embedding set $R^E = R  \cup \Gamma'$ over the embedding constants $C  \, \cup_{\sigma \in \Gamma} \{ g_{\sigma} \}$ is an embedding of $P$ with the same interpretation and scope sentence. \\
ii) The extended embedding $R  \cup \Gamma'$ is an explicit and also tight embedding. \\
iii) For each solution $S$ the freest solution model of the extended embedding restricted to $C$ is equal to the freest solution model of the original embedding.  \\
iv) The extended embedding is concise if and only if the embedding $R$ is concise. \\
v) The extended embedding is complete if and only if the embedding $R$ is complete.
\end{theorem}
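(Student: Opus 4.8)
The engine for all five parts is Lemma~\ref{extensionLemma}, which I would instantiate with three different base models; the only extra ingredients are Theorem~\ref{sumOfModelsWhenNoIntersection} (to peel off free atoms living on disjoint constants) and Theorem~\ref{explicitIsTightTheorem} (so tightness comes for free once explicitness is proved). Write $C^{E}=C\cup\{g_{\sigma}:\sigma\in\Gamma\}$, $(R^{E})^{+}=R^{+}\cup\Gamma'$, $F^{E}=F_{C^{E}}((R^{E})^{+})$ and $F^{E}_{S}=F_{C^{E}}((R^{E})^{+}\cup S)$. The first move (Step~0) is to note that $F^{E}_{S}$ is exactly the primed model $M_{R'}$ of Lemma~\ref{extensionLemma} when $M=F_{C}(R^{+}\cup S)=F_{S}$ and the lemma's duple set is the whole separator set $\Gamma$ (so its primed duples are $\Gamma'$); the lemma then gives $(F^{E}_{S})^{|C}\approx F_{S}$, and the same application with $M=F_{C}(R^{+})$ gives $(F^{E})^{|C}\approx F_{C}(R^{+})$. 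Together with transitivity of restriction, $(M^{|C})^{|Q}=M^{|Q}$ for $Q\subseteq C$ (Theorem~\ref{restrictionLemma}), these two isomorphisms carry parts (iii)--(v).

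For part (i), entailment is monotone: $R^{E}\supseteq R$ and $\Xi\wedge R\vdash\varphi\vdash\Xi$ give $\Xi\wedge R^{E}\vdash\varphi\vdash\Xi$, while $(Q,\varphi,\Gamma,\Xi)$ are unchanged and $C^{E}\supseteq C\supseteq Q$. What remains is a solution model for each $S$, and I claim $F^{E}_{S}$ is one: it models $(R^{E})^{+}\cup S$ by construction, and every sentence of $R^{-}$ and every $\neg\nu$ with $\nu\in\Gamma\cap\overline{S}$ is a duple over $C$, whose sign is preserved by restriction to $C$; since $(F^{E}_{S})^{|C}\approx F_{S}$ (Step~0) and $F_{S}$ satisfies all of them (being the freest solution model of the valid original embedding, Theorem~\ref{freestSolutionModel}), so does $F^{E}_{S}$. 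Hence $F^{E}_{S}\models R^{E}\wedge\gamma_{S}$ is a solution model. Part (iii) is then immediate, since $F^{E}_{S}$ is the freest solution model of the extended embedding and $(F^{E}_{S})^{|C}\approx F_{S}$.

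Part (ii) is the heart of the argument. I would show explicitness with $K_{S}=C\cup\{g_{\sigma}:\sigma\in\Gamma\cap\overline{S}\}$, after which tightness follows from Theorem~\ref{explicitIsTightTheorem}. Split $\Gamma'=\Gamma'_{\mathrm{out}}\cup\Gamma'_{\mathrm{in}}$ according to $\sigma\notin S$ or $\sigma\in S$, so that $\Gamma'_{\mathrm{out}}$ is over $K_{S}$ while the constants $\{g_{\sigma}:\sigma\in S\}=C^{E}\setminus K_{S}$ occur only in $\Gamma'_{\mathrm{in}}$. The decomposition is built in two moves. First, enforcing $R^{+},\Gamma'_{\mathrm{out}},S$ before $\Gamma'_{\mathrm{in}}$: the constants $g_{\sigma}$ with $\sigma\in S$ are untouched by the first block, so Theorem~\ref{sumOfModelsWhenNoIntersection} splits that model as $F_{K_{S}}(R^{+}\cup\Gamma'_{\mathrm{out}}\cup S)\oplus F_{C^{E}\setminus K_{S}}(\emptyset)$, and each remaining $\sigma'\in\Gamma'_{\mathrm{in}}$ is already satisfied there (since $S$ forces $a\leq b\leq b\odot g_{\sigma}$), so crossing it does nothing; this yields $F^{E}_{S}\approx F_{K_{S}}(R^{+}\cup\Gamma'_{\mathrm{out}}\cup S)\oplus F_{C^{E}\setminus K_{S}}(\emptyset)$. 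Second, I recognise $F_{K_{S}}(R^{+}\cup\Gamma'_{\mathrm{out}}\cup S)$ as a grounding of $F^{E}$: applying Lemma~\ref{extensionLemma} once more, now with $M=F_{K_{S}}(R^{+}\cup\Gamma'_{\mathrm{out}})$ and the \emph{unprimed} duple set $S$ (whose context constants are exactly $C^{E}\setminus K_{S}$), the lemma's primed model is $F^{E}$ and its grounding clause reads $(F^{E})^{{}^{\vee}K_{S}}\approx F_{K_{S}}(R^{+}\cup\Gamma'_{\mathrm{out}}\cup S)$. Combining the two moves gives $F^{E}_{S}\approx (F^{E})^{{}^{\vee}K_{S}}\oplus F_{C^{E}\setminus K_{S}}(\emptyset)$, which is explicitness. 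I expect this second move to be the main obstacle: it asserts that grounding $F^{E}$ to $K_{S}$ --- deleting the atoms that carry a context constant $g_{\sigma}$ of a \emph{used} separator $\sigma\in S$ --- has the same effect as enforcing the unprimed separators of $S$, which is precisely the $M_{R'}^{{}^{\vee}C}\approx M_{R}$ half of Lemma~\ref{extensionLemma} and is what makes the fresh constants behave as intended.

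Finally, for part (iv), Theorem~\ref{conciseEmbeddingTheorem} reduces conciseness to $F_{S}^{|Q}=F_{Q}(S)$ for all $S$; since $(F^{E}_{S})^{|Q}=((F^{E}_{S})^{|C})^{|Q}\approx F_{S}^{|Q}$ by Step~0 and restriction transitivity, the extended and original embeddings satisfy this condition for the same solutions, so one is concise iff the other is. For part (v), from $(F^{E})^{|C}\approx F_{C}(R^{+})$ and the fact that a model and its restriction to $C$ agree on the sign of every duple over $Q\subseteq C$, I get $R^{E}\vdash r^{+}\iff R\vdash r^{+}$ for every positive $r^{+}$ over $Q$; plugging this equivalence into Definition~\ref{completeEmbedding} (whose hypothesis $\forall S(\varphi_{S}\vdash r^{+})$ depends only on the shared interpretation) shows the extended embedding is complete iff $R$ is. Note the analogous statement for \emph{strong} completeness would fail by this route, since $\Gamma'$ genuinely adds positive consequences over the enlarged constant set $C^{E}$.
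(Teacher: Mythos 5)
Your proposal is correct and takes essentially the same route as the paper's own proof: both use Lemma~\ref{extensionLemma} as the engine, with the same choice $K_{S}=C\cup\{g_{\sigma}:\sigma\in\Gamma\cap\overline{S}\}$, the same identification of $F_{C^{E}}((R^{E})^{+})^{{}^{\vee}K_{S}}$ with the solution model via the lemma's grounding clause, the same splitting-off of the free part on the unused context constants, and the same reductions of (iii)--(v) to the restriction isomorphisms $(F^{E}_{S})^{|C}\approx F_{S}$ and $(F^{E})^{|C}\approx F_{C}(R^{+})$. The only cosmetic divergences are that you obtain tightness from explicitness via Theorem~\ref{explicitIsTightTheorem}, where the paper checks directly that both $N_{S}$ and the leftover singleton atoms $\Lambda$ are tight subsets of $F_{C^{E}}((R^{E})^{+})$, and that you cite Theorem~\ref{sumOfModelsWhenNoIntersection} where the paper tracks the surviving singleton atoms by hand.
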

\begin{proof}
For a solution $S$ of $P$ consider the following diagram obtained applying lemma \ref{extensionLemma}:
\[
\begin{tikzcd}
                                                                     & {F_{C_S}(R^{+}   \cup (\Gamma \cap  \overline{S})' \, \cup \, S)} \arrow[r, symbol=\sqsubset]      & {F_{C  \, \cup_{\sigma \in \Gamma} \{ g_{\sigma} \}}(R^{+}  \cup \Gamma')} \arrow[ld, "{|C_S}"] \\
{F_{C}(R^{+}  \cup  (\Gamma \cap  \overline{S}))} \arrow[r, symbol=\sqsubset] & {F_{C_S}(R^{+} \cup (\Gamma \cap  \overline{S})')} \arrow[ld, "|C"] \arrow[u, "\square_{S}"] &                                                                \\
F_C(R^{+}) \arrow[u, "\square_{\Gamma \cap  \overline{S}}"]                 &                                                                                                        &                                                                \\
F_C(\emptyset) \arrow[u, "\square_{R^{+}}"]                                             &                                                                                                        &                                                               
\end{tikzcd}
\]
where $C_S \equiv C \, \cup \{ g_{\sigma} : \sigma \in \Gamma \cap  \overline{S} \}$. \\
(i) We argue that, just like $F_S \equiv F_C(R^{+} \cup S)$ is a model solution of $S$ that satisfies $\,\Xi$ (see theorem \ref{freestSolutionModel}), the model $N_S  \equiv F_{C \, \cup \{ g_{\sigma} : \sigma \in \Gamma \cap \overline{S} \}}(R^{+} \cup (\Gamma \cap  \overline{S})' \, \cup \, S)$ is also a solution of $P$ that can be identified as $S$ (we use $\overline{S}$ for the complement of $S$ in $\Gamma$, i.e. $\Gamma \cap \overline{S} = \Gamma - S$). Since a positive atomic sentence $\sigma' \in \Gamma'$ has no consequences in the subalgebra of $N_S$ spawned by terms over $C$, i.e. in the subalgebra $N_S^{|C}$, then it follows that any positive or negative atomic sentence over the constants of $C$ is true in $N_S$ if and only if it is true in $F_C(R^{+} \cup S)$; we can write $N_S^{|C} \approx F_S \equiv F_C(R^{+} \cup S)$. From the diagram above and lemma \ref{extensionLemma} we have that $N_S = F_{C  \, \cup_{\sigma \in \Gamma} \{ g_{\sigma} \}}(R^{+}  \cup \Gamma')^{{}^{\vee}C_S}$ and also $N_S \sqsubset F_{C  \, \cup_{\sigma \in \Gamma} \{ g_{\sigma} \}}(R^{+}  \cup \Gamma')$. \\
We have built an extended embedding with embedding set $R^{E} \equiv R \, \cup \, \Gamma'$ and embedding constants $C^{E} \equiv C \, \cup \{ g_{\sigma} : \sigma \in \Gamma \}$. Notice that every pair $\sigma, \sigma'$ satisfies $\sigma \Rightarrow \sigma'$ and then the freest solution model for $S$ in the extended embedding $F_S^E \equiv F_{C^{E}}((R^{E})^{+} \cup S) = F_{C^{E}}(R^{+} \cup \Gamma' \cup S) = F_{C^{E}}(R^{+} \cup (\Gamma \cap  \overline{S})' \, \cup \, S)$ a model that, we are about to show, differs from $N_S$ in a bunch of non-redundant atoms. Since the constants in $\{ g_{\sigma} : \sigma \in S \}$ are not mentioned in $R^{+} \cup (\Gamma \cap  \overline{S})' \, \cup \, S$, a non-redundant atom of $F_{C^{E}}(\emptyset)$ with an upper constant segment equal to a single constant in $\{ g_{\sigma} : \sigma \in S \}$ remains unaltered after the full crossing of the duples in $R^{+} \cup (\Gamma \cap  \overline{S})' \, \cup \, S$ as a non-redundant atom of $F_S^E$. It follows that $F_S^E$ is the model spawned by the atoms of $N_S$ plus a set of atoms $\Lambda$ each with a single constant in its upper constant segment, a constant in the set $\{ g_{\sigma} : \sigma \in S \}$, so we can write $F_S^E = N_S + \Lambda$. Applying theorem \ref{restrictionLemma},  $(F_S^E)^{|(C \, \cup \{ g_{\sigma} : \sigma \in \Gamma \cap \overline{S} \})} = N_S$ and also  $(F_S^E)^{|C} = N_S^{|C} \approx F_S$.  This proves that if we can use $\Gamma$ to build a model for each solution of $P$ using the embedding set $R$ we can also use $\Gamma$ to build a model for each solution of $P$ using the embedding set $R^{E}$.  \\
(ii) Since the constants $g_{\sigma}$ only appear in the right hand side of the duples of $R^{E}$ then the atoms in the freest model $F_{C^{E}}( \emptyset)$ with upper constant segments in the set $\{ g_{\sigma} : \sigma \in \Gamma \}$ remain all in $F_{C^{E}}(R^{E+})$ as they survive every full crossing operation with a duple in $R^{E+}$. This implies the atoms of $\Lambda$ are all atoms in $F_{C^{E}}(R^{E})$ and, since they have a single constant in their upper constant segment then they are non-redundant atoms, i.e. $\Lambda \sqsubset F_{C^{E}}(R^{E+})$. We showed above that $N_S \sqsubset F_{C^{E}}(R^{E+})$. Since $F_S^E = N_S + \Lambda$ and both $N_S$ and $\Lambda$ are tight subset models of $F_{C^{E}}(R^{E+})$ then $F_S^E \sqsubset F_{C^{E}}(R^{E+})$. Therefore, for every solution $S$ of $P$ the freest solution model $F_S^E$ is spawned by a subset of the non-redundant atoms of $F_{C^{E}}(R^{E+})$ and this proves that the extended embedding $R^{E}$ is tight. \\
Since $N_S$ a model over $C \, \cup \{ g_{\sigma} : \sigma \in \Gamma \cap \overline{S} \}$ and $\Lambda$ is a set of atoms over $\{ g_{\sigma} : \sigma \in S \}$ that spawn the freest model $F_{\{ g_{\sigma} : \sigma \in S \}}( \emptyset)$, we can write $F_S^E = N_S + F_{\{ g_{\sigma} : \sigma \in S \}}( \emptyset)$ and, since the sets of constants of both models are disjoint $F_S^E = N_S \oplus F_{\{ g_{\sigma} : \sigma \in S \}}( \emptyset)$. It follows that the extended embedding is explicit. \\
(iii)  We showed above that $(F_S^E)^{|C} = N_S^{|C} \approx F_S$.  \\
(iv) Theorem \ref{conciseEmbeddingTheorem} says that an embedding is concise if and only if for each solution $S$ the freest solution model $F_S$ restricted to the interpretation constants $Q$ is equal to $F_{Q}(S)$ for each solution $S$ of $P$. Using (iii), from $(F_S^E)^{|C} = F_S$ we get $(F_S^E)^{|Q} = ((F_S^E)^{|C})^{|Q}  = F_S^{|Q}$ so the extended embedding is concise if and only if the original embedding is concise.  \\
(v) The diagram above shows that $F_{C^{E}}(R^{E+})^{|C} = F_{C}(R^{+})$, so for every positive atomic sentence $r^{+}$ over the interpretation constants $Q \subseteq C$ we have that $R \vdash r^{+}$ if and only if $R^{E} \vdash r^{+}$ and then the extended embedding is complete if and only if the original embedding is complete.
\end{proof}

\bigskip
\bigskip

\begin{theorem}  \label{minimalModelsTheroem} 
An irreducible model of a set $R$ of duples has at most $\vert R^{-} \vert + 1$ atoms.
\end{theorem}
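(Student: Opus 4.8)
The plan is to exploit the asymmetry between positive and negative duples under the removal of atoms. Let $A$ be the set of non-redundant atoms that atomizes the irreducible model $M$, so that $M \models R$ and, by Definition \ref{irreducibleModel}, no proper subset of $A$ spawns a model of $R$. First I would record the monotonicity of positive duples: for a positive duple $r^{+} \equiv (a \leq b)$ we have $M \models r^{+}$ exactly when $\mathbf{dis}_M(a,b) = \emptyset$, and since removing atoms can only shrink a discriminant, $\mathbf{dis}_{[A']}(a,b) \subseteq \mathbf{dis}_{[A]}(a,b)$ for every $A' \subseteq A$. Hence every positive duple of $R$ that holds in $M$ continues to hold in $[A']$; deleting atoms can never break $R^{+}$.

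Consequently, for each atom $\phi \in A$ the failure of $[A \setminus \{\phi\}]$ to model $R$ must be the failure of some negative duple $s_\phi \equiv (a \not\leq b) \in R^{-}$. Spelling this out, $M \models s_\phi$ gives $\mathbf{dis}_M(a,b) \neq \emptyset$, while $[A \setminus \{\phi\}] \not\models s_\phi$ gives $\mathbf{dis}_{[A\setminus\{\phi\}]}(a,b) = \emptyset$. Since these two discriminants differ only in whether $\phi$ is counted, I obtain $\mathbf{dis}_M(a,b) = \{\phi\}$; that is, $\phi$ is the \emph{unique} atom of $A$ discriminating $s_\phi$.

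Next I would assemble the assignment $\phi \mapsto s_\phi$ into an injection. If $s_\phi = s_\psi$ for $\phi, \psi \in A$, then both equal the single-element discriminant of that duple, forcing $\phi = \psi$; so the assignment is injective into $R^{-}$ wherever it is defined. The only atom to which it may fail to be defined is the atom $\ominus$ demanded by the sixth axiom of atomized semilattices: as recalled in Section \ref{groundedModels}, $\ominus$ discriminates no duple, so it cannot be the unique discriminant of any negative sentence, yet it may be indispensable for $[A']$ to be a well-defined model. Every other atom of $A$ must uniquely discriminate some negative duple, since otherwise its removal would leave $\ominus$ in place and preserve all of $R$, yielding a proper subset of $A$ that models $R$ and contradicting irreducibility.

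Putting the pieces together, the assignment $\phi \mapsto s_\phi$, restricted to $A$ minus (at most) the single atom $\ominus$, is an injection into $R^{-}$, whence $\vert A \vert - 1 \leq \vert R^{-} \vert$ and $M$ has at most $\vert R^{-} \vert + 1$ atoms. The main obstacle I anticipate is the careful bookkeeping of the exceptional atom $\ominus$: one must argue precisely why at most one atom can escape the injection, namely because it is the unique atom discriminating nothing while still being forced by the sixth axiom. This single exception is exactly what accounts for the ``$+1$'' in the bound rather than a cleaner $\vert R^{-} \vert$.
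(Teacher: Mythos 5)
Your proof is correct and follows essentially the same route as the paper's: both arguments rest on the monotonicity of positive duples under atom removal, the fact that a single atom suffices to discriminate each negative duple, and the role of $\ominus_C$ in covering the sixth axiom, which is exactly what accounts for the ``$+1$''. Your explicit injection $\phi \mapsto s_\phi$ is just a more detailed packaging of the paper's observation that any model of $R$ with more than $\vert R^{-} \vert$ atoms (beyond $\ominus_C$) always has a removable atom, so the two proofs differ only in presentation, not in substance.
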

\begin{proof}
It is a consequence of two facts: it is enough with one atom to discriminate any negative duple and removing atoms from a model of $R^{+}$ always produces a model of $R^{+}$. Atoms can be removed from a model as long as $R^{-}$ is still satisfied and there is an atom in the lower atomic segment of every constant. Since $\ominus_C$ is in every model, adding $\ominus_C$ to an atomization suffices to ensure that the sixth axiom of atomized semilattices is satisfied. Hence, it is always possible to remove some atom from an atomized model that has more than $\vert R^{-} \vert$ atoms and still get a model of $R$ if we add $\ominus_C$ when needed. It follows that an irreducible model of $R$ should always have fewer than $\vert R^{-} \vert + 1$ atoms.  
\end{proof}

\bigskip

\begin{theorem}  \label{positiveScopeTheroem} 
Consider an algebraic embedding for a problem $P$ with embedding set $R$ and a scope sentence $\Xi$ that can be written as a conjunction of disjunctions, i.e. $\Xi = \wedge_i \xi_i$, with each $\xi_i$ a disjunction of positive atomic sentences. \\ 
i) Every irreducible model of a solution $S$ is an irreducible model of $R$. \\
ii) For each solution $S$ of $P$ there is an irreducible model of $R$ that can be identified with $S$. \\
iii) If $R$ is a tight embedding, for each solution $S$ of $P$ there is an irreducible model of $R$ atomized with non-redundant atoms of $F_C(R^{+})$ that can be identified with $S$. 
\end{theorem}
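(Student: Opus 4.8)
The plan is to reduce everything to one monotonicity fact about atom removal. \emph{Key lemma:} if a model $M=[A]$ satisfies a positive atomic sentence $a\leq b$, then so does $[A']$ for every $A'\subseteq A$; indeed $a\leq b$ holds exactly when every atom with $a$ in its upper constant segment also has $b$ there, a condition inherited by any subset of $A$. Consequently a conjunction of disjunctions of positive atomic sentences is preserved under atom removal: in each clause $\xi_i$ some disjunct is true in $M$, and that disjunct survives in $[A']$. In particular $\Xi$ survives any removal of atoms. By contrast, only negative duples can be destroyed by removal, by deleting the last discriminant of the duple, so the negative duples are the sole obstruction to reducing a model. This is precisely where the hypothesis that $\Xi$ is a conjunction of disjunctions of positive atomic sentences enters, and it is the heart of the argument.

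For (i) I would take an irreducible model $M$ for the solution $S$, that is, a solution model ($M\models R\wedge\gamma_S$, hence $M\models\Xi$) none of whose proper sub-atomizations is again such a model. Clearly $M\models R$. To locate $M$ among the irreducible models of $R$, I would argue that no atom of $M$ can be dropped without violating $R$ together with the constraints that pin down $S$: by the key lemma, dropping atoms never endangers $R^{+}$, the positive duples of $S$, or $\Xi$, so the only thing that can block a removal is that the atom is the last surviving discriminant of some negative duple in $R^{-}\cup\{\neg\nu:\nu\in\Gamma\cap\overline{S}\}$. Hence every atom of $M$ is forced by a negative duple, which is exactly the defining property of irreducibility; the scope and the positive part of $\gamma_S$ ride along for free.

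For (ii) and (iii) I would produce the model by reduction. Start from $F_S\equiv F_C(R^{+}\cup S)$, which by Theorem~\ref{freestSolutionModel} is a solution model for $S$ satisfying $\Xi$. Remove non-redundant atoms greedily, at each step deleting any atom that is not the unique surviving discriminant of a negative duple of $R^{-}\cup\{\neg\nu:\nu\in\Gamma\cap\overline{S}\}$, and stop when no such atom remains. By the key lemma $R^{+}$, the duples of $S$, and $\Xi$ hold throughout, while the stopping rule retains all the negative duples; the result is therefore a solution model identified with $S$ that is irreducible, which gives (ii) via (i), and Theorem~\ref{minimalModelsTheroem} bounds its size. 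For (iii) I observe that the atoms retained are non-redundant atoms of $F_S$; since the embedding is tight, $F_S\sqsubset F_C(R^{+})$, so these are non-redundant atoms of $F_C(R^{+})$, yielding the desired atomization.

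The main obstacle is making sure the scope sentence is not lost during reduction: if $\Xi$ contained a negated atomic subclause the greedy deletion could turn $\Xi$ false and the reduced model would no longer be a genuine solution, so the positivity assumption cannot be dropped and this step is the crux. A secondary point requiring care is carrying the separator negatives $\{\neg\nu:\nu\in\Gamma\cap\overline{S}\}$ alongside $R^{-}$ throughout the reduction, so that the final irreducible model is identified with the specific solution $S$ rather than with some larger set of separator duples.
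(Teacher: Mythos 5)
Your key monotonicity lemma (positive atomic sentences, and hence conjunctions of disjunctions of them, survive any removal of atoms) is correct and is exactly the engine of the paper's proof. The genuine gap is in which negative constraints you keep in force. You define an ``irreducible model of a solution $S$'' as a minimal model of $R \wedge \gamma_S$, i.e.\ you carry the separator negatives $\{\neg\nu : \nu \in \Gamma\cap\overline{S}\}$ throughout the reduction, and at the end you conclude that every remaining atom ``is forced by a negative duple, which is exactly the defining property of irreducibility.'' That step conflates two different notions: irreducibility for $R$ requires every atom to be the last discriminant of a duple of $R^{-}$, whereas your reduction also halts on atoms whose only job is to discriminate a separator negative $\neg\nu$ with $\nu \in \Gamma\cap\overline{S}$. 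Such an atom can still be removed while modelling $R$, so the model you produce need not be irreducible for $R$. The paper's own trivial example (Section \ref{trivialExample}, first embedding, $R=\emptyset$) defeats the argument: for the solution $S_1$ with $\gamma_{S_1}=(a\leq c)\wedge(b\not\leq c)$, your greedy reduction of $F_1=[\phi_{ac},\phi_b,\phi_c]$ stops at $[\phi_b,\ominus_C]$, since $\phi_b$ is the unique discriminant of $(b,c)$; this model is minimal for $R\wedge\gamma_{S_1}$ but is not an irreducible model of $R=\emptyset$, because dropping $\phi_b$ leaves $[\ominus_C]$, which models $\emptyset$. Worse, under your strict reading of ``identified with $S$'', part (ii) of the theorem is simply false in this example: the only irreducible model of $\emptyset$ is the trivial one, which satisfies $b\leq c$ and therefore cannot satisfy $\gamma_{S_1}$. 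So the approach cannot be patched; the statement you are proving is not the one that is true.

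The paper avoids this precisely by \emph{not} imposing the separator negatives. It sets $R_S = R^{+}\cup R^{-}\cup S$, adding only the positive duples of $S$, and takes ``identified with $S$'' to mean modelling $R_S$: since the clause $\Xi_{k(S)}$ of the disjunctive normal form consists of positive subclauses true in $S$, one has $R^{+}\cup S \vdash \Xi_{k(S)}$, so every model of $R_S$ still satisfies the scope sentence and is a genuine solution model, possibly of a solution whose separator set is larger than $S$. With this choice the only negative duples present are those of $R^{-}$, your monotonicity lemma makes $R^{+}\cup S$ (hence $\Xi$) indestructible under atom removal, and minimality for $R_S$ transfers verbatim to irreducibility for $R$ --- which is the whole content of part (i); parts (ii) and (iii) then follow by your greedy reduction run with $R^{-}$ alone as the stopping rule, starting from any model of $R_S$, and for (iii) from $F_S$ together with tightness. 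In short, the fix is not to be more careful about the separator negatives but to drop them: in this theorem, identification with $S$ means satisfaction of the positive duples of $S$ and nothing more, and your ``secondary point requiring care'' is exactly the step that breaks the proof.
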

\begin{proof}
 A solution model is any model of $R \cup  \{  \Xi \}$. If $\,\Xi = \wedge_i \xi_i$ is a conjunction of disjunctions of positive atomic sentences in a set $\Lambda$, we can express $\Xi$ in disjunctive normal form as a disjunction $\Xi = \vee_k \Xi_k$ of conjunctions $\Xi_k$ of positive atomic sentences in the same set $\Lambda$. The index $k$ usually runs along a much larger set of values than the original index $i$ and most values of $k$ correspond to clauses $\Xi_k$ satisfied by no solution of $P$. However, every solution $S$ must satisfy at least one clause $\Xi_k$. If $S$ satisfies more than one $\Xi_k$ we can do the conjunction of these so, without loss of generality, we can assume there is a single value of $k$, the value $k(S)$, that maps to the solution $S$. Because $\Xi_{k(S)}$ contains only positive atomic sentences $R^{+} \cup S \Rightarrow  \Xi_{k(S)}$. It follows that a model of a solution $S$ is any model of the set of atomic sentences $R_S = R^{+} \cup R^{-} \cup S$.\\
(i) Suppose $M$ is an irreducible model of $R_S$, i.e. it is not possible to remove a non-redundant atom of $M$ and still model $R_S$. It is clear that $M$ is a model of $R = R^{+} \cup R^{-}$. We argue that $M$ is also an irreducible model of $R$. The elimination of one or many atoms from a model preserves the positive atomic sentences entailed by the model, i.e. any elimination of atoms from $M$ still models $R^{+} \cup S$. Therefore, since $M$ is an irreducible model of $R_S$ after the elimination of an atom the model spawned by the remaining atoms should not satisfy at least one negative atomic sentence in $R^{-}$, which means that $M$ is an irreducible model of $R$.  \\ 
(ii) We have assumed $R$ is an embedding set of $P$ so there are models of $R$ for each solution $S$. These solution models are the models of $R_S$. We can find irreducible models for any set of atomic sentences, so $R_S$ should have irreducible models and according to part (i) these models are also irreducible for $R$. Therefore, we can find for each solution $S$ an irreducible model of $R$ that models $S$.  \\ 
(iii) If the embedding is tight the non-redundant atoms of the freest solution model $F_S = F_C(R^{+} \cup S)$ are all non-redundant atoms of $F_C(R^{+})$. It is clear that $F_S  \models R_S$. We start with $F_S$ atomized with non-redundant atoms and select a subset $M$ of the atoms of $F_S$ that suffice to satisfy $R^{-}$ and such that it cannot be made smaller, i.e. an irreducible model of $R_S$. From part (i), $M$ is an irreducible model of $R$ and is atomized by non-redundant atoms of $F_C(R^{+})$. 
\end{proof}

\section{Context Constants} \label{sec:context_constants}

Context constants are useful to build explicit embeddings that can be used in practice.  In lemma \ref{extensionLemma} a different context constant $g_{\sigma}$ is used for each duple $\sigma$. We will see in this section that we can reuse some of our context constants and still obtain an explicit embedding.

Consider the following observation:

\begin{lemma}  \label{extensionLemmaPartB} 
Lemma \ref{extensionLemma} holds if duples $\sigma'$ are defined as $\sigma' \equiv(a \leq b \odot g)$ where the same constant $g$ is used for every $\sigma$ (instead of a different constant $g_{\sigma}$).
\end{lemma}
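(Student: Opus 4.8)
\section*{Proof plan}

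The plan is to avoid re-running the triangular/square-cell diagram from the proof of Lemma \ref{extensionLemma} and instead to establish the three conclusions by proving two ``commutation'' identities and pushing them through an induction on the duples of $R$. Write $\sigma_i \equiv (a_i \leq b_i)$ and $\sigma_i' \equiv (a_i \leq b_i \odot g)$ with the single shared constant $g \notin C$, and let $M^{+} \equiv M \oplus F_{\{g\}}(\emptyset)$ be $M$ together with one fresh atom $\phi_g$ of upper segment $U^{c}(\phi_g) = \{g\}$, so that $M_{R'} \approx \square_{\sigma_n'}\cdots\square_{\sigma_1'} M^{+}$ while $M_R \approx \square_{\sigma_n}\cdots\square_{\sigma_1} M$.

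First I would prove the grounding identity: for any model $N$ over $C \cup \{g\}$,
\[
(\square_{\sigma_i'} N)^{{}^{\vee}C} \approx \square_{\sigma_i} \bigl(N^{{}^{\vee}C}\bigr).
\]
The point is that, because $a_i,b_i$ are terms over $C$ and $g \notin C$, an atom $\phi$ with $U^{c}(\phi) \subseteq C$ satisfies $\phi < b_i \odot g$ if and only if $\phi < b_i$; hence ${\bf{dis}}_{N^{{}^{\vee}C}}(\sigma_i) = \{\phi \in {\bf{dis}}_N(\sigma_i') : U^{c}(\phi) \subseteq C\}$ and ${\bf{L}}^{a}_{N^{{}^{\vee}C}}(b_i) = \{\beta \in {\bf{L}}^{a}_N(b_i \odot g) : U^{c}(\beta) \subseteq C\}$. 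Selecting from the full-crossing formula the atoms with upper segment inside $C$ (an atom $\alpha \bigtriangledown \beta$ survives grounding to $C$ exactly when both $\alpha$ and $\beta$ do, since $U^{c}(\alpha \bigtriangledown \beta) = U^{c}(\alpha) \cup U^{c}(\beta)$) then gives the identity; this mirrors Theorem \ref{groundingAndCrossingCommute}, with the twist that the grounded crossing is by the unprimed $\sigma_i$ rather than $\sigma_i'$. Iterating and using $(M^{+})^{{}^{\vee}C} \approx M$ yields $M_{R'}^{{}^{\vee}C} \approx M_R$. The conclusion $M_R \sqsubset M_{R'}$ is then immediate from Theorem \ref{groundingProperties}(ii), since $M_R$ is realized as a grounding of $M_{R'}$.

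Next I would prove the restriction invariance $(\square_{\sigma_i'} N)^{|C} \approx N^{|C}$, from which $M_{R'}^{|C} \approx (M^{+})^{|C} \approx M$ follows by iteration. By Theorem \ref{restrictionLemma} it suffices to compare spawning sets of restricted atoms. Each new atom $\alpha \bigtriangledown \beta$ restricts to $\alpha^{|C} \bigtriangledown \beta^{|C}$, a union of restrictions of atoms of $N$ and hence redundant with $N^{|C}$; together with the surviving atoms this shows every atom of $(\square_{\sigma_i'} N)^{|C}$ is redundant with $N^{|C}$. For the reverse direction the only atoms of $N$ deleted by the crossing are those of ${\bf{dis}}_N(\sigma_i')$, so I must recover their restrictions. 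This is where the shared constant is controlled: $\phi_g$ is never in any ${\bf{dis}}(\sigma_j')$ (as $\phi_g \not< a_j$ because $a_j$ is over $C$), so it survives every primed crossing and lies in every ${\bf{L}}^{a}(b_i \odot g)$; since $\phi_g^{|C} = \ominus_C$, for each $\alpha \in {\bf{dis}}_N(\sigma_i')$ the new atom $\alpha \bigtriangledown \phi_g$ restricts to $(\alpha \bigtriangledown \phi_g)^{|C} = \alpha^{|C}$, recovering the missing restricted atom. Equality of the spawned models follows, and the induction is valid because $\phi_g$ is preserved at every stage.

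The main obstacle is precisely this reuse of $g$: unlike in Lemma \ref{extensionLemma}, after the first primed crossing the base model already carries atoms containing $g$, so the clean ``no atom of the base contains $g_\sigma$'' argument used there for the triangular cells no longer applies. The commutation formulation isolates the difficulty into a single observation, namely that the survivor atom $\phi_g$ simultaneously lies below every $b_i \odot g$ and vanishes under restriction to $C$; once that is secured, the two identities above deliver all three conclusions of Lemma \ref{extensionLemma} verbatim.
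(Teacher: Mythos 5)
Your proposal is correct, but it takes a genuinely different route from the paper's. The paper proves Lemma \ref{extensionLemmaPartB} by re-running the induction of Lemma \ref{extensionLemma} with the shared constant $g$ and asserting that it goes through ``step by step,'' ending with the structural decomposition $M_{R'} = M_R \cup \{\phi_g \bigtriangledown X\}$ (all extra atoms contain $g$), from which $M_R \sqsubset M_{R'}$, $M_{R'}^{|C} \approx M$ and $M_{R'}^{{}^{\vee}C} \approx M_R$ are read off. You instead isolate two identities and iterate them: a grounding/crossing commutation $(\square_{\sigma'}N)^{{}^{\vee}C} \approx \square_{\sigma}(N^{{}^{\vee}C})$, in the spirit of Theorem \ref{groundingAndCrossingCommute} but with the primed duple on one side and the unprimed duple on the other, and a restriction invariance $(\square_{\sigma'}N)^{|C} \approx N^{|C}$ valid for models containing the survivor atom $\phi_g$. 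Iteration gives $M_{R'}^{{}^{\vee}C} \approx M_R$ and $M_{R'}^{|C} \approx M$, and $M_R \sqsubset M_{R'}$ comes for free from Theorem \ref{groundingProperties}(ii) because $M_R$ is realized as a grounding of $M_{R'}$. What your route buys is that it makes explicit exactly what the paper's ``still holds step by step'' glosses over: with a shared $g$, after the first primed crossing the model being crossed already carries many $g$-atoms, so the verbatim argument of Lemma \ref{extensionLemma} cannot be cited; what saves the induction is that $g$-atoms never lie in the discriminant of a primed duple (they are below $b_i \odot g$) while $\phi_g$ lies in every lower segment ${\bf{L}}^{a}(b_i \odot g)$, and you have packaged precisely these facts into reusable commutation lemmas. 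What the paper's route buys, in exchange, is the sharper atom-level description of $M_{R'}$ (every added atom is a union involving $\phi_g$), which your model-level identities do not directly exhibit.

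One notational slip worth fixing: you write $\phi_g^{|C} = \ominus_C$. Under the paper's conventions the restriction of an atom whose upper constant segment is disjoint from $C$ simply does not exist (compare Theorem \ref{restrictionToQExistsTheorem}), whereas $\ominus_C$ denotes the atom lying below every constant of $C$; if $\phi_g^{|C}$ literally equalled $\ominus_C$, then $(\alpha \bigtriangledown \phi_g)^{|C}$ would equal $\alpha^{|C} \bigtriangledown \ominus_C = \ominus_C$ and your recovery step would fail. The step itself is nevertheless correct by direct computation of upper segments, $U^{c}((\alpha \bigtriangledown \phi_g)^{|C}) = (U^{c}(\alpha) \cup \{g\}) \cap C = U^{c}(\alpha)$, so $(\alpha \bigtriangledown \phi_g)^{|C} = \alpha^{|C}$; just phrase the justification as ``$\phi_g$ has no restriction to $C$'' rather than identifying that restriction with $\ominus_C$.
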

\begin{proof}
The restriction to $C$ of $M_{R'}$ should be equal to $M$ as the duples of the form $\sigma'$ have no consequences in the space spawned by the constants in $C$. With the new definition for $\sigma'$ the proof of lemma \ref{extensionLemma} still holds step by step. We get to a model $M_{R'} = M_{R} \cup \{\phi_{g} \bigtriangledown X\}$ where $X$ is some set of atoms and $\phi_{g}$ is an atom with an upper constant segment equal to $g$. From here it follows immediately that $M_{R} \sqsubset M_{R'}$ and $M_{R} = M^{{}^{\vee}C}_{R'}$. 
\end{proof}
\bigskip

Lemma \ref{extensionLemmaPartB} is simpler than lemma \ref{extensionLemma} as it uses a single ``context constant'' but it is also weaker than lemma \ref{extensionLemma}. In fact, we cannot prove theorem \ref{solutionsAsSubsets} from it.  A stronger version of \ref{extensionLemma} is the following:

\bigskip
\begin{lemma}  \label{extensionLemmaPartC} 
Let M be a semilattice model with constants $C$ and $R = \{\sigma_i : i \in I\}$ a set of positive duples over $C$ indexed by a set $I$ and let $M_R  \equiv F_C(Th^{+}_0(M) \cup R)$. Extend the constants with an additional set of (context) constants $G$ disjoint with $C$, to an extended set $C' = C \cup G$ and for every duple $\sigma_i \equiv (a_i \leq b_i)$ of $R$ define one or many duples $\sigma_{ij}' \equiv(a_i \leq b_i \odot g_{ij})$ where $g_{ij} \in G$. Let $R'$ be the set of such duples $\sigma_{ij}'$ and let $M_{R'} \equiv F_{C'}(Th^{+}_0(M) \cup R')$; then $M_R \sqsubset M_{R'}$ and the restriction and grounding to $C$ satisfy $M_{R'}^{|C} = M$ and $M_{R'}^{{}^{\vee}C} \approx M_R$ respectively.
\end{lemma}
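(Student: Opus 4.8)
The plan is to follow the template of Lemma~\ref{extensionLemma} and Lemma~\ref{extensionLemmaPartB}, but to replace the delicate diagram chase by a single grounding--crossing identity that turns out to be insensitive to how the context constants are shared. Write $G = C' \setminus C$. First I would dispose of the ``one or many'' freedom: since full crossing the same duple twice has no further effect, $M_R = F_C(Th^{+}_0(M) \cup R)$ is unchanged if we list the elements of $R$ with repetitions, so I may treat each primed duple $\sigma'_{ij} \equiv (a_i \leq b_i \odot g_{ij})$ as the weakening of its own copy $\sigma_{ij} := \sigma_i$ of an original duple. After this reduction the only genuine departure from Lemma~\ref{extensionLemma} is that the constants $g_{ij}$ may be reused rather than fresh.

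The heart of the argument is a weakened-duple analogue of Theorem~\ref{groundingAndCrossingCommute}: for any model $N$ over $C'$ and any $g \in G$,
\[
(\square_{(a_i \leq b_i \odot g)} N)^{{}^{\vee}C} = \square_{(a_i \leq b_i)}\, (N^{{}^{\vee}C}).
\]
I would prove this by the same discriminant bookkeeping as in Theorem~\ref{groundingAndCrossingCommute}, using that atoms are join-irreducible: a grounded atom $\phi$ (one with $U^{c}(\phi) \subseteq C$) is never below $g$, so for such $\phi$ the condition $\phi < b_i \odot g$ collapses to $\phi < b_i$. Consequently the $C$-grounded part of ${\bf dis}_N(a_i, b_i \odot g)$ is exactly ${\bf dis}_{N^{{}^{\vee}C}}(a_i, b_i)$, and the $C$-grounded part of ${\bf L}^{a}_N(b_i \odot g)$ is ${\bf L}^{a}_{N^{{}^{\vee}C}}(b_i)$; matching the ``kept'' and ``new'' atoms of the full crossing then gives the identity. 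Crucially this computation only inspects $C$-grounded atoms, whose membership does not depend on which other ($G$-carrying) atoms $N$ contains, so it holds \emph{regardless of any reuse} of $g$. This is precisely the step where the main obstacle, the sharing of context constants, is defused.

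With this in hand I would build $M_{R'}$ by crossing all the $\sigma'_{ij}$ in any order over the model $M^{C'}$ consisting of the atoms of $M$ together with the fresh atoms $\phi_g$ (with $U^{c}(\phi_g) = \{g\}$, one per $g \in G$). Grounding to $C$ and applying the identity at each step yields $(M_{R'})^{{}^{\vee}C} = \square_{\{\sigma_i\}}\big((M^{C'})^{{}^{\vee}C}\big)$; since $(M^{C'})^{{}^{\vee}C} = M$ (grounding discards the single-constant atoms $\phi_g$) and crossing the $\sigma_i$ enforces exactly $R$ up to the harmless repetitions, this equals $F_C(Th^{+}_0(M) \cup R) = M_R$. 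Thus $M_{R'}^{{}^{\vee}C} \approx M_R$, and $M_R \sqsubset M_{R'}$ follows immediately from Theorem~\ref{groundingProperties}(ii), which says $M_{R'}^{{}^{\vee}C} \sqsubset M_{R'}$.

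It remains to show $M_{R'}^{|C} = M$. One direction is free: every duple of $Th^{+}_0(M)$ holds in $M_{R'}$ and hence in $M_{R'}^{|C}$ by Theorem~\ref{restrictionLemma}, so every positive $C$-duple true in $M$ is true in $M_{R'}^{|C}$. For the converse I would show that $R'$ forces no positive $C$-duple outside $Th^{+}_0(M)$ by a witness model: given a $C$-duple $r$ with $M \models r^{-}$, form the model $N$ over $C'$ whose atoms are those of $M$ with each upper constant segment enlarged by exactly those $g \in G$ for which $\phi < a_i$ and $g$ is the context constant of some $\sigma'_{ij}$. Adding only $G$-constants to upper segments leaves the below-relation for $C$-constants unchanged, so $N^{|C} = M \models r^{-}$ and $N \models Th^{+}_0(M)$, while the enlargement drives every atom below $a_i$ under $g_{ij}$, hence under $b_i \odot g_{ij}$, so $N \models R'$. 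As $M_{R'}$ is the freest model of $Th^{+}_0(M) \cup R'$ and $N$ is one such model, $M_{R'} \models r^{-}$, so $M_{R'}^{|C} \models r^{-}$. Hence $M$ and $M_{R'}^{|C}$ agree on the sign of every $C$-duple, i.e. $M_{R'}^{|C} = M$, completing the three conclusions.
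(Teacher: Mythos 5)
Your proposal is correct, but it is organized quite differently from the paper's. The paper proves Lemma~\ref{extensionLemmaPartC} simply by asserting that the proof of Lemma~\ref{extensionLemma} goes through step by step: that proof tracks explicit atomizations through a grid of triangular and square cells, one full crossing at a time, and reads the three conclusions off the resulting atom sets. You instead isolate a single reusable tool: a commutation identity $(\square_{(a_i \leq b_i \odot g)} N)^{{}^{\vee}C} = \square_{(a_i \leq b_i)}(N^{{}^{\vee}C})$, which extends Theorem~\ref{groundingAndCrossingCommute} to duples that are \emph{not} over the grounding set, with the weakened duple upstairs collapsing to the original duple downstairs. Your discriminant bookkeeping for it is sound, since a $C$-grounded atom is never below a $G$-constant, so $\phi < b_i \odot g$ does collapse to $\phi < b_i$. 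Induction along the crossing sequence then gives $M_{R'}^{{}^{\vee}C} = M_R$ at the level of atom sets, and $M_R \sqsubset M_{R'}$ drops out of Theorem~\ref{groundingProperties}(ii); this makes the insensitivity to reuse of context constants — the whole point of this lemma versus Lemma~\ref{extensionLemma} — transparent rather than something to be re-checked cell by cell. Your treatment of $M_{R'}^{|C} = M$ is also different in kind: where the paper computes restricted atomizations explicitly and discards redundant atoms, you argue semantically, building a witness model (atoms of $M$ with upper segments enlarged by the appropriate $g$'s) that models $Th^{+}_0(M) \cup R'$ while preserving every negative $C$-duple of $M$, and then invoke freeness of $M_{R'}$; this witness correctly accommodates shared context constants. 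What the paper's route buys in exchange is an explicit description of the extra atoms of $M_{R'}$ (each carrying some $g$ in its upper segment), which is what Theorem~\ref{solutionsAsSubsets} later consumes; your route does not expose that structure. Two small points you gloss over, at roughly the same level of informality as the paper itself: the $\ominus$ bookkeeping in the grounding steps (the paper's own definition of $\sqsubset$ for grounded models has the $\ominus_K$ escape clause for exactly this reason), and the fact that your witness model may need $\ominus_{C'}$ adjoined so that context constants with no enlarged atom below them still satisfy the axioms of atomized semilattices — adding it changes no duple, so the argument survives.
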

\bigskip

Lemma \ref{extensionLemmaPartC} can be proven in the same way as lemma \ref{extensionLemma}. This lemma says that we can use the same context constant for all the duples or one context constant for each duple or any choice of context constants in between. In addition, a duple $\sigma$ of $R$ can have in $R'$ one or many $\sigma'$ by using one or many constants in $G = C' - C$. 

Theorem \ref{solutionsAsSubsets} implies that we can find the freest solution model of every solution $S$ by calculating the grounding  $N_S = F_{C'}(R^{+}  \cup \Gamma')^{{}^{\vee}C_S}$ to some subset $C_S \subseteq C'$. We proved theorem \ref{solutionsAsSubsets} from lemma \ref{extensionLemma} and we can also prove it from lemma \ref{extensionLemmaPartC} although not every choice of context constants works in this case. We will need a set of context constants sufficiently large to ``separate'' every solution $S$ of the problem $P$ as the grounded model to some subset $C_S$:

\bigskip
\begin{theorem}  \label{contextConstantReuse} 
Let $R$ be an embedding set over $C$ for a problem $P$ with interpretation $(Q, {\bf{\varphi}}, \Gamma)$ and a scope sentence $\Xi$ that is a first order sentence without quantifiers and with atomic subclauses in the separator set $\,\Gamma$. Assume $C$ is extended to $C' = C \cup G$ with an additional set of (context) constants $G$, disjoint with $C$ and such that for each $\sigma_i \equiv (a_i \leq b_i)$ in the separator set $\Gamma$ there is one or many $\sigma_{ij}' \equiv(a_i \leq b_i \odot g_{ij})$ in a set $\Gamma'$ where $g_{ij} \in G$.  \\
The embedding $R  \cup \Gamma'$ over $C'$ is explicit if and only if for each solution $S \subseteq \Gamma$ there is at least one set $G_S \subseteq G$ of context constants such that: \\
 - if $\sigma \in \Gamma$ appears as $\sigma' \in \Gamma'$ using a context constant from $G_S$ then $\sigma \in S$, and \\ 
 - for every $\sigma \in S$ there is at least one $\sigma'  \in \Gamma'$ that uses a context constant in $G_S$.
\end{theorem}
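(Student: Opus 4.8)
The plan is to reduce the whole statement to the extension machinery of Lemma \ref{extensionLemmaPartC} together with the grounding identities, and to match the construction against Definition \ref{explicitEmbedding} by taking the grounding set to be the complement of the dropped context constants, $K_S = C' \setminus G_S$. Throughout I write $F_S^E \equiv F_{C'}((R\cup\Gamma')^+ \cup S)$ for the freest solution model of the extended embedding and split $\Gamma' = \Gamma'_{\text{drop}} \sqcup \Gamma'_{\text{keep}}$ according to whether the context constant of a weak duple lies in $G_S$ or in $G \setminus G_S$. The key translation, already established for distinct context constants inside Theorem \ref{solutionsAsSubsets}, is that grounding that drops a context constant promotes the associated weak duple $\sigma' \equiv (a \le b \odot g)$ to the strong duple $\sigma \equiv (a \le b)$, while keeping $g$ leaves $\sigma$ only weakly enforced.

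For the ($\Leftarrow$) direction I would assume the sets $G_S$ exist and set $K_S = C' \setminus G_S$, so that $Q \subseteq C \subseteq K_S$. The two hypotheses say exactly that the underlying strong duples of $\Gamma'_{\text{drop}}$ are precisely $S$ (the implication ``uses a constant of $G_S \Rightarrow \sigma \in S$'' gives $\subseteq S$, the covering condition gives $\supseteq S$), and that every context constant of an off-duple $\tau \in \Gamma\cap\overline{S}$ lies in $K_S$. Applying Lemma \ref{extensionLemmaPartC} with base constants $K_S$, base model $M = F_{K_S}(R^+ \cup \Gamma'_{\text{keep}})$, and dropped family $\Gamma'_{\text{drop}}$, the grounding $F_{C'}((R\cup\Gamma')^+)^{{}^{\vee}K_S}$ equals $F_{K_S}(R^+ \cup \Gamma'_{\text{keep}} \cup S)$. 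Since the strong duples of $S$ make their own weak copies redundant, this coincides with $(F_S^E)^{|K_S}$; and because the constants $G_S$ occur in no duple of $R^+ \cup \Gamma'_{\text{keep}} \cup S$, Theorem \ref{sumOfModelsWhenNoIntersection} gives $F_S^E = F_{C'}((R\cup\Gamma')^+)^{{}^{\vee}K_S} \oplus F_{C'-K_S}(\emptyset)$, which is the defining equation of an explicit embedding.

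For the ($\Rightarrow$) direction I would first normalize the given grounding set $K_S$: from $(F_S^E)^{|C} = F_S$, any embedding constant outside $K_S$ restricts to a free single-constant atom of $F_S$, hence is unconstrained, and may be adjoined to $K_S$ without altering the decomposition on duples over $Q$; so I assume $Q \subseteq C \subseteq K_S$ and put $G_S = G \setminus K_S$. The same application of Lemma \ref{extensionLemmaPartC} shows the grounded model is $F_{K_S}$ of $R^+$, the kept weak duples, and the strong duples whose context constant was dropped, while (again by Theorem \ref{sumOfModelsWhenNoIntersection}) the free summand over $C' - K_S \subseteq G$ discriminates no duple over $Q$. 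Hence, for $\sigma \in \Gamma$, one has $F_S^E \models \sigma$ iff the grounded model does, which generically happens iff $\sigma$ carries a context constant in $G_S$. Combining this with the separator characterization $\sigma \in S \iff F_S \models \sigma \iff F_S^E \models \sigma$ yields condition~1 (a dropped context constant forces $\sigma$, so $\sigma \in S$) and condition~2 (every $\sigma \in S$ is forced, so it must have a dropped context constant).

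The main obstacle is this converse, and within it two points deserve care. The first is the normalization $C \subseteq K_S$: I must verify that excluding an embedding constant from $K_S$ either leaves $F_S$ unchanged on separator duples or contradicts the explicit equation, so that the choice $G_S = G \setminus K_S$ is legitimate. The second is the degenerate case in which a separator duple $\sigma \in S$ is already entailed by $R^+$ and so needs no dropped context constant to be enforced; such a $\sigma$ imposes no genuine constraint, and condition~2 is recovered by freely adjoining one of its (necessarily safe) context constants to $G_S$, consistent with condition~1. Everything else is the bookkeeping translation between ``grounding drops a context constant'' and ``the corresponding separator duple becomes strongly enforced,'' which is precisely the content borrowed from Lemma \ref{extensionLemmaPartC} and Theorem \ref{solutionsAsSubsets}.
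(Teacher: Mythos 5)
Your proposal is correct and follows essentially the same route as the paper: the paper likewise proves this by rerunning the argument of Theorem~\ref{solutionsAsSubsets} (via Lemma~\ref{extensionLemmaPartC}) with the grounding set $C_S = C' - G_S$, showing that the two bulleted conditions are exactly what is needed for $F_{C'}(R^{+}\cup\Gamma')^{{}^{\vee}C_S}$ to coincide with the freest solution model of $S$. If anything, your handling of the converse is more careful than the paper's own, which silently assumes the witnessing grounding set has the form $C' - G_S$ with $C \subseteq C_S$ and that a duple $\sigma \in S$ can only become satisfied in the grounded model by dropping one of its own context constants, precisely the normalization and degeneracy issues you flag.
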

\begin{proof}
We can follow the steps of the proof of theorem \ref{solutionsAsSubsets}. To build this proof we relied on a diagram that requires the existence of a subset of constants $C_S$. We can relax the requirement of having a different context constant for each atomic sentence $\sigma$ in $\Gamma$ as long as there is still a suitable subset $C_S$ of the extended set of constants $C'$. We claim that $C_S = C - G_S$ is such suitable subset. \\ 
In the proof of theorem \ref{solutionsAsSubsets} we had  $N_S = F_{C'}(R^{+} \cup \Gamma')^{{}^{\vee}C_S}$ where $N_S$ is a freest solution model for $S$. For this equality to hold, a $\sigma' \in \Gamma'$ that uses a context constant in the set $C' - C_S$ (i.e. in $G_S$) should correspond to a $\sigma \in S$, otherwise the grounded model $F_{C'}(R^{+}  \cup \Gamma')^{{}^{\vee}C_S}$ would satisfy a positive duple that is not satisfied by the freest solution model of $S$. We also need for every $\sigma \in S$ at least one $\sigma' \in \Gamma'$ that uses a context constant in $C' - C_S$ otherwise the grounded model $F_{C'}(R^{+}  \cup \Gamma')^{{}^{\vee}C_S}$ does not satisfy $\sigma \in S$ and it cannot be a solution model of $S$. 
\end{proof}
\bigskip
\bigskip

It may seem at first glance that in order to use theorem \ref{contextConstantReuse} we need to know the solutions $S$ in advance, but this is not true. We use theorem \ref{contextConstantReuse} to reduce the number of context constants in the vertical bar and the hamiltonian graph embeddings without knowing the solutions.

\section{Acknowledgments} 

We thank Nabil Abderrahaman Elena, Antonio Ricciardo, Emilio Suarez and David Mendez for critical comments on the
manuscript. We are grateful for the support from Champalimaud Foundation (Lisbon, Portugal) provided through the Algebraic Machine Learning project, from Portuguese national funds through FCT in the context of the project UIDB/04443/2020, and from the European Commission provided through projects H2020 ICT48 \emph{Humane AI; Toward AI Systems That Augment and Empower Humans by Understanding Us, our Society and the World Around Us} (grant $\# 820437$) and the H2020 ICT48 project \emph{ALMA: Human Centric Algebraic Machine Learning} (grant $\# 952091$).

\bigskip
\bigskip

\newpage

\section{Notation}

We use repetition of subscripts (Einstein notation) such as $\vee_i x_i$ to mean $\vee^b_{i=a} x_i$ for some omitted limits $a$ and $b$ when $a$ and $b$ are obvious. Sometimes we just omit one of the limits, for example we write $\vee^b_{i} x_i$.

We use $Th^{+}_0(M)$ to refer to the set of all positive atomic sentences satisfied by $M$ and $Th^{-}_0(M)$ to refer to the set of all positive atomic sentences not satisfied by $M$. We read them as the positive or negative theory of $M$ with 0 quantifiers. 

We say that a model $M$ has an atom $\phi$ if either $\phi$ is a non-redundant atom of $M$ or $\phi$ is redundant with $M$. If a model $M$ has an atom or not is independent of its atomization. We sometimes write $\phi \in M$. If $M$ has an atom $\phi$ then we say that it $\phi$ is in $M$ or is an atom of $M$.

If an atom $\phi$ is not in $M$ we say that it is out of $M$ or that is external to $M$.

With respect to a model an atom can be in three possible situations: it can be external to the model, it can be a non-redundant atom of the model or it can be a redundant atom with the model. 

We say that an atom is in $M$, written $\phi \in M$, if $\phi$ is a non-redundant atom in $M$ or if it is redundant with $M$. An atom is universally defined independent of the model. With respect to a model an atom can be in three situations: it is either a non-redundant atom in the model OR redundant with the model OR out of the model.  

We often use the same letter for a model and its atomization. Since redundant atoms can be discarded from an atomization and the remaining set of atoms still spawns the same model there are many valid atomizations of a model. We use the same letter for a model and for its atomization when the particular atomization chosen is not relevant. Alternatively, we can consider that a set of atoms that has the same name than the model they spawn contains all the atoms of the model.

A model $M$ is a subset model of $N$, written $M \subset N$ if every atom that is in model $M$ is in model $N$.  A model $M$ is a tight subset model of $N$, written $M \sqsubset N$ if the non-redundant atoms of $M$ are a subset of the non-redundant atoms of $N$.  
We use $\square_w M$ to represent the result of full-crossing $w$ on $M$. The crossing of a duple $w$ on a model $M$ produces a strictly less free model. Any atom of $\square_w M$ is either a non-redundant atom of $M$ or redundant with $M$, i.e. is in $M$ and we can write $\square_w M  \subset  M$. 

We say a sentence $\Xi$ has no negative subclauses if once is expressed in disjunctive normal form it only has positive subclauses.

\printbibliography[heading=bibintoc]
\end{document}